\documentclass[11pt,a4paper]{elsarticle}

\usepackage[T1]{fontenc}
\usepackage[utf8]{inputenc}
\usepackage{lmodern}
\usepackage{amsmath,amssymb,amsfonts,amsthm,mathtools}
\usepackage{mathrsfs}
\usepackage{bm}
\usepackage{enumitem}
\usepackage{booktabs}
\usepackage{array}
\usepackage{microtype}
\usepackage[colorlinks=true,linkcolor=blue,citecolor=blue,urlcolor=blue]{hyperref}

\allowdisplaybreaks
\makeatletter
\def\ps@pprintTitle{%
  \let\@oddhead\@empty
  \let\@evenhead\@empty
  \let\@oddfoot\@empty
  \let\@evenfoot\@empty}
\makeatother

\newtheorem{theorem}{Theorem}[section]
\newtheorem{proposition}[theorem]{Proposition}
\newtheorem{lemma}[theorem]{Lemma}

\theoremstyle{definition}
\newtheorem{definition}[theorem]{Definition}
\newtheorem{remark}[theorem]{Remark}

\newcommand{\C}{\mathbb C}
\newcommand{\Z}{\mathbb Z}
\newcommand{\End}{\operatorname{End}}
\newcommand{\Tr}{\operatorname{Tr}}
\newcommand{\Sym}{\operatorname{Sym}}
\newcommand{\F}{\mathcal F}
\newcommand{\Hq}{\mathcal H}
\newcommand{\Id}{\mathbf 1}
\newcommand{\diag}{\operatorname{diag}}
\newcommand{\wt}{\mathbf t}

\newcommand{\ws}{\mathbf s}

\newcommand{\gl}{\mathfrak{gl}}

\newcommand{\Ibar}{\overline I}

\begin{document}

\begin{frontmatter}

\title{
Baxter \texorpdfstring{$Q$}{Q}-operators from a Schwinger-Boson 
construction of the master \texorpdfstring{$T$}{T}-operator and 
the mKP hierarchy
}

\author[addr1]{Zengo Tsuboi}
\address[addr1]{Osaka Central Advanced Mathematical Institute (OCAMI),
 Osaka Metropolitan University,
 3-3-138 Sugimoto, Sumiyoshi-ku Osaka 558-8585, Japan}

\begin{abstract}
The master \(T\)-operator is a generating function for transfer matrices
and a tau-function of the modified KP (mKP) hierarchy. It is
conventionally introduced through a Schur-function expansion whose
coefficients are fused transfer matrices satisfying the
Cherednik--Bazhanov--Reshetikhin determinant formula. 
For rational inhomogeneous \(\mathfrak{gl}(M)\) spin chains, we give an
alternative definition: we realize \(\mathfrak{gl}(M)\) on an auxiliary
Fock space generated by finitely many families of Schwinger bosons,
form a monodromy matrix from the resulting \(L\)-operator, and define
the master \(T\)-operator as its trace over the Fock space. Howe
duality then reproduces the Schur-function expansion.
This realization relates two constructions of Baxter \(Q\)-operators.
In the construction of \cite{AKLTZ2013} (see also \cite{KLT2012}), the
\(Q\)-operators are obtained from residues of the master \(T\)-operator
with respect to selected Miwa variables. In the construction of
\cite{BFLMS2011}, they are defined using degenerate Yangian
\(L\)-operators in oscillator form. When selected Miwa variables approach
the inverses of boundary-twist eigenvalues, the defining trace develops
poles. After a suitable rescaling of this \(L\)-operator, only the terms
contributing to the highest-order pole survive in the normalized limit;
before the trace is taken, they form the degenerate Yangian
\(L\)-operator of the second construction. The trace then gives an
explicit residue formula for the corresponding \(Q\)-operator.
Independently, a Holstein--Primakoff-type large-occupation-number
contraction of this \(L\)-operator on subspaces with fixed total
occupation numbers yields the same degenerate Yangian \(L\)-operator.
\end{abstract}

\begin{keyword}
Baxter $Q$-operator \sep master $T$-operator \sep Schwinger bosons \sep Howe duality \sep rational spin chain \sep mKP hierarchy
\end{keyword}

\end{frontmatter}

\tableofcontents

\section{Introduction}
A rational inhomogeneous \(\gl(M)\) spin chain is built from the
rational \(R\)-matrix in the vector representation and an
\(L\)-operator with spectral parameter \(u \in \mathbb{C}\),
\[
  L(u)=u\,\Id\otimes\Id+
  \sum_{i,j=1}^{M} e_{ij}\otimes E_{ji}.
\]
Here \(E_{ij}\) are the matrix units acting on the one-site quantum space
\(\C^M\), while \(e_{ij}\) are the generators of \(\gl(M)\) in the
representation carried by the auxiliary space.  For an inhomogeneous
chain of length \(L\), one inserts \(L(u-\theta_\ell)\) at the
\(\ell\)-th site, where \(\theta_1,\ldots,\theta_L\in\C\) are the
inhomogeneity parameters.  If \(V_\lambda\) is the irreducible
polynomial \(GL(M)\)-module labelled by a Young diagram \(\lambda\), the
corresponding fused transfer matrix is
\[
  T_\lambda(u)
  =\Tr_{V_\lambda}
  \left[
    L_{0L}^{\lambda}(u-\theta_L)\cdots
    L_{01}^{\lambda}(u-\theta_1)\,\pi_\lambda(g)
  \right].
\]
Here the label \(0\) denotes the auxiliary space \(V_\lambda\), while
\(\ell\) denotes the \(\ell\)-th quantum space.  The operator
\(L_{0\ell}^{\lambda}(u)\) is the above \(L\)-operator with the
\(\gl(M)\) generators acting on \(V_\lambda\), and
\(\pi_\lambda(g)\) is the action of the boundary twist \(g\in GL(M)\) on
\(V_\lambda\).  These transfer
matrices can be assembled into the master \(T\)-operator
\cite{AKLTZ2013} (see \cite{KLT2012} for a preliminary form;
\cite{Zabrodin2013Trig,Alexandrov2014,Zabrodin2014XXX,TsuboiZabrodinZotov2015,
Rozhkovskaya2018,LiShou2020,LiShou2023,Zabrodin2025}
for further developments).  Conventionally, the master \(T\)-operator is
introduced by the Schur function expansion
\begin{align}
  \mathcal T(u,\mathbf t)
  =\sum_{\lambda}
   s_\lambda(\mathbf t)T_\lambda(u),
  \label{eq:intro-master-Schur}
\end{align}
where \(\mathbf t=(t_1,t_2,\ldots)\) are the KP time variables,
\(\lambda\) runs over Young diagrams, \(s_\lambda(\mathbf t)\)
is the Schur function associated with \(\lambda\),
and \(T_\lambda(u)\) is the corresponding fused transfer matrix.
The Cherednik--Bazhanov--Reshetikhin (CBR) determinant formula
\cite{Cherednik1989,BazhanovReshetikhin1990}
(see \cite{KazakovVieira2008} for a proof based on the co-derivative) is
the quantum Jacobi--Trudi form of the Pl\"ucker relations satisfied by
the coefficients \(T_\lambda(u)\) in \eqref{eq:intro-master-Schur}; hence
\(\mathcal T(u,\mathbf t)\) is a tau-function of the modified KP
hierarchy \cite{AKLTZ2013}.

This paper provides an alternative construction of the master \(T\)-operator.
Let \(B=\{1,\ldots,M\}\), and let \(R\) be a finite set whose elements
label the families of Schwinger bosons.  
We call the elements of \(B\) colour labels and those of \(R\) flavour labels. We write the Schwinger bosons as
\(\mathsf a_{i\alpha}\), \(\mathsf a_{i\alpha}^{\dagger}\), with
\(i\in B\) and \(\alpha\in R\).  The left index is a colour label, and
the right index is a flavour label.  Let \(\F_{B,R}\) be the Fock space
generated by these oscillators.  They give the Schwinger boson  realization
\begin{equation}
  e_{ij}^{(R)}=
  \sum_{\alpha\in R}
  \mathsf a_{i\alpha}^{\dagger}\mathsf a_{j\alpha}
  \label{eq:intro-Schwinger-generators}
\end{equation}
of the \(\gl(M)\) generators.  Substituting
\(e_{ij}=e_{ij}^{(R)}\) in the above \(L\)-operator gives 
\begin{equation}
  L^{(R)}(u)
  =u\,\Id \otimes \Id +
  \sum_{i,j\in B}e_{ij}^{(R)}\otimes E_{ji}
  \in \End(\F_{B,R}\otimes\C^M),
  \label{eq:intro-Schwinger-L-explicit}
\end{equation}
which we shall call the Schwinger \(L\)-operator.
For a diagonal twist \(g=\diag(g_1,\ldots,g_M)\) and complex variables
\(\mathbf y_R=(y_\alpha)_{\alpha\in R}\), we define the master
\(T\)-operator as the trace over \(\F_{B,R}\):

\begin{align}
  \mathcal T(u;\mathbf y_R)
  &=\Tr_{\F_{B,R}}
  \left[
  L_{0L}^{(R)}(u-\theta_L)\cdots L_{01}^{(R)}(u-\theta_1)
  \,\widehat g\,\widehat h(\mathbf y_R)
  \right],
  \label{eq:intro-master-trace}
  \\
  \widehat g\,\widehat h(\mathbf y_R)
  &=\prod_{i\in B}\prod_{\alpha\in R}
  (g_i y_\alpha)^{\mathsf N_{i\alpha}}.
  \label{eq:intro-diagonal-factor}
\end{align}
Here \(L_{0\ell}^{(R)}(u-\theta_\ell)\) is the operator
\eqref{eq:intro-Schwinger-L-explicit} acting on the auxiliary space
\(\F_{B,R}\) and the \(\ell\)-th quantum space.  The number operator is
\(\mathsf N_{i\alpha}
 =\mathsf a_{i\alpha}^{\dagger}\mathsf a_{i\alpha}\).
The precise definitions are given in
\eqref{eq:Schwinger-L}--\eqref{eq:finite-master-def}.
The variables \(y_\alpha\) are called Miwa variables and are related
to the KP time variables by the Miwa transformation \cite{Miwa1982}:
\begin{equation}
  t_k(\mathbf y_R)=\frac1k\sum_{\alpha\in R}y_\alpha^k,
  \qquad k\ge1.
  \label{eq:intro-Miwa-times}
\end{equation}
With this specialization, one has
\(\mathcal T(u;\mathbf y_R)=\mathcal T(u,\mathbf t(\mathbf y_R))\); see
\eqref{eq:Miwa-times-paper} and
\eqref{eq:finite-master-is-Miwa-specialization}.  Howe duality for the
two commuting general linear group actions on \(\F_{B,R}\) then yields
\eqref{eq:intro-master-Schur}.  Thus the Schur function expansion
\eqref{eq:intro-master-Schur} follows from
\eqref{eq:intro-master-trace}, rather than being assumed at the outset.

Baxter \(Q\)-operators originate in Baxter's solution of the eight-vertex
model \cite{Baxter1972}.  Two constructions of Baxter \(Q\)-operators
for rational \(\gl(M)\) spin chains are relevant here.  In the first,
the \(Q\)-operators are obtained from the master \(T\)-operator by taking
residues with respect to selected Miwa variables
\cite{AKLTZ2013,KLT2012}.  In the second, they are defined as traces of
products of degenerate Yangian \(L\)-operators written in oscillator form
\cite{BFLMS2011}.  The use of oscillator representations in
constructions of \(Q\)-operators has an important antecedent in the study
of integrability of the conformal field theory \cite{BLZ1999}.  A
different approach based on the factorization of \(R\)-operators for
generic \(\mathfrak{sl}(N)\) spin chains is given in
Ref.~\cite{DerkachovManashov2009}; its relation to the construction
studied here is not considered further.

The products \(g_i y_\alpha\) in \eqref{eq:intro-diagonal-factor} explain
the points at which the residues are taken.  For a single oscillator,
\begin{equation}
  \Tr\,(g_i y_\alpha)^{\mathsf N_{i\alpha}}
  =\sum_{n\ge0}(g_i y_\alpha)^n
  =\frac{1}{1-g_i y_\alpha}.
  \label{eq:intro-one-oscillator-pole}
\end{equation}
Choose a subset \(I\subset B\) and write \(\overline I=B\setminus I\).
For each \(\beta\in\overline I\), choose one flavour label in \(R\) and
denote it by the same symbol \(\beta\).  The first construction studies
the variables attached to these labels in the simultaneous limits
\begin{equation}
  y_\beta\longrightarrow g_\beta^{-1},
  \qquad \beta\in\overline I,
  \label{eq:intro-residue-points}
\end{equation}
while the other variables are kept away from the poles in
\eqref{eq:intro-one-oscillator-pole}.

The relation between the two constructions is established in two ways.
First, the Schwinger \(L\)-operator \eqref{eq:intro-Schwinger-L-explicit}
is rescaled as in \eqref{eq:scaled-local-L}.  In each matrix element,
terms that cannot contribute to the highest-pole coefficient of the
master \(T\)-operator \eqref{eq:intro-master-trace} in the limit
\eqref{eq:intro-residue-points} vanish after normalization.  Before any
trace is evaluated, the remaining terms form the degenerate Yangian
\(L\)-operator \(\mathscr L_{I,\overline I}(u)\), written in oscillator
form \cite{BFLMS2011}; its matrix elements in our conventions are given in
\eqref{eq:contracted-L-ac}--\eqref{eq:contracted-L-components}.
In this sense the Schwinger \(L\)-operator reduces to
\(\mathscr L_{I,\overline I}(u)\).  The direct calculation of the traces
then gives the residue formula in Proposition~\ref{prop:main-contraction}.
The \(Q\)-operator on its right-hand side is defined as the trace over
the Fock space \(\F_{I,\overline I}\):
\begin{align}
  Q_I(u)
  &=\Tr_{\F_{I,\overline I}}
  \left[
  \mathscr L_{I,\overline I,L}(u-\theta_L)\cdots
  \mathscr L_{I,\overline I,1}(u-\theta_1)
  D_{I,\overline I}
  \right],
  \label{eq:intro-Q-trace}
  \\
  D_{I,\overline I}
  &=\prod_{a\in I}\prod_{\beta\in\overline I}
  \left(\frac{g_a}{g_\beta}\right)^{\mathsf N_{a\beta}}.
  \label{eq:intro-Q-diagonal-factor}
\end{align}
The full definitions are \eqref{eq:DAB-derived} and
\eqref{eq:Sch-Q-def}.

Second, restrict the Schwinger realization to subspaces on which the
total occupation number associated with each \(\beta\in\overline I\) is
fixed.  One occupation number is then determined by this constraint.
After the rescaling in \eqref{eq:S-m-def}, sending these fixed occupation
numbers to infinity gives a Holstein--Primakoff-type
large-occupation-number contraction of the Schwinger \(L\)-operator
\(L^{(R)}(u)\), yielding the same \(L\)-operator
\(\mathscr L_{I,\overline I}(u)\) without using the residue calculation
for the trace.

The purpose of this paper is to connect the residue construction of
\(Q\)-operators with the construction based on degenerate Yangian
\(L\)-operators in oscillator form, starting from the trace in
\eqref{eq:intro-master-trace}.  The mKP bilinear identity, in the form that follows from the CBR
determinant formula, yields the \(QQ\)-relation among the \(Q\)-operators.
We write this relation in the conventions of this paper, following the
argument of Refs.~\cite{AKLTZ2013,KLT2012}.

The paper is organized as follows.  Section~\ref{sec:conventions} fixes
the notation.  Section~\ref{sec:master-schwinger} gives the definition
of the master \(T\)-operator and derives its Schur function expansion.
Section~\ref{sec:bilinear-context} recalls the mKP bilinear identity.
Sections~\ref{sec:local-contraction} and
\ref{sec:Q-contraction-theorem} determine the highest-pole coefficient
and prove the residue formula.  Section~\ref{sec:HP-fixed-degree}
derives the same \(L\)-operator from the large-occupation-number
contraction.  The final section states the \(QQ\)-relation in the
conventions used here.  The appendices contain trace formulas, checks
in special cases, the comparison with the degenerate Yangian
\(L\)-operator, and an extension with additional families of Schwinger
bosons.

\section{Basic conventions}
\label{sec:conventions}
Let $M \in \mathbb{Z}_{\ge 1}$. 
The labels of the basis vectors of the quantum spin space \(\C^M\) are called
\emph{colour} labels.  The set of colour labels is
\begin{equation}
  B=\{1,\ldots,M\}.
  \label{eq:B-colour-set}
\end{equation}
A finite set of labels used to distinguish the families of Schwinger bosons is
called a set of \emph{flavour} labels.  The full finite set of flavour labels
is denoted by \(R\).  When a smaller or different finite set of flavours is
needed, we use letters such as \(S\).  The set \(R\) is not ordered unless an ordering is explicitly chosen.  In the
residue construction we choose the flavour labels so that \(B\subset R\).  This
is only an identification of labels; the colour and flavour vector spaces below
remain different.

The quantum space of a length-\(L\) spin chain is
\begin{equation}
  \Hq_L=(\C^M)^{\otimes L}.
\end{equation}
The matrix unit acting on the \(\ell\)-th site is denoted by \(E_{ij}^{(\ell)}\).  Thus
\begin{equation}
  E_{ij}^{(\ell)}
  =\underbrace{\Id\otimes\cdots\otimes\Id}_{\ell-1\ \text{factors}}
  \otimes E_{ij}
  \otimes\underbrace{\Id\otimes\cdots\otimes\Id}_{L-\ell\ \text{factors}}.
\end{equation}
We use the convention
\begin{equation}
  E_{ij}e_k=\delta_{jk}e_i,
\end{equation}
where \(e_1,\ldots,e_M\) is the basis of \(\C^M\) indexed by the colour set
\(B\).  Throughout the paper, \(\Id_X\) denotes the identity operator on a vector space \(X\).  
If the vector space is clear from the context, we write simply \(\Id\).  
If \(A\in\End(\C^M)\), then \([A]_{cd}\) denotes its
\((c,d)\)-matrix element with respect to the basis \(e_i\), \(i\in B\).

When residues of the master \(T\)-operator at the points specified below are considered, we assume that the boundary twist of the transfer matrices is diagonal:
\begin{equation}
  g=\diag(g_1,\ldots,g_M),
  \qquad g_i\in\C^\times,
  \qquad g_i\ne g_j\quad(i\ne j).
  \label{eq:diagonal-twist}
\end{equation}
The non-zero assumption is needed because the residue points are
\(y_i=g_i^{-1}\).  The assumption that the \(g_i\) are distinct keeps
these points separate and rules out additional poles in the trace
calculations below.  
The inhomogeneity parameters are denoted by
\(\theta_1,\ldots,\theta_L \in \mathbb{C}\).

\subsection{Schwinger bosons}

For any finite subset $J\subset B$ we put
\begin{equation}
  V_J:=\bigoplus_{i\in J}\C e_i .
  \label{eq:VJ-definition}
\end{equation}
We write
\begin{equation}
  \gl(V_J):=\End(V_J)
  \label{eq:gl-VJ-def}
\end{equation}
for the general linear Lie algebra acting on \(V_J\).  For example,
\(\gl(V_I)\) means \(\End(V_I)\).
For any finite set $S$ of flavour labels we put
\begin{equation}
  U_S:=\bigoplus_{\alpha\in S}\C f_\alpha .
  \label{eq:US-definition}
\end{equation}
Here $f_\alpha$ denotes the basis vector of $U_S$ labelled by the flavour label $\alpha$.  Thus $e_i$ and $f_\alpha$ are basis vectors in different vector spaces: $e_i\in V_J$ is labelled by a colour, whereas $f_\alpha\in U_S$ is labelled by a flavour.
The symbol
\begin{equation}
  \operatorname{Fock}(V_J\otimes U_S)
  \label{eq:Fock-notation-fixed}
\end{equation}
is used in this paper with the following fixed meaning.  For brevity we also
write
\begin{equation}
  \F_{J,S}:=\operatorname{Fock}(V_J\otimes U_S)
  \label{eq:FJS-notation-fixed}
\end{equation}
for finite sets $J\subset B$ and $S$ of flavour labels.  This is the
lowest-weight Fock representation of the bosonic oscillator algebra generated by
\begin{equation}
  \mathsf a_{i\alpha}^{\dagger},\qquad \mathsf a_{i\alpha},
  \qquad i\in J,
  \quad \alpha\in S,
\end{equation}
with relations
\begin{equation}
  [\mathsf a_{i\alpha},\mathsf a_{j\beta}^{\dagger}]
  =\delta_{ij}\delta_{\alpha\beta},
  \qquad
  [\mathsf a_{i\alpha},\mathsf a_{j\beta}]
  =[\mathsf a_{i\alpha}^{\dagger},\mathsf a_{j\beta}^{\dagger}]=0.
  \label{eq:basic-boson-relations-fixed}
\end{equation}
Note that the left indices of these generators are labelled by elements of the colour set, whereas the right indices are labelled by elements of the flavour set. 
The vacuum vector is part of the definition.  It is denoted by
$|0\rangle_{J,S}$ and is characterized by
\begin{equation}
  \mathsf a_{i\alpha}|0\rangle_{J,S}=0,
  \qquad i\in J,
  \quad \alpha\in S.
  \label{eq:Fock-vacuum-fixed}
\end{equation}
The representation space is the linear span of the occupation vectors
\begin{equation}
  |\mathbf n\rangle_{J,S}
  =
  \prod_{i\in J}\prod_{\alpha\in S}
  \frac{(\mathsf a_{i\alpha}^{\dagger})^{n_{i\alpha}}}{\sqrt{n_{i\alpha}!}}
  |0\rangle_{J,S},
  \qquad n_{i\alpha}\in\Z_{\ge0},
  \label{eq:occupation-basis-fixed}
\end{equation}
where \(\mathbf n=(n_{i\alpha})_{i\in J,\alpha\in S}\).  Any fixed order
of the commuting creation operators may be used.  The dual vector
\(\langle \mathbf n|_{J,S}\) is defined by the dual occupation basis, so that,
for another family \(\mathbf m=(m_{i\alpha})_{i\in J,\alpha\in S}\),
\begin{equation}
  \langle \mathbf n|\mathbf m\rangle_{J,S}=\delta_{\mathbf n,\mathbf m}.
  \label{eq:dual-occupation-basis}
\end{equation}
For a single pair \((i,\alpha)\), the creation and annihilation operators act by
\begin{align}
  \mathsf a_{i\alpha}|\ldots,n_{i\alpha},\ldots\rangle_{J,S}
  &=\sqrt{n_{i\alpha}}\,|\ldots,n_{i\alpha}-1,\ldots\rangle_{J,S},
  \label{eq:annihilation-action-basis}
  \\
  \mathsf a_{i\alpha}^{\dagger}|\ldots,n_{i\alpha},\ldots\rangle_{J,S}
  &=\sqrt{n_{i\alpha}+1}\,|\ldots,n_{i\alpha}+1,\ldots\rangle_{J,S}.
  \label{eq:creation-action-basis}
\end{align}
These formulas follow directly from \eqref{eq:basic-boson-relations-fixed} and the normalization in \eqref{eq:occupation-basis-fixed}.  The
number operator of the oscillator labelled by $(i,\alpha)$ is
\begin{equation}
  \mathsf N_{i\alpha}=\mathsf a_{i\alpha}^{\dagger}\mathsf a_{i\alpha},
  \qquad
  \mathsf N_{i\alpha}|\mathbf n\rangle_{J,S}=n_{i\alpha}|\mathbf n\rangle_{J,S}.
  \label{eq:number-operator-fixed}
\end{equation}
For a flavour label \(\alpha\in S\), the total number of particles
with that flavour in \(\F_{J,S}\) is
\begin{equation}
  \mathsf N_\alpha^{(J,S)}=\sum_{i\in J}\mathsf N_{i\alpha}.
  \label{eq:flavour-number-operator}
\end{equation}
When \(J\) and \(S\) are clear, we write this operator simply as
\(\mathsf N_\alpha\).  In particular, on \(\F_{B,S}\) this means
\(\mathsf N_\alpha=\sum_{c\in B}\mathsf N_{c\alpha}\).
Whenever we write a trace over this Fock representation, it is the trace in the
occupation basis \eqref{eq:occupation-basis-fixed}:
\begin{equation}
  \Tr_{\F_{J,S}} X
  :=
  \sum_{\mathbf n}
  \langle \mathbf n|X|\mathbf n\rangle_{J,S},
  \label{eq:Fock-trace-fixed}
\end{equation}
provided the series converges.  When it does not converge as an analytic
series, we use the same formula algebraically: the diagonal factors are
expanded in the occupation basis, and the resulting coefficients are computed
term by term.  This is what we call the formal trace in the occupation basis.
If \(\rho_{i\alpha}\) are complex parameters, then
\begin{equation}
  \Tr_{\F_{J,S}}
  \prod_{i\in J,\alpha\in S}\rho_{i\alpha}^{\mathsf N_{i\alpha}}
  =
  \prod_{i\in J,\alpha\in S}\frac{1}{1-\rho_{i\alpha}}
  \label{eq:Fock-character-fixed}
\end{equation}
in the domain $|\rho_{i\alpha}|<1$, and elsewhere by meromorphic continuation
or as a formal product. 

For a finite flavour set $R$, this convention gives
\begin{equation}
  \F_{B,R}=\operatorname{Fock}(V_B\otimes U_R).
  \label{eq:FBR-fixed}
\end{equation}
For this space we use the abbreviated notation
\(\mathsf N_\alpha=\sum_{c\in B}\mathsf N_{c\alpha}\), as explained above.
Later, when a subset $I\subset B$ and its complement
$\overline I=B\setminus I$ are fixed, the condition
$\mathsf N_\beta=m_\beta$ for $\beta\in\overline I$ means that the total number
of particles with flavour $\beta$ is fixed to $m_\beta$.
If one wants to allow arbitrarily many flavour labels, each formula below should first be understood for finite \(R\).  
The coefficientwise limit is then taken by fixing a coefficient and increasing \(R\) until that coefficient no longer changes.
The Schwinger realization of $\gl(M)$ is
\begin{equation}
  e_{ij}^{(R)}=\sum_{\alpha\in R}\mathsf a_{i\alpha}^\dagger \mathsf a_{j\alpha}.
  \label{eq:Schwinger-eij-R}
\end{equation}
A direct computation gives
\begin{align}
 [e_{ij}^{(R)},e_{kl}^{(R)}]
 &=\sum_{\alpha,\beta\in R}
 [\mathsf a_{i\alpha}^\dagger \mathsf a_{j\alpha},\mathsf a_{k\beta}^\dagger \mathsf a_{l\beta}]
 \notag\\
 &=\sum_{\alpha\in R}
 \left(\delta_{jk}\mathsf a_{i\alpha}^\dagger \mathsf a_{l\alpha}
      -\delta_{li}\mathsf a_{k\alpha}^\dagger \mathsf a_{j\alpha}\right)
 \notag\\
 &=\delta_{jk}e_{il}^{(R)}-\delta_{li}e_{kj}^{(R)}.
 \label{eq:eij-commutator}
\end{align}
Thus $e_{ij}^{(R)}$ satisfy the defining relations of $\gl(M)$.

\subsection{The \texorpdfstring{$R$}{R}-matrix and the \texorpdfstring{$L$}{L}-operator in the Schwinger representation}

We first fix the meaning of lower indices attached to operators in tensor products.
Let \(W_1,\ldots,W_n\) be vector spaces, and let
\(A=\sum_s A_s'\otimes A_s''\in \End(U)\otimes\End(V)\).
Suppose that the \(a\)-th and \(b\)-th tensor factors of
\(W_1\otimes\cdots\otimes W_n\) are identified with \(U\) and \(V\),
respectively, with \(a<b\).  Then \(A_{ab}\) denotes
\begin{equation}
  A_{ab}
  =\sum_s
  \Id_{W_1}\otimes\cdots\otimes A_s'
  \otimes\cdots\otimes A_s''
  \otimes\cdots\otimes\Id_{W_n}.
  \label{eq:operator-embedding-convention}
\end{equation}
Here \(A_s'\) is placed in the \(a\)-th tensor factor and \(A_s''\) in the
\(b\)-th tensor factor; all omitted tensor factors carry the identity
operator.

The \(R\)-matrix of the rational \(\gl(M)\) spin chain is
\begin{equation}
  R(u)=u\,\Id_{\C^M\otimes\C^M}+P,
  \qquad
  P=\sum_{i,j\in B} E_{ij}\otimes E_{ji}
  \in \End(\C^M\otimes\C^M),
  \label{eq:rational-R-matrix}
\end{equation}
where \(u\in\mathbb C\) is the spectral parameter.  It satisfies the
Yang--Baxter equation in \(\End(\C^M\otimes\C^M\otimes\C^M)\),
\begin{equation}
  R_{12}(u)R_{13}(v)R_{23}(v-u)
  =R_{23}(v-u)R_{13}(v)R_{12}(u),
  \label{eq:R-matrix-YBE}
\end{equation}
where \(u,v\in\mathbb C\).

For a flavour set \(R\), the Schwinger \(L\)-operator is defined by
\begin{equation}
  L^{(R)}(v)
  =v\,\Id_{\F_{B,R}}\otimes\Id_{\C^M}
  +\sum_{i,j\in B} e_{ij}^{(R)}\otimes E_{ji}
  \in \End(\F_{B,R})\otimes\End(\C^M).
  \label{eq:Schwinger-L}
\end{equation}
The first tensor factor is the Fock representation \(\F_{B,R}\), and the
second tensor factor is \(\C^M\) with basis labelled by \(B\).  The
operator acting on the auxiliary space and on the \(\ell\)-th site of the
spin chain is
\begin{equation}
  L_{0\ell}^{(R)}(v)
  =v\,\Id_{\F_{B,R}}\otimes\Id_{\Hq_L}
  +\sum_{i,j\in B} e_{ij}^{(R)}\otimes E_{ji}^{(\ell)}
  \in \End(\F_{B,R}\otimes\Hq_L).
  \label{eq:Schwinger-L-embedded}
\end{equation}
Here the label \(0\) refers to the auxiliary space \(\F_{B,R}\), whereas
\(1,\ldots,L\) refer to the sites of the spin chain.

The \(L\)-operator \eqref{eq:Schwinger-L} satisfies the \(RLL\) relation in
\(\End(\F_{B,R}\otimes\C^M\otimes\C^M)\),
\begin{equation}
  L_{12}^{(R)}(u)L_{13}^{(R)}(v)R_{23}(v-u)
  =R_{23}(v-u)L_{13}^{(R)}(v)L_{12}^{(R)}(u).
  \label{eq:Schwinger-RLL}
\end{equation}
We also use the ordered product (monodromy matrix)
\begin{equation}
  M^{(R)}(u)
  =L_{0L}^{(R)}(u-\theta_L)\cdots L_{01}^{(R)}(u-\theta_1),
  \label{eq:Schwinger-monodromy}
\end{equation}
where \(\theta_1,\ldots,\theta_L\in\mathbb C\) are inhomogeneity
parameters.

\section{The master \texorpdfstring{$T$}{T}-operator in the Schwinger representation}
\label{sec:master-schwinger}

Let \(R\) be a finite flavour set.  For \(\alpha\in R\), let
\(\mathsf N_\alpha\) be the operator defined in
\eqref{eq:flavour-number-operator} with \(J=B\) and \(S=R\).  For a family of
variables \(\mathbf y_R=(y_\alpha)_{\alpha\in R}\), define
\begin{equation}
  \widehat h(\mathbf y_R)
  :=\prod_{\alpha\in R}y_\alpha^{\mathsf N_\alpha}.
  \label{eq:hhat-yR}
\end{equation}
Once the variables have been fixed, we abbreviate \(\widehat h(\mathbf y_R)\) to \(\widehat h\).  For diagonal \(g=\diag(g_1,\ldots,g_M)\), set
\begin{equation}
  \widehat g
  :=\prod_{i\in B}\prod_{\alpha\in R}g_i^{\mathsf N_{i\alpha}}.
  \label{eq:ghat-R}
\end{equation}
For non-diagonal \(g\), \(\widehat g\) denotes the operator on
\(\F_{B,R}\) induced by the action of \(g\) on \(V_B\), under the
identification \(\F_{B,R}=\Sym(V_B\otimes U_R)\) in
\eqref{eq:Fock-as-sym}.  The twist \(g\) is fixed throughout the paper.
In the diagonal case, \(\widehat g\) and \(\widehat h(\mathbf y_R)\) are
the actions of the Cartan tori of \(GL(V_B)\) and \(GL(U_R)\),
respectively.  Their infinitesimal generators lie in the Cartan
subalgebras of \(\gl(V_B)\) and \(\gl(U_R)\), and the two actions commute
on \(\F_{B,R}\).

\begin{definition}[Master \(T\)-operator with finitely many flavour variables]
\label{def:finite-master}
The master \(T\)-operator in the Schwinger representation with flavour set \(R\), spectral parameter \(u\in\mathbb C\), and inhomogeneity parameters \(\theta_1,\dots,\theta_L\in\mathbb C\), is
\begin{equation}
  \mathcal T(u;\mathbf y_R)
  =\Tr_{\F_{B,R}}
  \left[
  L_{0L}^{(R)}(u-\theta_L)\cdots L_{01}^{(R)}(u-\theta_1)
  \widehat g\,\widehat h(\mathbf y_R)
  \right].
  \label{eq:finite-master-def}
\end{equation}
It is an element of \(\End(\Hq_L)\).
\end{definition}

The trace is the trace in the occupation-number basis of the Fock
representation defined in Section~\ref{sec:conventions}.  For diagonal \(g\),
the expansion of the ordered product gives finite sums of products of geometric
series, so it converges in the domain \(|g_i y_\alpha|<1\) for all
\(i\in B\) and \(\alpha\in R\).  Elsewhere it is understood by meromorphic continuation, or coefficientwise as
a formal power series in \(\mathbf y_R\).

\subsection{Schur functions}
\label{subsec:Schur-functions}
We use the standard conventions for Schur functions, the Jacobi--Trudi formula,
and Cauchy's identity as in Ref.~\cite{Macdonald1995}.
We fix the convention for partitions, Schur functions, and polynomial
representations of general linear groups.  A partition is a sequence
\(\lambda=(\lambda_1,\lambda_2,\ldots)\) of non-negative integers such that
\(\lambda_1\ge\lambda_2\ge\cdots\) and \(|\lambda|:=\sum_i\lambda_i<\infty\).
We identify a partition with its Young diagram.  Its length is
\begin{equation}
  \ell(\lambda)=\#\{i\mid \lambda_i>0\}.
  \label{eq:partition-length-def}
\end{equation}
For a finite-dimensional vector space \(X\), let \(V_\lambda^{GL(X)}\) denote
the polynomial irreducible representation of \(GL(X)\) with highest weight
\(\lambda\), if \(\ell(\lambda)\le\dim X\).  If
\(\ell(\lambda)>\dim X\), we set \(V_\lambda^{GL(X)}=0\).  The differential of this group representation is denoted by
\(\pi_\lambda^X\) as well.  Thus, for \(A\in\End(X)\),
\[
  \pi_\lambda^X(A)
  =\left.\frac{d}{dt}\pi_\lambda^X(e^{tA})\right|_{t=0}.
\]
The superscript is displayed when it is useful to specify the vector space.  If the vector space is clear from the
context, the superscript is omitted.  The same conventions are used for
partitions denoted by capital letters, for example \(\Lambda\) in
\ref{app:nontrivial-glI}.

For the KP time variables \(\wt=(t_1,t_2,\ldots)\), define the complete
symmetric functions \(h_n(\wt)\) by
\begin{equation}
  \sum_{n\ge0} h_n(\wt)q^n
  =\exp\left(\sum_{k\ge1}t_k q^k\right),
  \qquad
  h_0(\wt)=1,
  \qquad
  h_n(\wt)=0\quad(n<0).
  \label{eq:complete-functions-times}
\end{equation}
For a partition \(\lambda\), choose any integer \(d\ge \ell(\lambda)\) and set
\begin{equation}
  s_\lambda(\wt)
  =\det_{1\le i,j\le d}
  \bigl(h_{\lambda_i-i+j}(\wt)\bigr).
  \label{eq:Schur-Jacobi-Trudi-times}
\end{equation}
The determinant does not depend on the choice of \(d\), once \(d\ge\ell(\lambda)\).
For a finite family of variables \(\mathbf y_R=(y_\alpha)_{\alpha\in R}\), the
Schur function \(s_\lambda(\mathbf y_R)\) is defined by the same formula with
\begin{equation}
  \sum_{n\ge0} h_n(\mathbf y_R)q^n
  =\prod_{\alpha\in R}\frac{1}{1-y_\alpha q}.
  \label{eq:complete-functions-finite-y}
\end{equation}
Thus \(s_\lambda(\mathbf y_R)=0\) if \(\ell(\lambda)>|R|\).  The same
symbols \(h_n\) and \(s_\lambda\) are used for the KP time variables
\(\wt\) and for the finite variables \(\mathbf y_R\); the argument specifies
which specialization is meant.

\subsection{Zero sites and Cauchy's identity}

When \(L=0\), the monodromy matrix is the empty product.  Hence
\begin{align}
  \left.\mathcal T(u;\mathbf y_R)\right|_{L=0}
  &=\Tr_{\F_{B,R}}(\widehat g\widehat h(\mathbf y_R))
  \notag\\
  &=\prod_{i\in B}\prod_{\alpha\in R}
  \sum_{n=0}^{\infty}(g_i y_\alpha)^n
  \notag\\
  &=\prod_{i\in B}\prod_{\alpha\in R}
  \frac1{1-g_i y_\alpha}.
  \label{eq:zero-Cauchy-kernel}
\end{align}
Cauchy's identity gives
\begin{equation}
  \prod_{i\in B}\prod_{\alpha\in R}
  \frac1{1-g_i y_\alpha}
  =\sum_{\lambda:\,\ell(\lambda)\le \min(M,|R|)}
  s_\lambda(g_1,\ldots,g_M)s_\lambda(\mathbf y_R).
  \label{eq:Cauchy}
\end{equation}
This is the zero-site master \(T\)-operator.

\subsection{Howe decomposition and the Schur function expansion}
\label{subsec:Howe-detailed}

We now derive the Schur function expansion of the master \(T\)-operator from
Howe duality.  In the present paper this expansion is a theorem, not a part of
the definition.  We use Howe duality; we recall the statement and then give the
trace calculation which turns the definition using Schwinger bosons
\eqref{eq:finite-master-def} into the expansion
\eqref{eq:Schur-expansion-main}.

Let
\[
  V=\C^M,\qquad W=U_R.
\]
The Fock representation is identified with the symmetric algebra
\begin{equation}
  \F_{B,R}=\Sym(V\otimes W)
  =\bigoplus_{n\ge0}\Sym^n(V\otimes W).
  \label{eq:Fock-as-sym}
\end{equation}
Here \(\Sym(V\otimes W)\) is the polynomial algebra on \(V\otimes W\).
If \(v_i\) is the basis of \(V\) and \(f_\alpha\), \(\alpha\in R\), is the
basis of \(W\), the vector \(v_i\otimes f_\alpha\) is represented by the
polynomial variable \(x_{i\alpha}\).  In this realization the creation and
annihilation operators act as
\begin{equation}
  \mathsf a_{i\alpha}^\dagger=x_{i\alpha},
  \qquad
  \mathsf a_{i\alpha}= \frac{\partial}{\partial x_{i\alpha}}.
  \label{eq:oscillator-polynomial-realization}
\end{equation}
The group \(GL(V)\times GL(W)\) acts on \(V\otimes W\) by
\begin{equation}
  (g,h)\cdot (v\otimes w)=(gv)\otimes(hw),
  \qquad g\in GL(V),\quad h\in GL(W),
  \label{eq:GLM-GLr-action-on-one-particle}
\end{equation}
and this action extends to \(\Sym(V\otimes W)\).  The two group actions commute.
Let \(E_{ij}^{V}\in\End(V)\) and \(F_{\alpha\beta}^{W}\in\End(W)\) be
the matrix units in the bases \((v_i)\) and \((f_\alpha)\).  In the Schwinger
boson realization, the infinitesimal action of \(E_{ij}^{V}\) is represented by
\begin{equation}
  \sum_{\alpha\in R}x_{i\alpha}\frac{\partial}{\partial x_{j\alpha}}
  =e_{ij}^{(R)}.
  \label{eq:Schwinger-as-differential-Howe}
\end{equation}
The infinitesimal action of \(F_{\alpha\beta}^{W}\) is represented by
\begin{equation}
  \sum_{i\in B}x_{i\alpha}\frac{\partial}{\partial x_{i\beta}},
  \qquad \alpha,\beta\in R.
  \label{eq:flavour-glr-differential-action}
\end{equation}

\begin{theorem}[Howe decomposition]
\label{thm:Howe-decomposition-paper}
As a representation of \(GL(V)\times GL(W)\), this Fock representation
\(\F_{B,R}=\Sym(V\otimes W)\) decomposes as
\begin{equation}
  \F_{B,R}
  \simeq
  \bigoplus_{\lambda:\,\ell(\lambda)\le \min(M,|R|)}
  V_\lambda^{GL(V)}\otimes V_\lambda^{GL(W)}.
  \label{eq:Howe-decomp-paper}
\end{equation}
Each summand appears with multiplicity one.  This is the classical Howe, or \(GL\)--\(GL\) Schur--Weyl, duality
decomposition for this symmetric algebra
\cite{Howe1989,GoodmanWallach1998}.
\end{theorem}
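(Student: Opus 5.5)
The plan is to prove the decomposition degree by degree, using Schur--Weyl duality, and then to confirm it with a character computation. Both \(GL(V)\) and \(GL(W)\) preserve the polynomial degree, and the two actions commute, so it suffices to decompose each \(\Sym^n(V\otimes W)\) as a \(GL(V)\times GL(W)\)-module. First write
\[
  \Sym^n(V\otimes W)=\bigl((V\otimes W)^{\otimes n}\bigr)^{S_n}
  =\bigl(V^{\otimes n}\otimes W^{\otimes n}\bigr)^{S_n},
\]
where \(S_n\) acts diagonally, permuting the \(n\) factors of \(V\) and of \(W\) simultaneously. Next apply Schur--Weyl duality to each factor:
\[
  V^{\otimes n}\simeq\bigoplus_{|\lambda|=n,\ \ell(\lambda)\le M}V_\lambda^{GL(V)}\otimes S^\lambda,
  \qquad
  W^{\otimes n}\simeq\bigoplus_{|\mu|=n,\ \ell(\mu)\le |R|}V_\mu^{GL(W)}\otimes S^\mu,
\]
with \(S^\lambda\) the Specht modules. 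Then the \(S_n\)-invariants are
\[
  \bigoplus_{\lambda,\mu}V_\lambda^{GL(V)}\otimes V_\mu^{GL(W)}\otimes(S^\lambda\otimes S^\mu)^{S_n}.
\]
Since the Specht modules are self-dual (real characters), Schur's lemma gives \(\dim(S^\lambda\otimes S^\mu)^{S_n}=\delta_{\lambda\mu}\). This yields each summand \(V_\lambda\otimes V_\lambda\) exactly once, with \(\ell(\lambda)\le\min(M,|R|)\) because both factors must be nonzero.

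As an independent check, consistent with the paper's use of Cauchy's identity, compare characters. Take the diagonal torus elements \(\widehat g\,\widehat h(\mathbf y_R)\). By \eqref{eq:zero-Cauchy-kernel} their trace on \(\F_{B,R}\) equals \(\prod_{i,\alpha}(1-g_iy_\alpha)^{-1}\), and by \eqref{eq:Cauchy} this equals \(\sum_\lambda s_\lambda(g)s_\lambda(\mathbf y_R)\). The right-hand side of \eqref{eq:Howe-decomp-paper} has exactly this character, since the character of \(V_\lambda^{GL(X)}\) is the Schur polynomial. Characters of polynomial \(GL(V)\times GL(W)\)-modules determine them up to isomorphism: the action is rational and each graded piece is finite-dimensional, so complete reducibility holds. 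Hence the character identity already gives the isomorphism once complete reducibility is noted.

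I expect the main obstacle to be only the bookkeeping. One must check that the diagonal \(S_n\)-invariants of \(V^{\otimes n}\otimes W^{\otimes n}\) really identify with \(\Sym^n(V\otimes W)\) compatibly with the oscillator realization \eqref{eq:oscillator-polynomial-realization}. One must also justify the multiplicity-one statement \((S^\lambda\otimes S^\mu)^{S_n}\cong\mathrm{Hom}_{S_n}(S^\lambda,S^\mu)\). Both are standard, and in the paper I would simply cite \cite{Howe1989,GoodmanWallach1998} after sketching the steps above.
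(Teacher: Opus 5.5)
Your proof is correct. Note that the paper gives no argument of its own here: it states the decomposition as the classical $GL$--$GL$ duality and refers to \cite{Howe1989,GoodmanWallach1998}. Your Schur--Weyl computation, using the diagonal $S_n$-invariants of $V^{\otimes n}\otimes W^{\otimes n}$ and $\dim(S^\lambda\otimes S^\mu)^{S_n}=\delta_{\lambda\mu}$ from self-duality of Specht modules, therefore supplies a self-contained proof of what the paper takes as known. It also yields the $GL(V)\times GL(W)$-equivariant isomorphism that the proof of Theorem~\ref{thm:Schur-expansion-from-Howe} actually uses, namely the restriction of $e_{ij}^{(R)}$ and of $\widehat g\,\widehat h(\mathbf y_R)$ to each summand.

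Your two routes run in opposite logical directions, and this is worth keeping straight.

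\textbf{Schur--Weyl route.} This argument is independent of everything in the paper. It even implies Cauchy's identity \eqref{eq:Cauchy}: take the trace of $\widehat g\,\widehat h(\mathbf y_R)$ on both sides of the isomorphism.

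\textbf{Character route.} This argument takes \eqref{eq:Cauchy} as input. It needs three further facts:
\begin{enumerate}
\item the finite-dimensional rational $GL(V)\times GL(W)$-modules $\Sym^n(V\otimes W)$ are completely reducible;
\item the irreducible polynomial modules are exactly $V_\lambda^{GL(V)}\otimes V_\mu^{GL(W)}$;
\item the products $s_\lambda(g)\,s_\mu(\mathbf y_R)$ are linearly independent.
\end{enumerate}
It must also be applied degree by degree. The homogeneous component of degree $n$ in $\mathbf y_R$ of the Cauchy kernel is the character of $\Sym^n(V\otimes W)$, and on the right-hand side of \eqref{eq:Cauchy} it selects $|\lambda|=n$.

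Either route suffices for the theorem. The character route is shorter within the paper's framework, since it reuses \eqref{eq:zero-Cauchy-kernel} and \eqref{eq:Cauchy}. The Schur--Weyl route is more fundamental, since it does not presuppose the Cauchy identity.
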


\begin{theorem}[Schur function expansion from Howe duality]
\label{thm:Schur-expansion-from-Howe}
The operator defined in \eqref{eq:finite-master-def} satisfies
\begin{equation}
  \mathcal T(u;\mathbf y_R)
  =
  \sum_{\lambda:\,\ell(\lambda)\le \min(M,|R|)}
  s_\lambda(\mathbf y_R)\,T_\lambda(u),
  \label{eq:Schur-expansion-main}
\end{equation}
where
\begin{equation}
  T_\lambda(u)
  =\Tr_{V_\lambda^{GL(V)}}
  \left[
  L_{0L}^{\lambda}(u-\theta_L)\cdots L_{01}^{\lambda}(u-\theta_1)
  \pi_\lambda(g)
  \right],
  \label{eq:fused-transfer-definition}
\end{equation}
and
\begin{equation}
  L_{0\ell}^{\lambda}(v)
  =v\,\Id_{V_\lambda}\otimes\Id_{\Hq_L}
  +\sum_{i,j\in B}\pi_\lambda(E_{ij})\otimes E_{ji}^{(\ell)}.
  \label{eq:L-lambda-definition}
\end{equation}
The trace in \eqref{eq:finite-master-def} is understood either in a domain of
absolute convergence or coefficientwise as a formal power series in
\(\mathbf y_R\).
\end{theorem}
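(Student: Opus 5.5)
The strategy is to transport the trace in \eqref{eq:finite-master-def} through the Howe isomorphism of Theorem~\ref{thm:Howe-decomposition-paper}, and then use the fact that the monodromy matrix acts only through the \(GL(V)\) factor.

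\textbf{Step 1: the monodromy matrix lives in \(GL(V)\).} By \eqref{eq:Schwinger-as-differential-Howe}, each \(e_{ij}^{(R)}\) is the infinitesimal action of \(E^V_{ij}\) on \(\Sym(V\otimes W)\). This action preserves every isotypic summand \(V_\lambda^{GL(V)}\otimes V_\lambda^{GL(W)}\) in \eqref{eq:Howe-decomp-paper}. On that summand it acts as \(\pi_\lambda(E_{ij})\otimes\Id_{V_\lambda^{GL(W)}}\). Hence each \(L^{(R)}_{0\ell}(v)\) restricts to \(L^\lambda_{0\ell}(v)\otimes\Id\), with \(L^\lambda_{0\ell}\) as in \eqref{eq:L-lambda-definition}, and the ordered product \(M^{(R)}(u)\) restricts to the corresponding product of the \(L^\lambda\)'s tensored with the identity on the \(GL(W)\) factor.

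\textbf{Step 2: the diagonal factors.} The operator \(\widehat g\) in \eqref{eq:ghat-R} is the action of \(g\in GL(V)\) on the symmetric algebra, so on each summand it equals \(\pi_\lambda(g)\otimes\Id\). For non-diagonal \(g\) this holds by definition. Likewise \(\widehat h(\mathbf y_R)\) is the action of \(\diag(y_\alpha)\in GL(W)\), since \(\mathsf N_\alpha\) is the infinitesimal generator for \(F^W_{\alpha\alpha}\) in \eqref{eq:flavour-glr-differential-action}. So it equals \(\Id\otimes\pi^W_\lambda(\diag(\mathbf y_R))\).

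\textbf{Step 3: factor the trace.} The trace over \(\F_{B,R}\) in the occupation basis equals the trace in any other basis compatible with the grading by total particle number \(N=\sum_\alpha\mathsf N_\alpha\). Each graded piece \(\Sym^n(V\otimes W)\) is finite-dimensional, and all operators involved preserve \(N\); the factor \(\widehat h\) preserves each \(\mathsf N_\alpha\). So, coefficientwise in the monomials of \(\mathbf y_R\), the trace is a finite-dimensional trace and is basis independent. Applying the isomorphism \eqref{eq:Howe-decomp-paper} degree by degree (the summand \(\lambda\) sits in degree \(|\lambda|\)) and using \(\Tr(A\otimes B)=\Tr A\,\Tr B\) gives
\[
\mathcal T(u;\mathbf y_R)=\sum_\lambda T_\lambda(u)\,\Tr_{V^{GL(W)}_\lambda}\pi^W_\lambda(\diag(\mathbf y_R)).
\]
The second factor is the character of the polynomial irreducible representation of \(GL(W)\). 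This character is \(s_\lambda(\mathbf y_R)\), as defined through \eqref{eq:Schur-Jacobi-Trudi-times} and \eqref{eq:complete-functions-finite-y}; this is the Jacobi--Trudi form of the Weyl character formula \cite{Macdonald1995}. The sum runs over \(\ell(\lambda)\le\min(M,|R|)\), as in \eqref{eq:Howe-decomp-paper}. Since each coefficient involves only finitely many \(\lambda\) (those with \(|\lambda|=n\)), the identity holds as formal power series. It then holds analytically in the convergence domain, because both sides converge there.

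\textbf{Main obstacle.} The delicate point is Step 1. The isomorphism in Theorem~\ref{thm:Howe-decomposition-paper} is abstract, so we must justify that it intertwines the Lie algebra action \(e^{(R)}_{ij}\) with \(\pi_\lambda(E_{ij})\otimes\Id\). This follows because it is a \(GL(V)\times GL(W)\)-isomorphism and the differential of the group action on polynomials is exactly \eqref{eq:Schwinger-as-differential-Howe}. Checking that last identification means differentiating \(f(x)\mapsto f(g^{-1}x)\) versus the transposed convention. This sign or transpose convention is what must be fixed consistently with \(\pi_\lambda\). I would verify it directly on degree one, \(V\otimes W\), where \(e_{ij}^{(R)}x_{k\alpha}=\delta_{jk}x_{i\alpha}\) matches \(E_{ij}v_k=\delta_{jk}v_i\), and then extend multiplicatively. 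A consistency check is the case \(L=0\), where the result reduces to Cauchy's identity \eqref{eq:Cauchy} via \(\Tr\pi_\lambda(g)=s_\lambda(g_1,\dots,g_M)\).
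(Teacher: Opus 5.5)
Your proposal is correct and follows essentially the same route as the paper: restrict \(e_{ij}^{(R)}\), hence the monodromy matrix, to \(\pi_\lambda(E_{ij})\otimes\Id\) on each Howe summand, identify \(\widehat g\,\widehat h(\mathbf y_R)\) with \(\pi_\lambda(g)\otimes\pi^W_\lambda(h)\), factor the trace, and use \(\Tr_{V_\lambda^{GL(W)}}\pi^W_\lambda(h)=s_\lambda(\mathbf y_R)\). Two points you spell out are left implicit in the paper: the degree-by-degree argument that the occupation-basis trace may be computed in the Howe-adapted basis, and the degree-one check of the action convention.
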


\begin{proof}
By Theorem~\ref{thm:Howe-decomposition-paper}, the Fock representation is the
direct sum \eqref{eq:Howe-decomp-paper}.  We examine the operator inside the
trace \eqref{eq:finite-master-def} on one fixed summand
\[
  V_\lambda^{GL(V)}\otimes V_\lambda^{GL(W)}.
\]

By \eqref{eq:Schwinger-as-differential-Howe}, the operator \(e_{ij}^{(R)}\) represents the infinitesimal action of \(E_{ij}\in\gl(M)\).  Therefore
on this component it acts as
\begin{equation}
  e_{ij}^{(R)}
  \big|_{V_\lambda^{GL(V)}\otimes V_\lambda^{GL(W)}}
  =
  \pi_\lambda(E_{ij})\otimes \Id_{V_\lambda^{GL(W)}}.
  \label{eq:eij-action-on-Howe-summand}
\end{equation}
We shall also use the embedding convention \eqref{eq:operator-embedding-convention}
in the tensor product
\[
  V_\lambda^{GL(V)}\otimes V_\lambda^{GL(W)}\otimes\Hq_L.
\]
For an operator \(X\in
\End(V_\lambda^{GL(V)}\otimes\Hq_L)\), let \(\iota_{13}(X)\) denote
the operator on this tensor product which acts as \(X\) on the first and
third tensor factors and as the identity on the middle tensor factor.
The subscript \(13\) in \(\iota_{13}\) refers only to this threefold
tensor product and is separate from the lower indices \(0\ell\) in
\(L_{0\ell}^{\lambda}(v)\).
Substituting \eqref{eq:eij-action-on-Howe-summand} into the Schwinger
\(L\)-operator \eqref{eq:Schwinger-L} gives
\begin{equation}
L_{0\ell}^{(R)}(v)
\big|_{V_\lambda^{GL(V)}\otimes V_\lambda^{GL(W)}\otimes\Hq_L}
=
\iota_{13}\left(L_{0\ell}^{\lambda}(v)\right).
\label{eq:L-restriction-Howe-summand}
\end{equation}
Equivalently, the \(GL(W)\)-representation in this summand of the Howe decomposition is acted on
trivially.
Multiplying the factors \(L_{0\ell}^{\lambda}(u-\theta_\ell)\) in the order
\(\ell=L,L-1,\ldots,1\), we obtain
\begin{equation}
M^{(R)}(u)
\big|_{V_\lambda^{GL(V)}\otimes V_\lambda^{GL(W)}\otimes\Hq_L}
=
\iota_{13}\left(
 L_{0L}^{\lambda}(u-\theta_L)\cdots L_{01}^{\lambda}(u-\theta_1)
\right).
\label{eq:monodromy-restriction-Howe-summand}
\end{equation}

Second, the operator induced by \(g\in GL(V)\) on \(\Sym(V\otimes W)\) acts as \(\pi_\lambda(g)\) on the
\(GL(V)\)-representation in the summand of the Howe decomposition, while
\(h\in GL(W)\), with eigenvalues \((y_\alpha)_{\alpha\in R}\), acts as
\(\pi_\lambda^W(h)\) on \(V_\lambda^{GL(W)}\).  Hence
\begin{equation}
  \widehat g\,\widehat h(\mathbf y_R)
  \big|_{V_\lambda^{GL(V)}\otimes V_\lambda^{GL(W)}}
  =
  \pi_\lambda(g)\otimes \pi_\lambda^{W}(h).
  \label{eq:twist-restriction-Howe-summand}
\end{equation}
Put
\[
X_\lambda(u)=L_{0L}^{\lambda}(u-\theta_L)\cdots L_{01}^{\lambda}(u-\theta_1).
\]
Combining \eqref{eq:monodromy-restriction-Howe-summand} and
\eqref{eq:twist-restriction-Howe-summand}, the contribution of this summand
to \eqref{eq:finite-master-def} is
\begin{align}
&\Tr_{V_\lambda^{GL(V)}\otimes V_\lambda^{GL(W)}}
\left[
\iota_{13}\!\left(X_\lambda(u)\right)
\left(\pi_\lambda(g)\otimes\pi_\lambda^{W}(h)\otimes\Id_{\Hq_L}\right)
\right]
\notag\\
&=
\Tr_{V_\lambda^{GL(V)}}
\left[
X_\lambda(u)\pi_\lambda(g)
\right]
\Tr_{V_\lambda^{GL(W)}}
\left[
\pi_\lambda^{W}(h)
\right]
\notag\\
&=T_\lambda(u)\,s_\lambda(\mathbf y_R).
\label{eq:trace-factor-Howe}
\end{align}
The last equality uses the definition \eqref{eq:fused-transfer-definition} of
\(T_\lambda(u)\) and the character identity
\begin{equation}
  \Tr_{V_\lambda^{GL(W)}}\pi_\lambda^{W}(h)=s_\lambda(\mathbf y_R).
  \label{eq:GLr-character-is-Schur}
\end{equation}
Finally, the direct-sum decomposition \eqref{eq:Howe-decomp-paper} allows the
trace over \(\F_{B,R}\) to be written as the sum of the traces over the
summands in \eqref{eq:Howe-decomp-paper}.  Summing \eqref{eq:trace-factor-Howe} over \(\lambda\) gives
\eqref{eq:Schur-expansion-main}.
\end{proof}

With the convention \(V_\lambda^{GL(V)}=0\) for \(\ell(\lambda)>M\), one may instead sum over \(\ell(\lambda)\le |R|\).
Thus the Schur function expansion follows from Howe duality and from the
factorization of the trace; it is not an independent definition of the master
\(T\)-operator in this paper.

\subsection{KP time variables and the Miwa transformation}
\label{subsec:KP-times-Miwa}

The master \(T\)-operator as a function of the KP time variables is the formal series
\begin{equation}
  \mathcal T(u,\wt)
  =\sum_{\lambda:\,\ell(\lambda)\le M}
  s_\lambda(\wt)T_\lambda(u).
  \label{eq:full-master-paper}
\end{equation}
For a finite flavour set \(R\), define the Miwa transformation \(\wt(\mathbf y_R)\)
by
\begin{equation}
  t_k(\mathbf y_R)=\frac1k\sum_{\alpha\in R}y_\alpha^k,
  \qquad k\ge1.
  \label{eq:Miwa-times-paper}
\end{equation}
Then
\begin{equation}
  \exp\left(\sum_{k\ge1}t_k(\mathbf y_R)q^k\right)
  =\exp\left(\sum_{\alpha\in R}\sum_{k\ge1}\frac{(y_\alpha q)^k}{k}\right)
  =\prod_{\alpha\in R}\frac{1}{1-y_\alpha q}.
  \label{eq:Miwa-complete-functions}
\end{equation}
Comparing \eqref{eq:Miwa-complete-functions} with
\eqref{eq:complete-functions-times} and \eqref{eq:complete-functions-finite-y}
shows that
\begin{equation}
  h_n(\wt(\mathbf y_R))=h_n(\mathbf y_R),
  \qquad
  s_\lambda(\wt(\mathbf y_R))=s_\lambda(\mathbf y_R).
  \label{eq:Schur-Miwa-specialization}
\end{equation}
Therefore the finite-flavour trace is related to the master \(T\)-operator
as a function of the KP time variables by the identity
\begin{align}
  \mathcal T(u,\wt(\mathbf y_R))
  &=\sum_{\lambda:\,\ell(\lambda)\le M}
  s_\lambda(\wt(\mathbf y_R))T_\lambda(u)
  \notag\\
  &=\sum_{\lambda:\,\ell(\lambda)\le \min(M,|R|)}
  s_\lambda(\mathbf y_R)T_\lambda(u)
  \notag\\
  &=\mathcal T(u;\mathbf y_R).
  \label{eq:finite-master-is-Miwa-specialization}
\end{align}
The middle equality uses \(s_\lambda(\mathbf y_R)=0\) for
\(\ell(\lambda)>|R|\), and the last equality is
Theorem~\ref{thm:Schur-expansion-from-Howe}.

\subsection{A possible extension with a second monodromy matrix}
We briefly mention a possible extension of the master \(T\)-operator
\eqref{eq:finite-master-def}.  Besides the monodromy matrix \(M^{(R)}(u)\)
in \eqref{eq:Schwinger-monodromy}, the same Fock representation carries a
commuting action of \(GL(U_R)\).  This second action is generated by
\begin{align}
  f^{(B)}_{\alpha\beta}
  =\sum_{i\in B}
  \mathsf a_{i\alpha}^{\dagger}\mathsf a_{i\beta},
  \qquad \alpha,\beta\in R,
\end{align}
and it commutes with the \(GL(V_B)\)-action generated by the operators
\(e^{(R)}_{ij}\).  Let \(\mathcal K_{L'}=U_R^{\otimes L'}\), and let
\(\theta'_1,\ldots,\theta'_{L'}\in\mathbb C\) be another set of
inhomogeneity parameters.  The \(GL(U_R)\) action gives the rational
\(L\)-operator with the spectral parameter $w \in \mathbb{C}$:
\begin{align}
  L^{(B)}(w)
  =w\,\Id_{\mathcal F_{B,R}}\otimes\Id_{U_R}
  +\sum_{\alpha,\beta\in R}
   f^{(B)}_{\alpha\beta}\otimes E^R_{\beta\alpha}
  \in \End(\mathcal F_{B,R})\otimes\End(U_R),
\end{align}
where \(E^R_{\alpha\beta}\) are the matrix units on \(U_R\).  We write
\(L^{(B)}_{0j'}(w)\) for the embedding which acts on \(\mathcal F_{B,R}\)
and on the \(j'\)-th copy of \(U_R\) in \(\mathcal K_{L'}\).  Define the
second monodromy matrix by
\begin{align}
  M_R(w)=
  L^{(B)}_{0L'}(w-\theta'_{L'})\cdots
  L^{(B)}_{01'}(w-\theta'_1),
  \qquad  M_B(u)=M^{(R)}(u).
\end{align}
This suggests considering the trace
\begin{align}
  \mathcal T_{L,L'}(u,w;g,\mathbf y_R)
  =
  \Tr_{\mathcal F_{B,R}}
  \left[
    M_B(u)M_R(w)\,\widehat g\,\widehat h(\mathbf y_R)
  \right]
  \in \End(\mathcal H_L\otimes\mathcal K_{L'}).
\end{align}
By the Howe decomposition \eqref{eq:Howe-decomp-paper}, the trace
factorizes over each summand and gives
\begin{align}
  \mathcal T_{L,L'}(u,w;g,\mathbf y_R)
  =
  \sum_{\lambda:\,\ell(\lambda)\le \min(M,|R|)}
  T^B_\lambda(u;g)\otimes T^R_\lambda(w;\mathbf y_R).
\end{align}
Here \(T^B_\lambda(u;g)\) is the fused transfer matrix in
\eqref{eq:fused-transfer-definition} for the \(GL(V_B)\) spin chain on
\(\mathcal H_L\); its auxiliary space carries the representation
\(V^{GL(V_B)}_\lambda\), and the twist is \(g\).  Similarly,
\(T^R_\lambda(w;\mathbf y_R)\) denotes the fused transfer matrix for the
\(GL(U_R)\) spin chain on \(\mathcal K_{L'}\); its auxiliary space carries
\(V^{GL(U_R)}_\lambda\), and the twist is the diagonal element of
\(GL(U_R)\) with eigenvalues \((y_\alpha)_{\alpha\in R}\).  Thus the Schur
function \(s_\lambda(\mathbf y_R)\) in \eqref{eq:Schur-expansion-main} is
replaced by a second fused transfer matrix.  

There are several elementary reductions.  If \(L'=0\), then
\(T^R_\lambda(w;\mathbf y_R)=s_\lambda(\mathbf y_R)\), and
\eqref{eq:finite-master-def} is recovered.  The same reduction is obtained
as \(w\to\infty\) after normalizing the second monodromy matrix so that it
tends to the identity.
If \(L=0\), the first monodromy matrix is absent and
\(T^B_\lambda(u;g)=s_\lambda(g_1,\ldots,g_M)\), giving the dual situation
in which the nontrivial spin chain is the \(GL(U_R)\) chain.  If both
\(L=0\) and \(L'=0\), the trace reduces to the Cauchy kernel
\eqref{eq:zero-Cauchy-kernel}--\eqref{eq:Cauchy}.

When the two quantum spaces \(\mathcal H_L\) and \(\mathcal K_{L'}\) are
kept separate, the Yang--Baxter relations for the two spin chains imply
commutativity of the family \(\mathcal T_{L,L'}(u,w;g,\mathbf y_R)\).  In
the expansion above, this is simply the commutativity of the fused transfer
matrices in each tensor component.  If the two monodromy matrices are
required to act on the same quantum space, the construction is more
restrictive.  At a minimum one has to identify \(U_R\simeq V_B\), so in
particular \(|R|=M\).  Even in that case, an argument based on 
\(RLL\)-relation applies only to one-parameter specializations, for example \(w=u+\delta\)
with \(\delta\in\mathbb C\), together with compatible choices such as
\(j'=j\) and \(\theta'_j=\theta_j\).  

\section{mKP bilinear identity and the role of the CBR formula}
\label{sec:bilinear-context}

The tau-function property of \eqref{eq:full-master-paper} is known from fusion and from the CBR (Cherednik--Bazhanov--Reshetikhin), or quantum Jacobi--Trudi, determinant formulae for transfer matrices. We recall the statement in the present convention because it fixes notation; see \cite{Cherednik1989,BazhanovReshetikhin1990,
KazakovVieira2008,AKLTZ2013}.

Let
\begin{equation}
  T_\varnothing(u)=\prod_{\ell=1}^L(u-\theta_\ell)\,\Id_{\Hq_L}
\end{equation}
and define normalized transfer matrices
\begin{equation}
  \widehat T_\lambda(u)=(T_\varnothing(u))^{-1}T_\lambda(u).
\end{equation}
Set
\begin{equation}
  H_s(u)=\widehat T_{(s)}(u),
  \qquad H_0(u)=1,
  \qquad H_s(u)=0\quad(s<0).
\end{equation}
The CBR/Jacobi--Trudi formula in one common normalization reads
\begin{equation}
  \widehat T_\lambda(u)
  =\det_{1\le a,b\le d}
  \left(H_{\lambda_a-a+b}(u-b+1)\right),
  \label{eq:CBR-current-convention}
\end{equation}
for any $d\ge \ell(\lambda)$.  Different authors shift $u$ or transpose Young diagrams; these changes do not affect the interpretation of the normalized transfer matrices as Pl{\"u}cker coordinates.

The following identity is the mKP bilinear identity
\cite{DateJimboKashiwaraMiwa1983,JimboMiwa1983}, in the form used for
master $T$-operators in \cite{AKLTZ2013}.

\begin{theorem}[Master $T$-operator as an mKP tau-function \cite{AKLTZ2013}]
\label{thm:CBR-mKP-route}
For the rational inhomogeneous $\gl(M)$ spin chain with diagonal twist, the master $T$-operator \eqref{eq:full-master-paper} satisfies the mKP bilinear identity
\begin{align}
  \oint_C
  & z^{u-u'}e^{\xi(\wt-\wt',z)}
  \mathcal T(u,\wt-[z^{-1}])
  \mathcal T(u',\wt'+[z^{-1}])
  \,dz=0.
  \label{eq:mKP-bilinear-paper}
\end{align}
Here
\begin{equation}
  \xi(\wt,z)=\sum_{k\ge1}t_kz^k,
  \qquad
  [z^{-1}]=\left(z^{-1},\frac{z^{-2}}2,\frac{z^{-3}}3,\ldots\right).
\end{equation}
\end{theorem}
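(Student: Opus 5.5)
Since the transfer matrices \(T_\lambda(u)\) commute among themselves for all \(\lambda\) and \(u\) (by the \(RLL\) relation \eqref{eq:Schwinger-RLL} and the diagonal twist), I will work in a common eigenbasis, or more precisely in the commutative algebra they generate, and treat \(\mathcal T(u,\wt)\) as a scalar-valued function. The argument is the classical one. A formal series \(\tau(n,\wt)=\sum_\lambda c_\lambda(n)s_\lambda(\wt)\) is an mKP tau-function exactly when the coefficients \(c_\lambda(n)\), as functions of the discrete variable \(n\), are Plücker coordinates of a compatible family of points of the Sato Grassmannian. Here \(n\) is replaced by the spectral parameter \(u\) with shifts by integers. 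I take \(\tau(u,\wt)=\mathcal T(u,\wt)\) and, up to the scalar factor \(T_\varnothing(u)\), \(c_\lambda(u)=\widehat T_\lambda(u)\).

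\textbf{Step 1 (Jacobi--Trudi implies Plücker).} By the CBR formula \eqref{eq:CBR-current-convention}, \(\widehat T_\lambda(u)\) is the determinant \(\det(H_{\lambda_a-a+b}(u-b+1))\). Writing the \(b\)-th column as the coefficient vector of the generating series \(\sum_s H_s(u-b+1)z^{s}\), this is the maximal minor of a fixed semi-infinite frame, selected by the Maya diagram of \(\lambda\). Coefficients of this form automatically satisfy the Plücker relations, and consequently \(\sum_\lambda \widehat T_\lambda(u)s_\lambda(\wt)\) is a KP tau-function for each fixed \(u\). I will make this concrete by introducing a fermionic Fock space and writing
\[
\tau(u,\wt)=\langle u|e^{H(\wt)}G|u\rangle
\]
with \(G=\exp(\sum\psi^*_i A_{ij}\psi_j)\) built from the \(H_s\). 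Wick's theorem then gives the determinant, and the bilinear identity follows from \(\sum_k\psi_k\otimes\psi^*_k\) commuting with \(G\otimes G\).

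\textbf{Step 2 (the discrete direction \(u\)).} The shift \(u\mapsto u-b+1\) in the columns plays the role of the charge \(n\): moving from \(u\) to \(u+1\) corresponds to shifting the Dirac sea by one. I will check that the frames for \(u\) and for \(u'=u+k\) are obtained from a single \(GL(\infty)\) element \(G\), acting on the vacua of different charges. This is exactly what makes the factor \(z^{u-u'}\) appear in \eqref{eq:mKP-bilinear-paper}. The standard vertex-operator computation \(\psi(z)=z^{n}e^{\xi(\wt,z)}e^{-\xi(\tilde\partial,z^{-1})}\) then converts \(\sum_k\psi_k|\,\cdot\,\rangle\otimes\psi^*_k|\,\cdot\,\rangle\) into the contour integral. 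Finally, the prefactor \(T_\varnothing(u)\) is a \(\wt\)-independent scalar. It therefore multiplies the identity by \(T_\varnothing(u)T_\varnothing(u')\), which does not affect the vanishing.

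\textbf{Main obstacle.} The delicate point is Step 2: I must ensure the \(u\)-shifts in \eqref{eq:CBR-current-convention} are consistent with a single \(G\) for all integer differences \(u-u'\), with the conventions for \(z^{u-u'}\) and for \([z^{-1}]\) matching. For this I would realize \(H_s(u)\) as the expansion coefficients of the product of quantum factors \(\prod_\ell(1-\partial_u$-type shifts\()\) used in \cite{AKLTZ2013}, that is, as the generating function \(\sum_s H_s(u)e^{s\partial_u}\). I would then verify the compatibility directly on the \(1\times1\) and \(2\times 2\) minors. Non-integer \(u-u'\) would be handled by restricting to \(u-u'\in\Z\), which is the sense in which \eqref{eq:mKP-bilinear-paper} is meant.
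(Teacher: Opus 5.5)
The paper gives no proof of this theorem. It cites \cite{AKLTZ2013} and only remarks that the CBR formula identifies \(\widehat T_\lambda(u)\) with Pl\"ucker coordinates. Your outline (CBR \(\Rightarrow\) minors of a frame \(\Rightarrow\) a fermionic \(G\) \(\Rightarrow\) \(\Omega=\sum_k\psi_k\otimes\psi^*_k\) commutes with \(G\otimes G\)) is therefore the intended route.

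\textbf{The range you aim for is wrong.} \(\Omega\) annihilates \(|n\rangle\otimes|n'\rangle\) only when \(n\ge n'\). Your argument therefore gives the identity, read as the coefficient of \(z^{-1}\) in the expansion at \(z=\infty\), only for \(u-u'\in\Z_{\ge0}\). For negative integers it is false.
\begin{itemize}
\item Take \(L=0\), where \(\mathcal T(u,\wt)=\prod_i e^{\xi(\wt,g_i)}\) does not depend on \(u\). The factors \(\prod_i(1-g_i/z)^{\pm1}\) produced by the shifts \(\mp[z^{-1}]\) cancel, and the coefficient of \(z^{-1}\) is \(h_{u'-u-1}(\wt-\wt')\,\mathcal T(u,\wt)\mathcal T(u',\wt')\). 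This is non-zero for \(u'=u+1\).
\item Take \(M=L=1\), \(\theta_1=0\). Then \(\mathcal T(u,\wt)=(u+\eta)e^{\xi(\wt,g)}\) with \(\eta=\sum_k k t_k g^k\). For \(\wt=\wt'\), \(u'=u+1\) the coefficient is \((u+\eta)(u+1+\eta)e^{2\xi(\wt,g)}\neq0\).
\end{itemize}
This does not affect Section~\ref{sec:QQ-relations}, which only uses \(u-u'=1\).

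However, it means you do not reach the statement as written, which uses the contour \(C_{[0,\infty]}\) and arbitrary \(u,u'\). That needs an extra step:
\begin{enumerate}
\item After Miwa specialization the integrand is rational in \(z\) apart from \(z^{u-u'}\). Up to sign, \(C_{[0,\infty]}\) computes the sum of residues at the poles \(z=g_i\) coming from \(\mathcal T(u',\wt'+[z^{-1}])\).
\item For \(u-u'\in\Z_{\ge0}\) there is no residue at \(z=0\). Indeed, \(\mathcal T(u,\wt-[z^{-1}])\) is a polynomial of degree \(\le M\) in \(z^{-1}\), while \(\mathcal T(u',\wt'+[z^{-1}])=O(z^M)\). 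So this contour integral agrees with your formal identity.
\item The residue sum has the form \(\sum_i g_i^{u-u'}P_i(u,u')\) with \(P_i\) polynomial. Since the \(g_i\) are distinct, vanishing for all \(u-u'\in\Z_{\ge0}\) forces \(P_i\equiv0\).
\end{enumerate}

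\textbf{The ``main obstacle'' is settled by reindexing, not by spot checks.} Checking \(1\times1\) and \(2\times2\) minors would not prove compatibility in \(u\), and you do not need a \(\partial_u\)-shift generating function. For generic \(u_0\) put \(\Xi_{i,b}=H_{i+b}(u_0-b+1)\) for \(i,b\in\Z\). Then \(\Xi_{-b,b}=1\) and \(\Xi_{i,b}=0\) for \(i<-b\). Substituting into \eqref{eq:CBR-current-convention} gives \(\widehat T_\lambda(u_0+k)=\det_{1\le a,b\le d}\bigl(\Xi_{\lambda_a-a+k,\,b-k}\bigr)\) for every \(k\in\Z\).

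So all the frames are windows of one unitriangular matrix: columns \(b\ge 1-k\), Maya diagram shifted by \(k\). That is, a single \(G\) acts on the charge-\(k\) vacua, and the nesting of these windows is what produces the orientation \(u-u'\ge0\). Because \(\Xi\) has infinitely many non-zero diagonals, \(G\) lies in a completion of \(GL(\infty)\); alternatively, argue directly in the Sato Grassmannian.

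Separately, the commutativity of \(T_\lambda(u)\) and \(T_\mu(v)\) for \(\lambda\ne\mu\) comes from the fused \(R\)-matrices on \(V_\lambda\otimes V_\mu\). It does not follow from \eqref{eq:Schwinger-RLL}, which involves only two copies of \(\C^M\).
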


\begin{remark}
For the interpretation as a formal Laurent series in \(z\), we assume
\(u-u'\in\Z\).  Section~\ref{sec:QQ-relations} uses only \(u-u'=1\).
For non-integral values of \(u-u'\), one must instead choose a branch of
\(z^{u-u'}\) and specify the corresponding contour.
\end{remark}

\begin{remark}
The contour \(C\) is the contour \(C_{[0,\infty]}\) used in Refs.~\cite{AKLTZ2013,Zabrodin2014XXX}: it encircles the cut \([0,\infty]\), including the endpoints \(0\) and \(\infty\), and does not enclose singularities coming from the shifted master \(T\)-operators.  Below we use the corresponding coefficient form in the Laurent expansion in \(z^{-1}\).  Equivalently, if one denotes the added Miwa variable by \(y\), then \(y=z^{-1}\).  This contour in the \(z\)-plane should not be confused with the residues in the Miwa variables \(y_\beta\) at \(y_\beta=g_\beta^{-1}\).
\end{remark}

The role of the CBR/Jacobi--Trudi formula in this statement is that it
identifies the normalized transfer matrices \(\widehat T_\lambda(u)\) with
Pl{\"u}cker coordinates for the corresponding mKP tau-function.

\section{The limit \texorpdfstring{$g_\beta y_\beta\to1$}{g beta y beta -> 1} and the \texorpdfstring{$L$}{L}-operator}
\label{sec:local-contraction}

We use the colour set $B$ fixed in \eqref{eq:B-colour-set}.  Choose a subset
\begin{equation}
  I\subset B,
  \qquad
  \Ibar=B\setminus I.
  \label{eq:I-Ibar-subsets}
\end{equation}
The subset $I$ labels the colours which remain in the trace after the residue operation in the master \(T\)-operator.  
The complement $\Ibar$ labels the colours corresponding to the
variables for which the products $g_\beta y_\beta$ tend to one.

Let \(R\) be the finite set of flavour labels used in the master
\(T\)-operator. In this section we assume that the colour set is embedded in the flavour set,
\begin{equation}
  B\subset R.
  \label{eq:B-subset-R-critical}
\end{equation}
This is only a convention for the names of labels.  For each
\(\beta\in B\), we choose one flavour label and denote it by the same symbol
\(\beta\).  For the subset \(I\), the residue calculation uses only the
variables \(y_\beta\), \(\beta\in\Ibar\).  If the master \(T\)-operator is
first written with a larger finite set \(R\), we restrict to the part
independent of the variables \(y_\alpha\), \(\alpha\in R\setminus\Ibar\);
equivalently, we take the coefficient of
\[
  \prod_{\alpha\in R\setminus\Ibar} y_\alpha^0 .
\]
Thus the residue calculation involves only the finitely many variables
\(y_\beta\), \(\beta\in\Ibar\).  The master \(T\)-operator with arbitrarily
many Miwa variables is obtained by enlarging the finite set \(R\) before
choosing the variables at which residues are taken.

\paragraph{Goal and strategy of Sections~\ref{sec:local-contraction}--\ref{sec:Q-contraction-theorem}}
The purpose of these two sections is to compute the highest-pole coefficient of the master $T$-operator when
\[
  g_\beta y_\beta\longrightarrow 1,
  \qquad \beta\in\Ibar.
\]
Equivalently, in the variables used in this paper, the residue point is
\[
  y_\beta=g_\beta^{-1},
  \qquad \beta\in\Ibar.
\]
The trace defining the master $T$-operator diverges at this point.  We therefore multiply the trace by suitable powers of
\[
  \varepsilon_\beta:=1-g_\beta y_\beta
\]
and then take the limit $\varepsilon_\beta \to 0$ 
for $\beta \in \Ibar$.
 The result is the coefficient of the
highest pole of the master \(T\)-operator in the variables
\(y_\beta\), \(\beta\in\Ibar\), and it appears in
Proposition~\ref{prop:main-contraction}.  The calculations in the present
section identify which terms in the \(L\)-operator can contribute to this
coefficient and which terms necessarily vanish.

\subsection{The Schwinger L-operator}

For the flavour labels for which \(g_\beta y_\beta\) tends to one, we use
the Fock representation
\begin{equation}
  \F_{B,\Ibar}=\operatorname{Fock}(V_B\otimes U_{\Ibar})
  \label{eq:FMB-local}
\end{equation}
in the notation of \eqref{eq:FJS-notation-fixed}.  Its vacuum, occupation
basis, trace, and oscillator relations are those of
\eqref{eq:Fock-vacuum-fixed}--\eqref{eq:Fock-character-fixed} with
\(J=B\) and \(S=\Ibar\).
For \(c,d\in B\) put
\begin{equation}
  e_{cd}^{(\Ibar)}
  =\sum_{\beta\in\Ibar}\mathsf a_{c\beta}^{\dagger}\mathsf a_{d\beta}.
  \label{eq:eB-definition-local}
\end{equation}
For \(R=\Ibar\), the Schwinger \(L\)-operator of \eqref{eq:Schwinger-L}
can be written as
\begin{equation}
  L^{(\Ibar)}(v)
  =\sum_{c,d\in B}[L^{(\Ibar)}(v)]_{cd}\otimes E_{cd}
  \in \End(\F_{B,\Ibar})\otimes\End(\C^M),
  \label{eq:genuine-local-L}
\end{equation}
where the bracketed lower indices denote the coefficient of \(E_{cd}\) in this expansion:
\begin{equation}
  [L^{(\Ibar)}(v)]_{cd}
  =v\delta_{cd}\Id_{\F_{B,\Ibar}}+e_{dc}^{(\Ibar)}.
  \label{eq:local-L-component}
\end{equation}
The corresponding embedding into the \(\ell\)-th site of the spin chain is
\begin{equation}
  L^{(\Ibar)}_{0\ell}(v)
  =\sum_{c,d\in B}[L^{(\Ibar)}(v)]_{cd}\otimes E_{cd}^{(\ell)}.
  \label{eq:embedded-local-L}
\end{equation}
With the convention for lower indices fixed in \eqref{eq:operator-embedding-convention},
the relation \eqref{eq:Schwinger-RLL} becomes
\begin{equation}
  L^{(\Ibar)}_{12}(u)L^{(\Ibar)}_{13}(v)R_{23}(v-u)
  =R_{23}(v-u)L^{(\Ibar)}_{13}(v)L^{(\Ibar)}_{12}(u).
  \label{eq:RLL-local-unscaled}
\end{equation}

\subsection{The diagonal factors near \texorpdfstring{$g_\beta y_\beta=1$}{g beta y beta = 1}}

For the full flavour set $R$, the product of the operator induced by the boundary twist and the operator \(\widehat h(\mathbf y_R)\) in Definition~\ref{def:finite-master} is
\begin{equation}
  \widehat g\,\widehat h(\mathbf y_R)
  =\prod_{\alpha\in R}\prod_{c\in B}(g_c y_\alpha)^{\mathsf N_{c\alpha}},
  \qquad
  \mathsf N_{c\alpha}=\mathsf a_{c\alpha}^{\dagger}\mathsf a_{c\alpha}.
  \label{eq:full-twist-R}
\end{equation}
For the selected flavours $\beta\in\Ibar$ we work directly with the variables
\(y_\beta\).  The point at which the pole appears is
\begin{equation}
  g_\beta y_\beta=1.
\end{equation}
We measure the distance from this point by
\begin{equation}
  \varepsilon_\beta=1-g_\beta y_\beta,
  \qquad
  y_\beta=\frac{1-\varepsilon_\beta}{g_\beta}.
  \label{eq:epsilon-beta-simple}
\end{equation}
For later use we write
\begin{equation}
  \mathbf y(\varepsilon)
  :=\{y_\beta(\varepsilon_\beta)\}_{\beta\in\Ibar},
  \qquad
  y_\beta(\varepsilon_\beta)=\frac{1-\varepsilon_\beta}{g_\beta}.
  \label{eq:y-of-epsilon-defined}
\end{equation}
Here the argument \(\varepsilon\) is only a shorthand for the family of
individual parameters \(\varepsilon_\beta\), \(\beta\in\Ibar\); limits are
always taken componentwise.
The part of \eqref{eq:full-twist-R} involving the selected flavours is
\begin{align}
  W_{\Ibar}(\mathbf y_{\Ibar})
  &=\prod_{\beta\in\Ibar}\prod_{c\in B}
  (g_c y_\beta)^{\mathsf N_{c\beta}}
  \notag\\
  &=\prod_{\beta\in\Ibar}\prod_{c\in B}
  \left[\frac{g_c}{g_\beta}(1-\varepsilon_\beta)\right]^{\mathsf N_{c\beta}}.
  \label{eq:Wcrit-simple}
\end{align}
In Eq.~\eqref{eq:Wcrit-simple}, the base
\[
  \frac{g_c}{g_\beta}(1-\varepsilon_\beta)
\]
 tends to $1$ as  $\varepsilon_\beta \to 0$ 
 only when $c=\beta$, 
 since the eigenvalues $g_1,\ldots,g_M$ are distinct.  Thus, among the diagonal
factors in Eq.~\eqref{eq:Wcrit-simple}, only the factor with \(c=\beta\)
can become singular as \(\varepsilon_\beta\to0\).  We denote the
corresponding oscillators by
\begin{equation}
  \mathsf A_\beta:=\mathsf a_{\beta\beta},
  \qquad
  \mathsf A_\beta^{\dagger}:=\mathsf a_{\beta\beta}^{\dagger}.
  \label{eq:critical-oscillator-identified}
\end{equation}
Indeed,
\begin{equation}
  \Tr(1-\varepsilon_\beta)^{\mathsf N_{\beta\beta}}
  =\sum_{n\ge0}(1-\varepsilon_\beta)^n
  =\frac1{\varepsilon_\beta},
  \label{eq:critical-one-over-eps}
\end{equation}
whereas, for $c\ne\beta$,
\begin{equation}
  \Tr\left[\left(\frac{g_c}{g_\beta}(1-\varepsilon_\beta)\right)^{\mathsf N_{c\beta}}\right]
  \longrightarrow
  \frac{1}{1-g_c/g_\beta}.
  \label{eq:noncritical-trace-general}
\end{equation}
The limit in \eqref{eq:noncritical-trace-general} is finite because
\(g_c\ne g_\beta\).

We decompose the factor \eqref{eq:Wcrit-simple} as
\begin{equation}
  W_{\Ibar}(\mathbf y_{\Ibar})
  =W_{\mathrm{sing}}(\varepsilon)
   W_{I,\Ibar}(\varepsilon)
   W_{\Ibar,\Ibar}^{\mathrm{off}}(\varepsilon),
  \label{eq:WB-decomposition}
\end{equation}
where
\begin{align}
  W_{\mathrm{sing}}(\varepsilon)
  &=\prod_{\beta\in\Ibar}(1-\varepsilon_\beta)^{\mathsf N_{\beta\beta}},
  \label{eq:Wcrit-eps}
  \\
  W_{I,\Ibar}(\varepsilon)
  &=\prod_{a\in I}\prod_{\beta\in\Ibar}
  \left[\frac{g_a}{g_\beta}(1-\varepsilon_\beta)\right]^{\mathsf N_{a\beta}},
  \label{eq:WA-B-eps}
  \\
  W_{\Ibar,\Ibar}^{\mathrm{off}}(\varepsilon)
  &=\prod_{\beta\in\Ibar}
    \prod_{\substack{\gamma\in\Ibar\\ \gamma\ne\beta}}
  \left[\frac{g_\gamma}{g_\beta}(1-\varepsilon_\beta)\right]^{\mathsf N_{\gamma\beta}}.
  \label{eq:WBB-off}
\end{align}
Setting all \(\varepsilon_\beta=0\) in \eqref{eq:WA-B-eps} gives the diagonal operator
\begin{equation}
  D_{I,\Ibar}
  :=\prod_{a\in I}\prod_{\beta\in\Ibar}
  \left(\frac{g_a}{g_\beta}\right)^{\mathsf N_{a\beta}}.
  \label{eq:DAB-derived}
\end{equation}
Let \(\F_{\mathrm{int}}\) be the Fock representation, in the sense of
Section~\ref{sec:conventions}, generated by the oscillator generators
\(\mathsf a_{\gamma\beta},\mathsf a_{\gamma\beta}^{\dagger}\), where
\(\gamma,\beta\in\Ibar\) and \(\gamma\ne\beta\). 
Define
\begin{align}
  C_{\Ibar}
  &:={\lim}_{\varepsilon_\beta\to0\;(\beta\in\Ibar)}
  \Tr_{\F_{\mathrm{int}}}
  W_{\Ibar,\Ibar}^{\mathrm{off}}(\varepsilon)
  \notag\\
  &= {\lim}_{\varepsilon_\beta\to0\;(\beta\in\Ibar)}
  \prod_{\beta\in\Ibar}
  \prod_{\substack{\gamma\in\Ibar\\ \gamma\ne\beta}}
  \frac{1}{1-\dfrac{g_\gamma}{g_\beta}(1-\varepsilon_\beta)}
  \notag\\
  &=\prod_{\beta\in\Ibar}
  \prod_{\substack{\gamma\in\Ibar\\ \gamma\ne\beta}}
  \frac{1}{1-g_\gamma/g_\beta}.
  \label{eq:CIbar-def}
\end{align}
The trace is first evaluated in a region where the geometric series in the
second line converge.  The resulting rational function is then continued
meromorphically to \(\varepsilon_\beta=0\) for all \(\beta\in\Ibar\).
When \(|\Ibar|\ge2\), one must not set all \(\varepsilon_\beta\) equal to
zero before evaluating the trace, because both \(g_\gamma/g_\beta\) and its
reciprocal occur.  The same prescription is understood whenever
\(C_{\Ibar}\) is obtained from this trace later in the paper.

\subsection[The scaled L-operator and master T-operator]{The scaled
\texorpdfstring{$L$}{L}-operator and master \texorpdfstring{$T$}{T}-operator}
\label{subsec:scaled-master-T}

Define a diagonal matrix on the vector space $\C^M$ whose basis is labelled by $B$ by
\begin{equation}
  S_{\varepsilon}
  =\sum_{a\in I}E_{aa}+
    \sum_{\beta\in\Ibar}\sqrt{\varepsilon_\beta}\,E_{\beta\beta}.
  \label{eq:S-epsilon-scaling}
\end{equation}
It is convenient to write $S_{\varepsilon}=\sum_{c\in B}s_cE_{cc}$,
where
\begin{equation}
  s_c=1\quad(c\in I),
  \qquad
  s_\beta=\sqrt{\varepsilon_\beta}\quad(\beta\in\Ibar).
  \label{eq:scaling-sc}
\end{equation}
The scaled \(L\)-operator is
\begin{equation}
  \widetilde L^{(\Ibar)}(v)
  =\bigl(\Id_{\F_{B,\Ibar}}\otimes S_{\varepsilon}\bigr)
  L^{(\Ibar)}(v)
  \bigl(\Id_{\F_{B,\Ibar}}\otimes S_{\varepsilon}\bigr),
  \label{eq:scaled-local-L}
\end{equation}
so that the coefficient of \(E_{cd}\) in the scaled \(L\)-operator is
\begin{equation}
  [\widetilde L^{(\Ibar)}(v)]_{cd}
  =s_cs_d\left(v\delta_{cd}+
  \sum_{\beta\in\Ibar}\mathsf a_{d\beta}^{\dagger}\mathsf a_{c\beta}\right).
  \label{eq:scaled-component-detailed}
\end{equation}
One can show
\begin{equation}
  R(u)(S_{\varepsilon} \otimes S_{\varepsilon})=
  (S_{\varepsilon} \otimes S_{\varepsilon})R(u).
  \label{eq:R-commutes-S2S3}
\end{equation}
Thus the scaled \(L\)-operator satisfies the same RLL relation as the
unscaled one:
\begin{align}
\widetilde L^{(\Ibar)}_{12}(u)
\widetilde L^{(\Ibar)}_{13}(v)R_{23}(v-u)
=R_{23}(v-u)\widetilde L^{(\Ibar)}_{13}(v)\widetilde L^{(\Ibar)}_{12}(u).
\end{align}

Specializing Definition~\ref{def:finite-master} to
\(R=\Ibar\) and writing
\(\mathbf y_{\Ibar}=(y_\beta)_{\beta\in\Ibar}\), we obtain
\begin{equation}
  \mathcal T(u;\mathbf y_{\Ibar})
  =\Tr_{\F_{B,\Ibar}}
  \left[
  L^{(\Ibar)}_{0L}(u-\theta_L)\cdots
  L^{(\Ibar)}_{01}(u-\theta_1)W_{\Ibar}(\mathbf y_{\Ibar})
  \right].
  \label{eq:TB-master-detailed}
\end{equation}
Let \(\widetilde L^{(\Ibar)}_{0\ell}(v)\) denote the embedding of
\eqref{eq:scaled-local-L} into the \(\ell\)-th quantum space.  Then
\begin{align}
  \widetilde{\mathcal T}(u;\varepsilon)
  &:={S}_{\varepsilon}^{\otimes L}
  \mathcal T(u;\mathbf y(\varepsilon))
  {S}_{\varepsilon}^{\otimes L}
  \notag\\
  &=\Tr_{\F_{B,\Ibar}}
  \left[
  \widetilde L^{(\Ibar)}_{0L}(u-\theta_L)\cdots
  \widetilde L^{(\Ibar)}_{01}(u-\theta_1)
  W_{\Ibar}(\mathbf y(\varepsilon))
  \right].
  \label{eq:scaled-master-T}
\end{align}
For \(\mathbf c,\mathbf d\in B^L\), expand the matrix element of the
product of the scaled \(L\)-operators in Eq.~\eqref{eq:scaled-master-T}.
The generators \(\mathsf a_{c\beta},\mathsf a_{c\beta}^{\dagger}\) commute
with \(\mathsf a_{d\gamma},\mathsf a_{d\gamma}^{\dagger}\) whenever
\((c,\beta)\ne(d,\gamma)\), so the trace of each term factorizes.  Thus,
schematically, the matrix element has the form
\begin{equation}
  [\widetilde{\mathcal T}(u;\varepsilon)]_{\mathbf c,\mathbf d}
  =\sum_{\sigma} C_{\sigma}(u,\varepsilon)
  \prod_{\beta\in\Ibar}\prod_{c\in B}
  \Tr_{\F_{\{c\},\{\beta\}}}
  \left[
  U_{c\beta}^{(\sigma)}
  \left(\frac{g_c}{g_\beta}(1-\varepsilon_\beta)\right)^{\mathsf N_{c\beta}}
  \right].
  \label{eq:scaled-master-trace-factorization}
\end{equation}
Here the sum is finite, \(C_{\sigma}(u,\varepsilon)\) is a scalar
independent of the oscillator generators, and
\(U_{c\beta}^{(\sigma)}\) is a finite product of
\(\mathsf a_{c\beta}\) and \(\mathsf a_{c\beta}^{\dagger}\).

\begin{lemma}[single bosonic oscillator whose diagonal factor tends to its pole]
\label{lem:one-critical-trace}
Let $\mathsf A,\mathsf A^{\dagger}$ be a single bosonic oscillator,
$[\mathsf A,\mathsf A^{\dagger}]=1$, and let
$\mathsf N=\mathsf A^{\dagger}\mathsf A$.  Let $W$ be any finite product of
\[
  \sqrt\varepsilon \mathsf A^{\dagger},
  \qquad
  \sqrt\varepsilon \mathsf A.
\]
Suppose that this product contains $r$ copies of
$\sqrt\varepsilon \mathsf A^{\dagger}$ and $s$ copies of
$\sqrt\varepsilon \mathsf A$.  Then
\begin{align}
  \lim_{\varepsilon\to0}
  \varepsilon\Tr\left[(1-\varepsilon)^{\mathsf N} W\right]
  &=
  \lim_{\varepsilon\to0}
  \varepsilon\Tr\left[W(1-\varepsilon)^{\mathsf N}\right]
  \notag\\
  &=
  \begin{cases}
    r!,& r=s,\\
    0,& r\ne s.
  \end{cases}
  \label{eq:one-critical-word-limit}
\end{align}
\end{lemma}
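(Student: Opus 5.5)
The plan is to pull out the powers of \(\varepsilon\), use the occupation basis to kill the unbalanced words, and reduce the balanced case to a polynomial in \(\mathsf N\) whose top coefficient is \(r!\), which then yields a Bose-sum limit.

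First, factor out the scalars. Since \(W=\varepsilon^{(r+s)/2}W_0\), with \(W_0\) the same word in \(\mathsf A^\dagger,\mathsf A\) without the factors \(\sqrt\varepsilon\), it suffices to study \(\varepsilon^{1+(r+s)/2}\Tr[(1-\varepsilon)^{\mathsf N}W_0]\).

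Second, dispose of the case \(r\neq s\) using the occupation basis \eqref{eq:occupation-basis-fixed}. The word \(W_0\) maps \(|n\rangle\) into a multiple of \(|n+r-s\rangle\), by \eqref{eq:annihilation-action-basis}--\eqref{eq:creation-action-basis}. The factor \((1-\varepsilon)^{\mathsf N}\) is diagonal, so every diagonal matrix element vanishes. Hence the trace is identically zero and the limit is \(0\). The two orderings, with \((1-\varepsilon)^{\mathsf N}\) on the left or on the right, are treated identically.

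Third, treat the case \(r=s\) by writing the word in terms of \(\mathsf N\). Here \(W_0\) preserves each \(|n\rangle\), so \(W_0=p(\mathsf N)\) for a polynomial \(p\). The degree of \(p\) is \(r\) and its leading coefficient is \(1\). To see this, commute all \(\mathsf A^\dagger\) to the left using \([\mathsf A,\mathsf A^\dagger]=1\). This expresses \(W_0\) as \((\mathsf A^\dagger)^r\mathsf A^r\) plus lower normally ordered terms \((\mathsf A^\dagger)^k\mathsf A^k\) with \(k<r\). One then uses \((\mathsf A^\dagger)^k\mathsf A^k=\mathsf N(\mathsf N-1)\cdots(\mathsf N-k+1)\). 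Alternatively, one can argue directly in the basis: each factor contributes \(\sqrt{n+O(1)}\), so the eigenvalue on \(|n\rangle\) is \(n^r+O(n^{r-1})\). Since \(p(\mathsf N)\) commutes with \((1-\varepsilon)^{\mathsf N}\), the two orderings give the same trace
\[
\varepsilon^{1+r}\sum_{n\ge0}p(n)(1-\varepsilon)^n .
\]

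Finally, evaluate the limit. Expand \(p(n)\) in the binomial basis: \(p(n)=r!\binom{n}{r}+\sum_{k<r}c_k\binom{n}{k}\). Then use \(\sum_n\binom{n}{k}x^n=x^k/(1-x)^{k+1}\) with \(x=1-\varepsilon\). This gives
\[
\varepsilon^{1+r}\Bigl(r!\,(1-\varepsilon)^r\varepsilon^{-r-1}+\sum_{k<r}c_k(1-\varepsilon)^k\varepsilon^{-k-1}\Bigr)\to r!,
\]
since the terms with \(k<r\) carry a positive power of \(\varepsilon\). The summation identity holds as a convergent series for \(0<\varepsilon<1\), so the computation is legitimate.

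The only step needing care is the third: the claim that the leading coefficient of \(p\) is exactly \(1\), independent of the ordering of the word. I would settle it via the normal-ordering argument, noting that each commutation lowers the degree in \(\mathsf N\).
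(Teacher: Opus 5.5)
Your proposal is correct and follows essentially the same route as the paper. The paper also normal-orders the word as $\sum_j c_j(\mathsf A^{\dagger})^{r-j}\mathsf A^{s-j}$ with $c_0=1$, and uses $\Tr[\rho^{\mathsf N}(\mathsf A^{\dagger})^p\mathsf A^q]=\delta_{pq}\,p!\,\rho^p/(1-\rho)^{p+1}$, which is your binomial-basis sum since $(\mathsf A^{\dagger})^k\mathsf A^k=k!\binom{\mathsf N}{k}$. It handles the two orderings via $[\mathsf N,U]=(r-s)U$, and setting $\rho=1-\varepsilon$ isolates the leading term $r!$, exactly as you do.
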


\begin{lemma}[diagonal factor away from its pole]
\label{lem:regular-trace-away-from-one}
Let
\[
  \rho(\varepsilon)=\lambda(1-\varepsilon),
  \qquad \lambda\ne1.
\]
For any finite product \(U\) of \(\mathsf A\) and
\(\mathsf A^{\dagger}\), both
\[
  \Tr\left[\rho(\varepsilon)^{\mathsf N}U\right]
  \qquad\text{and}\qquad
  \Tr\left[U\rho(\varepsilon)^{\mathsf N}\right]
\]
are regular at \(\varepsilon=0\), with the traces understood by meromorphic
continuation or as formal traces in the occupation-number basis.
\end{lemma}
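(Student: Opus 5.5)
The plan is to compute the trace explicitly in the occupation basis of the single oscillator. It should then turn out to be a finite sum of geometric-type series, each of which is a rational function of \(\varepsilon\) with its only pole at \(\rho(\varepsilon)=1\), i.e. at \(\lambda(1-\varepsilon)=1\). Since \(\lambda\ne1\), this pole lies away from \(\varepsilon=0\).

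\emph{First step: normal ordering.} Using \([\mathsf A,\mathsf A^{\dagger}]=1\), rewrite the finite word \(U\) as a finite linear combination of normally ordered monomials \((\mathsf A^{\dagger})^{p}\mathsf A^{q}\) with scalar coefficients independent of \(\varepsilon\). The trace is linear and, as the paper stipulates, is taken term by term in the occupation basis. It therefore suffices to treat a single monomial. For \(p\ne q\), the operator \((\mathsf A^{\dagger})^{p}\mathsf A^{q}\) shifts the occupation number by \(p-q\ne0\), while \(\rho^{\mathsf N}\) is diagonal. Hence every diagonal matrix element vanishes and the trace is identically zero, in either order of the factors.

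\emph{Second step: the case \(p=q\).} Here \((\mathsf A^{\dagger})^{p}\mathsf A^{p}\) acts on \(|n\rangle\) by the falling factorial \(n(n-1)\cdots(n-p+1)\). So
\[
  \Tr\bigl[\rho^{\mathsf N}(\mathsf A^{\dagger})^{p}\mathsf A^{p}\bigr]
  =\sum_{n\ge0}\frac{n!}{(n-p)!}\rho^{n}
  =\rho^{p}\frac{d^{p}}{d\rho^{p}}\frac{1}{1-\rho}
  =\frac{p!\,\rho^{p}}{(1-\rho)^{p+1}}.
\]
Because \(\rho^{\mathsf N}\) is diagonal, the order does not matter, so \(\Tr[U\rho^{\mathsf N}]\) gives the same result. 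The series converges for \(|\rho|<1\), and the closed form is its meromorphic continuation; equivalently, it is the formal trace. Substituting \(\rho(\varepsilon)=\lambda(1-\varepsilon)\), the denominator at \(\varepsilon=0\) equals \((1-\lambda)^{p+1}\ne0\). Each term is therefore holomorphic near \(\varepsilon=0\), and so is the finite sum. As a by-product, the limit at \(\varepsilon=0\) is the finite value obtained by setting \(\rho=\lambda\).

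\emph{Main obstacle.} The only delicate point is interpretive rather than computational. When \(|\lambda|\ge1\), the series does not converge near \(\varepsilon=0\), so regularity must be understood via the meromorphic continuation or formal-trace convention. The plan is to state explicitly that the continuation of each normally ordered term is the rational function above. This function is unique, and its only pole is at \(\rho=1\). The case \(\lambda=0\) is trivial, since then only the \(n=0\) term survives.
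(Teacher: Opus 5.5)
Your proposal is correct and follows essentially the same route as the paper's proof in the appendix. You normal-order \(U\), use \(\Tr[\rho^{\mathsf N}(\mathsf A^{\dagger})^p\mathsf A^q]=\delta_{pq}\,p!\,\rho^p/(1-\rho)^{p+1}\), and observe that all denominators are powers of \(1-\rho(\varepsilon)\), which equals \(1-\lambda\ne0\) at \(\varepsilon=0\). The only difference is how the two orderings are compared: you use the diagonality of \(\rho^{\mathsf N}\) in the occupation basis, so the diagonal entries of \(\rho^{\mathsf N}U\) and \(U\rho^{\mathsf N}\) coincide, whereas the paper argues via \([\mathsf N,U]=(r-s)U\); both are valid, and yours is slightly more direct.
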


The proofs are given in \ref{app:direct-trace-asymptotics}.  In
Eq.~\eqref{eq:scaled-master-trace-factorization},
Lemma~\ref{lem:one-critical-trace} applies when \(c=\beta\), whereas
Lemma~\ref{lem:regular-trace-away-from-one} shows that the factors with
\(c\ne\beta\) are regular at \(\varepsilon_\beta=0\).  Consequently,
poles can arise only from the factors involving \(\mathsf A_\beta\),
\(\mathsf A_\beta^{\dagger}\), and \(\mathsf N_{\beta\beta}\),
\(\beta\in\Ibar\).

\subsection[The leading part of the scaled L-operator]{The leading part of the scaled \texorpdfstring{$L$}{L}-operator}
\label{subsec:leading-scaled-local-matrix}

We now use Lemmas~\ref{lem:one-critical-trace} and
\ref{lem:regular-trace-away-from-one} to determine which terms in the
matrix elements of \(\widetilde L^{(\Ibar)}(v)\) can contribute to the
highest-order pole.

If \(a,c\in I\), then \(s_a=s_c=1\) and, by \eqref{eq:scaled-component-detailed},
\begin{align}
  [\widetilde L^{(\Ibar)}(v)]_{ac}
  &=v\delta_{ac}+
  \sum_{\beta\in\Ibar}\mathsf a_{c\beta}^{\dagger}\mathsf a_{a\beta}.
  \label{eq:AA-block-detailed}
\end{align}
All terms in \eqref{eq:AA-block-detailed} are retained in the leading
part defined below.

If \(a\in I\) and \(\beta\in\Ibar\), then
\begin{align}
  [\widetilde L^{(\Ibar)}(v)]_{a\beta}
  &=\sqrt{\varepsilon_\beta}
  \sum_{\gamma\in\Ibar}\mathsf a_{\beta\gamma}^{\dagger}\mathsf a_{a\gamma}
  \notag\\
  &=(\sqrt{\varepsilon_\beta}\mathsf A_\beta^{\dagger})\mathsf a_{a\beta}
    +\sqrt{\varepsilon_\beta}
    \sum_{\substack{\gamma\in\Ibar\\ \gamma\ne\beta}}
    \mathsf a_{\beta\gamma}^{\dagger}\mathsf a_{a\gamma}.
  \label{eq:a-beta-block-detailed}
\end{align}
The first term contains
\(\sqrt{\varepsilon_\beta}\mathsf A_\beta^{\dagger}\).  In the second
term, \(\gamma\ne\beta\), and the oscillators have labels
\((\beta,\gamma)\) and \((a,\gamma)\).  The factors in
\eqref{eq:Wcrit-simple} involving their number operators are
\[
 \left(\frac{g_\beta}{g_\gamma}(1-\varepsilon_\gamma)\right)^{\mathsf N_{\beta\gamma}},
 \qquad
 \left(\frac{g_a}{g_\gamma}(1-\varepsilon_\gamma)\right)^{\mathsf N_{a\gamma}}.
\]
The bases of these powers tend to \(g_\beta/g_\gamma\) and
\(g_a/g_\gamma\), and neither limit is \(1\).  By
Lemma~\ref{lem:regular-trace-away-from-one}, the traces involving these
oscillators are regular as \(\varepsilon_\gamma\to0\); the explicit factor
\(\sqrt{\varepsilon_\beta}\) therefore prevents the second term from
contributing to the highest-order pole.

If \(\beta\in\Ibar\) and \(a\in I\), then
\begin{align}
  [\widetilde L^{(\Ibar)}(v)]_{\beta a}
  &=\sqrt{\varepsilon_\beta}
  \sum_{\gamma\in\Ibar}\mathsf a_{a\gamma}^{\dagger}\mathsf a_{\beta\gamma}
  \notag\\
  &=\mathsf a_{a\beta}^{\dagger}(\sqrt{\varepsilon_\beta}\mathsf A_\beta)
    +\sqrt{\varepsilon_\beta}
    \sum_{\substack{\gamma\in\Ibar\\ \gamma\ne\beta}}
    \mathsf a_{a\gamma}^{\dagger}\mathsf a_{\beta\gamma}.
  \label{eq:beta-a-block-detailed}
\end{align}
The first term contains
\(\sqrt{\varepsilon_\beta}\mathsf A_\beta\).  In the second term,
\(\gamma\ne\beta\), and the oscillators have labels \((a,\gamma)\) and
\((\beta,\gamma)\).  The factors in \eqref{eq:Wcrit-simple} involving their
number operators are
\[
 \left(\frac{g_a}{g_\gamma}(1-\varepsilon_\gamma)\right)^{\mathsf N_{a\gamma}},
 \qquad
 \left(\frac{g_\beta}{g_\gamma}(1-\varepsilon_\gamma)\right)^{\mathsf N_{\beta\gamma}}.
\]
The bases of these powers tend to \(g_a/g_\gamma\) and
\(g_\beta/g_\gamma\), and neither limit is \(1\).  By
Lemma~\ref{lem:regular-trace-away-from-one}, the traces involving these
oscillators are regular as \(\varepsilon_\gamma\to0\); the explicit factor
\(\sqrt{\varepsilon_\beta}\) therefore prevents the second term from
contributing to the highest-order pole.

If \(\beta\in\Ibar\), then
\begin{align}
  [\widetilde L^{(\Ibar)}(v)]_{\beta\beta}
  &=\varepsilon_\beta\left(v+
  \sum_{\delta\in\Ibar}\mathsf a_{\beta\delta}^{\dagger}\mathsf a_{\beta\delta}
  \right)
  \notag\\
  &=\varepsilon_\beta \mathsf A_\beta^{\dagger}\mathsf A_\beta
    +\varepsilon_\beta v
    +\varepsilon_\beta
    \sum_{\substack{\delta\in\Ibar\\ \delta\ne\beta}}
    \mathsf a_{\beta\delta}^{\dagger}\mathsf a_{\beta\delta}.
  \label{eq:beta-beta-block-detailed}
\end{align}
The first term is
\[
  (\sqrt{\varepsilon_\beta}\mathsf A_\beta^{\dagger})
  (\sqrt{\varepsilon_\beta}\mathsf A_\beta),
\]
and is therefore of the form covered by
Lemma~\ref{lem:one-critical-trace}.  For the second term, the explicit
factor \(\varepsilon_\beta\) remains after this lemma is applied to the
trace over \(\F_{\{\beta\},\{\beta\}}\).  The same is true for each term
in the sum, while Lemma~\ref{lem:regular-trace-away-from-one} shows that
the trace over \(\F_{\{\beta\},\{\delta\}}\), \(\delta\ne\beta\), is
regular.  Hence only the first term can contribute to the highest-order
pole.

Finally, if \(\beta,\gamma\in\Ibar\) and \(\beta\ne\gamma\), then
\begin{equation}
  [\widetilde L^{(\Ibar)}(v)]_{\beta\gamma}
  =\sqrt{\varepsilon_\beta\varepsilon_\gamma}
  \sum_{\delta\in\Ibar}\mathsf a_{\gamma\delta}^{\dagger}\mathsf a_{\beta\delta}.
  \label{eq:beta-gamma-offdiag-block}
\end{equation}
For \(\delta=\beta\) this contains
\begin{equation}
  \sqrt{\varepsilon_\gamma}\,
  \mathsf a_{\gamma\beta}^{\dagger}(\sqrt{\varepsilon_\beta}\mathsf A_\beta).
\end{equation}
After Lemma~\ref{lem:one-critical-trace} is applied to the trace over
\(\F_{\{\beta\},\{\beta\}}\),
Lemma~\ref{lem:regular-trace-away-from-one} shows that the trace over
\(\F_{\{\gamma\},\{\beta\}}\) is regular; the factor
\(\sqrt{\varepsilon_\gamma}\) therefore remains.  For
\(\delta=\gamma\), the corresponding term is
\begin{equation}
  \sqrt{\varepsilon_\beta}
  (\sqrt{\varepsilon_\gamma}\mathsf A_\gamma^{\dagger})\mathsf a_{\beta\gamma}.
\end{equation}
Here the same two lemmas apply to the traces over
\(\F_{\{\gamma\},\{\gamma\}}\) and
\(\F_{\{\beta\},\{\gamma\}}\), and the factor
\(\sqrt{\varepsilon_\beta}\) remains.  If
\(\delta\ne\beta,\gamma\), the traces over
\(\F_{\{\gamma\},\{\delta\}}\) and
\(\F_{\{\beta\},\{\delta\}}\) are regular by
Lemma~\ref{lem:regular-trace-away-from-one}, while the factor
\(\sqrt{\varepsilon_\beta\varepsilon_\gamma}\) remains.  Thus every
term in \eqref{eq:beta-gamma-offdiag-block} still contains a positive
power of at least one \(\varepsilon_\alpha\) after the traces are
evaluated and does not contribute to the highest-order pole.

The preceding calculation shows that, when computing the highest-order
pole, the matrix elements of the scaled $L$-operator may be replaced by
\begin{align}
  [\widetilde L^{\rm lead}(v)]_{ac}
  &=v\delta_{ac}+\sum_{\beta\in\Ibar}\mathsf a_{c\beta}^{\dagger}\mathsf a_{a\beta},
  &&a,c\in I,
  \label{eq:lead-AA}
  \\
  [\widetilde L^{\rm lead}(v)]_{a\beta}
  &=(\sqrt{\varepsilon_\beta}\mathsf A_\beta^{\dagger})\mathsf a_{a\beta},
  &&a\in I,
  \beta\in\Ibar,
  \label{eq:lead-Abar}
  \\
  [\widetilde L^{\rm lead}(v)]_{\beta a}
  &=\mathsf a_{a\beta}^{\dagger}(\sqrt{\varepsilon_\beta}\mathsf A_\beta),
  &&\beta\in\Ibar,
  a\in I,
  \label{eq:lead-barA}
  \\
  [\widetilde L^{\rm lead}(v)]_{\beta\gamma}
  &=\delta_{\beta\gamma}\varepsilon_\beta \mathsf A_\beta^{\dagger}\mathsf A_\beta,
  &&\beta,
  \gamma\in\Ibar.
  \label{eq:lead-barbar}
\end{align}
For later use in Proposition~\ref{prop:main-contraction}, we give a precise
name to the omitted terms.  Define
\begin{equation}
  [\Delta L(v)]_{cd}:=
  [\widetilde L^{(\Ibar)}(v)]_{cd}
  -[\widetilde L^{\rm lead}(v)]_{cd}.
  \label{eq:Delta-L-definition}
\end{equation}
Then \eqref{eq:AA-block-detailed}--\eqref{eq:beta-gamma-offdiag-block} and
\eqref{eq:lead-AA}--\eqref{eq:lead-barbar} give
\begin{align}
  [\Delta L(v)]_{ac}&=0,
  &&a,c\in I,
  \label{eq:Delta-L-ac}
  \\
  [\Delta L(v)]_{a\beta}
  &=\sqrt{\varepsilon_\beta}
    \sum_{\substack{\gamma\in\Ibar\\ \gamma\ne\beta}}
    \mathsf a_{\beta\gamma}^{\dagger}\mathsf a_{a\gamma},
  &&a\in I,\ \beta\in\Ibar,
  \label{eq:Delta-L-a-beta}
  \\
  [\Delta L(v)]_{\beta a}
  &=\sqrt{\varepsilon_\beta}
    \sum_{\substack{\gamma\in\Ibar\\ \gamma\ne\beta}}
    \mathsf a_{a\gamma}^{\dagger}\mathsf a_{\beta\gamma},
  &&\beta\in\Ibar,\ a\in I,
  \label{eq:Delta-L-beta-a}
  \\
  [\Delta L(v)]_{\beta\beta}
  &=\varepsilon_\beta v
    +\varepsilon_\beta
    \sum_{\substack{\delta\in\Ibar\\ \delta\ne\beta}}
    \mathsf a_{\beta\delta}^{\dagger}\mathsf a_{\beta\delta},
  &&\beta\in\Ibar,
  \label{eq:Delta-L-beta-beta}
  \\
  [\Delta L(v)]_{\beta\gamma}
  &=\sqrt{\varepsilon_\beta\varepsilon_\gamma}
  \sum_{\delta\in\Ibar}
  \mathsf a_{\gamma\delta}^{\dagger}\mathsf a_{\beta\delta},
  &&\beta,\gamma\in\Ibar,\ \beta\ne\gamma.
  \label{eq:Delta-L-beta-gamma}
\end{align}
In the expansion of
\[
 [\widetilde L^{(\Ibar)}(v_L)]_{c_Ld_L}\cdots
 [\widetilde L^{(\Ibar)}(v_1)]_{c_1d_1},
\]
a summand contains a remainder term if at least one factor is
\([\Delta L(v_\ell)]_{c_\ell d_\ell}\).  Every summand in
Eqs.~\eqref{eq:Delta-L-a-beta}--\eqref{eq:Delta-L-beta-gamma} is a product
of \(\sqrt{\varepsilon_\beta}\mathsf A_\beta\),
\(\sqrt{\varepsilon_\beta}\mathsf A_\beta^{\dagger}\), and generators
\(\mathsf a_{c\beta},\mathsf a_{c\beta}^{\dagger}\) with \(c\ne\beta\),
with an additional factor \(\sqrt{\varepsilon_\alpha}\) or
\(\varepsilon_\alpha\) for at least one \(\alpha\).  These additional
factors remain when the matrix elements of several \(L\)-operators are
multiplied.
Lemmas~\ref{lem:one-critical-trace} and
\ref{lem:regular-trace-away-from-one}, applied to the factors in
Eq.~\eqref{eq:scaled-master-trace-factorization}, then show that a summand
containing a remainder term does not contribute to the highest-order
pole.  Hence only the matrix elements
\eqref{eq:lead-AA}--\eqref{eq:lead-barbar} need be retained.

Removing the factors that contain \(\mathsf A_\beta\) or
\(\mathsf A_\beta^{\dagger}\) from
Eqs.~\eqref{eq:lead-AA}--\eqref{eq:lead-barbar} gives an operator
\[
 \mathscr L_{I,\Ibar}(v)
 \in\End(\F_{I,\Ibar})\otimes\End(\C^M)
\]
with the following non-zero matrix elements in the basis
\((e_c)_{c\in B}\) of \(\C^M\):
\begin{align}
  [\mathscr L_{I,\Ibar}(v)]_{ac}&=v\delta_{ac}+
  \sum_{\beta\in\Ibar}\mathsf a_{c\beta}^{\dagger}\mathsf a_{a\beta},
  &&a,c\in I,
  \label{eq:contracted-L-ac}
  \\
  [\mathscr L_{I,\Ibar}(v)]_{a\beta}&=\mathsf a_{a\beta},
  &&a\in I,
  \beta\in\Ibar,
  \label{eq:contracted-L-a-beta}
  \\
  [\mathscr L_{I,\Ibar}(v)]_{\beta a}&=\mathsf a_{a\beta}^{\dagger},
  &&\beta\in\Ibar,
  a\in I,
  \label{eq:contracted-L-beta-a}
  \\
  [\mathscr L_{I,\Ibar}(v)]_{\beta\gamma}&=\delta_{\beta\gamma},
  &&\beta,
  \gamma\in\Ibar.
  \label{eq:contracted-L-components}
\end{align}
Equivalently, when it is displayed as an \(M\times M\) array of operators,
\begin{equation}
  \mathscr L_{I,\Ibar}(v)
  =
  \begin{pmatrix}
    \left(v\delta_{ac}+\displaystyle\sum_{\beta\in\Ibar}
    \mathsf a_{c\beta}^{\dagger}\mathsf a_{a\beta}\right)_{a,c\in I}
    & (\mathsf a_{a\beta})_{a\in I,\beta\in\Ibar}\\[1mm]
    (\mathsf a_{a\beta}^{\dagger})_{\beta\in\Ibar,a\in I}
    & (\delta_{\beta\gamma})_{\beta,\gamma\in\Ibar}
  \end{pmatrix}.
  \label{eq:contracted-L}
\end{equation}

The displayed block form assumes that the basis of \(\C^M\) is ordered with
all vectors \(e_a\), \(a\in I\), first and all vectors \(e_\beta\),
\(\beta\in\Ibar\), afterwards.  
This is the \(L\)-operator introduced in \cite{BFLMS2011} for the
construction of \(Q\)-operators.  For the precise relationship between the
two sets of conventions, see \ref{app:BFLMS-local-comparison}.

We now explain precisely how the leading matrix elements in
\eqref{eq:lead-AA}--\eqref{eq:lead-barbar} are used.  Define only the
factors which contain the operators \(\mathsf A_\beta\) and
\(\mathsf A_\beta^\dagger\):
\begin{align}
  \mathcal K_{ac}^{(\varepsilon)}&=1,
  &&a,c\in I,
  \label{eq:K-AA}
  \\
  \mathcal K_{a\beta}^{(\varepsilon)}&=\sqrt{\varepsilon_\beta}\mathsf A_\beta^{\dagger},
  &&a\in I,
  \beta\in\Ibar,
  \label{eq:K-Abar}
  \\
  \mathcal K_{\beta a}^{(\varepsilon)}&=\sqrt{\varepsilon_\beta}\mathsf A_\beta,
  &&\beta\in\Ibar,
  a\in I,
  \label{eq:K-barA}
  \\
  \mathcal K_{\beta\gamma}^{(\varepsilon)}&=\delta_{\beta\gamma}\varepsilon_\beta \mathsf A_\beta^{\dagger}\mathsf A_\beta,
  &&\beta,
  \gamma\in\Ibar.
  \label{eq:K-barbar}
\end{align}
Then \eqref{eq:lead-AA}--\eqref{eq:lead-barbar} and
\eqref{eq:contracted-L-ac}--\eqref{eq:contracted-L-components} give the
single formula
\begin{equation}
  [\widetilde L^{\rm lead}(v)]_{cd}
  =\mathcal K_{cd}^{(\varepsilon)}
  [\mathscr L_{I,\Ibar}(v)]_{cd}.
  \label{eq:lead-KL-factorization}
\end{equation}
The operator \(\mathcal K_{cd}^{(\varepsilon)}\) is built from
\(\mathsf A_\beta\) and \(\mathsf A_\beta^{\dagger}\),
\(\beta\in\Ibar\), whereas
\([\mathscr L_{I,\Ibar}(w)]_{ef}\) is built from
\(\mathsf a_{a\beta}\) and \(\mathsf a_{a\beta}^{\dagger}\) with
\(a\in I\) and \(\beta\in\Ibar\), or is a scalar.  Since
\(I\cap\Ibar=\varnothing\), the canonical commutation
relations give
\begin{equation}
  [\mathsf A_\beta,\mathsf a_{a\gamma}]=[\mathsf A_\beta,\mathsf a_{a\gamma}^{\dagger}]
  =[\mathsf A_\beta^{\dagger},\mathsf a_{a\gamma}]
  =[\mathsf A_\beta^{\dagger},\mathsf a_{a\gamma}^{\dagger}]=0
  \quad(a\in I,\ \beta,\gamma\in\Ibar).
  \label{eq:K-commutes-L-basic}
\end{equation}
Consequently
\begin{equation}
  [\mathcal K_{cd}^{(\varepsilon)},[\mathscr L_{I,\Ibar}(w)]_{ef}]=0
  \quad(c,d,e,f\in B).
  \label{eq:K-commutes-L}
\end{equation}
For words \(\mathbf c=(c_1,\ldots,c_L)\) and
\(\mathbf d=(d_1,\ldots,d_L)\), with \(v_\ell=u-\theta_\ell\),
\eqref{eq:lead-KL-factorization} and \eqref{eq:K-commutes-L} imply the
ordered factorization
\begin{align}
&[\widetilde L^{\rm lead}(v_L)]_{c_Ld_L}\cdots
  [\widetilde L^{\rm lead}(v_1)]_{c_1d_1}
\notag\\
&=
  \mathcal K_{c_Ld_L}^{(\varepsilon)}\cdots
  \mathcal K_{c_1d_1}^{(\varepsilon)}
  [\mathscr L_{I,\Ibar}(v_L)]_{c_Ld_L}\cdots
  [\mathscr L_{I,\Ibar}(v_1)]_{c_1d_1}.
  \label{eq:leading-product-factorization}
\end{align}
The factors \(\mathcal K\) are kept in the order
\(\mathcal K_{c_Ld_L}^{(\varepsilon)}\cdots
\mathcal K_{c_1d_1}^{(\varepsilon)}\), the same order as the factors in the
left-hand side of \eqref{eq:leading-product-factorization}.  We only move them
past the commuting matrix elements of
\(\mathscr L_{I,\Ibar}\).  The last product in
\eqref{eq:leading-product-factorization} is precisely the matrix element
\begin{equation}
  \left[
  \mathscr L_{I,\Ibar,L}(v_L)\cdots
  \mathscr L_{I,\Ibar,1}(v_1)
  \right]_{\mathbf c,\mathbf d}.
  \label{eq:contracted-monodromy-matrix-element-from-L}
\end{equation}

\section{The Q-operators \texorpdfstring{$Q_I$}{QI} from the contracted \texorpdfstring{$L$}{L}-operator}
\label{sec:Q-contraction-theorem}

This section proves Proposition~\ref{prop:main-contraction} by computing its matrix elements.  The proof uses only the occupation-number basis of the Fock representation, the finite expansion of the monodromy matrix, and Lemmas~\ref{lem:one-critical-trace} and \ref{lem:regular-trace-away-from-one}.  The normalization of the Q-operator \(Q_I(u)\) below is compared with the oscillator construction of Ref.~\cite{BFLMS2011}; the residue construction follows the conventions of Refs.~\cite{AKLTZ2013,KLT2012}. 
The trace with additional Schwinger bosons is treated separately in
\ref{app:nontrivial-glI}; there we prove only the Schur function
expansion and show that the additional oscillators contribute an overall scalar factor.

\subsection{Definition of \texorpdfstring{$Q_I$}{QI}}

The finite diagonal factor acting on the Fock representation \(\F_{I,\Ibar}\) is the operator \(D_{I,\Ibar}\) defined in
 \eqref{eq:DAB-derived}.

\begin{definition}
The $Q$-operator attached to $I$ (cf. \cite{BFLMS2011}) 
is defined by
\begin{equation}
  Q_I(u)
  =\Tr_{\F_{I,\Ibar}}
  \left[
  \mathscr L_{I,\Ibar,L}(u-\theta_L)\cdots
  \mathscr L_{I,\Ibar,1}(u-\theta_1)
  D_{I,\Ibar}
  \right].
  \label{eq:Sch-Q-def}
\end{equation}
Here $\mathscr L_{I,\Ibar,\ell}$ denotes the embedding of \eqref{eq:contracted-L} into the $\ell$-th quantum site, as in \eqref{eq:embedded-local-L}.
\end{definition}

Evaluating the trace in Eq.~\eqref{eq:Sch-Q-def} in the
occupation-number basis shows that it converges when
\(|g_a/g_\beta|<1\) for all \(a\in I\) and \(\beta\in\Ibar\).  Outside
this domain, each matrix element is defined by meromorphic continuation or
by the corresponding formal power series in the variables \(g_a/g_\beta\).

The oscillators $\mathsf a_{\gamma\beta}$ with
$\gamma,\beta\in\Ibar$ and $\gamma\ne\beta$ do not appear in
\eqref{eq:contracted-L}; tracing over them in the proof of
Proposition~\ref{prop:main-contraction} gives the factor $C_{\Ibar}$ defined
in \eqref{eq:CIbar-def}.

\subsection{Operators that count colours}

For each colour label $i\in B$, define
\begin{equation}
  \mathsf M_i=\sum_{\ell=1}^L E_{ii}^{(\ell)}.
  \label{eq:Mbeta-count}
\end{equation}
Let
\begin{equation}
  |\mathbf d\rangle=|d_1,\ldots,d_L\rangle
  =e_{d_1}\otimes\cdots\otimes e_{d_L},
  \qquad
  |\mathbf c\rangle=|c_1,\ldots,c_L\rangle
  =e_{c_1}\otimes\cdots\otimes e_{c_L},
\end{equation}
be basis vectors of the quantum space, where
\(\mathbf d=(d_1,\ldots,d_L)\) and \(\mathbf c=(c_1,\ldots,c_L)\) are words in
\(B\).  For every \(i\in B\), set
\begin{equation}
  m_i(\mathbf d)=\#\{\ell\mid d_\ell=i\},
  \qquad
  m_i(\mathbf c)=\#\{\ell\mid c_\ell=i\}.
  \label{eq:m-beta-words}
\end{equation}
Then
\begin{equation}
  \mathsf M_i|\mathbf d\rangle
  =m_i(\mathbf d)|\mathbf d\rangle,
  \qquad
  \mathsf M_i|\mathbf c\rangle
  =m_i(\mathbf c)|\mathbf c\rangle,
  \qquad i\in B.
  \label{eq:Mbeta-on-word-basis}
\end{equation}

We now derive the two commutation relations used below.  For
\(\beta\in\Ibar\), set
\[
  \mathsf N^I_\beta:=\sum_{a\in I}\mathsf N_{a\beta}.
\]
Equation \eqref{eq:contracted-L} gives
\[
 \left[
 -\mathsf N^I_\beta\otimes\Id
 +\Id\otimes E_{\beta\beta},
 \mathscr L_{I,\Ibar}(v)
 \right]=0,
 \qquad \beta\in\Ibar.
\]
Therefore
\[
 \left[
 -\mathsf N^I_\beta\otimes\Id
 +\Id\otimes\mathsf M_\beta,
 \mathscr L_{I,\Ibar,L}(u-\theta_L)\cdots
 \mathscr L_{I,\Ibar,1}(u-\theta_1)
 \right]=0.
\]
Since \(D_{I,\Ibar}\) commutes with \(\mathsf N^I_\beta\), taking the trace
of the preceding relation gives
\begin{equation}
  [Q_I(u),\mathsf M_\beta]=0,
  \qquad \beta\in\Ibar.
  \label{eq:Q-commutes-Mbeta}
\end{equation}
Taking the matrix element of Eq.~\eqref{eq:Q-commutes-Mbeta} between
\(|\mathbf c\rangle\) and \(|\mathbf d\rangle\), and using
Eqs.~\eqref{eq:m-beta-words}, \eqref{eq:Mbeta-on-word-basis}, gives
\begin{align*}
  0
  &=\langle\mathbf c|[Q_I(u),\mathsf M_\beta]|\mathbf d\rangle \\
  &=\bigl(m_\beta(\mathbf d)-m_\beta(\mathbf c)\bigr)
    [Q_I(u)]_{\mathbf c,\mathbf d}.
\end{align*}
Thus, imposing this relation for every \(\beta\in\Ibar\), one obtains
\begin{equation}
  [Q_I(u)]_{\mathbf c,\mathbf d}=0
  \quad\text{unless}\quad
  m_\beta(\mathbf c)=m_\beta(\mathbf d)
  \quad\text{for all }\beta\in\Ibar.
  \label{eq:Q-preserves-Mbeta}
\end{equation}

For \(L^{(\Ibar)}(v)\), fix \(i\in B\) and set
\[
  \mathsf N^B_i:=\sum_{\gamma\in\Ibar}\mathsf N_{i\gamma}.
\]
Equation \eqref{eq:local-L-component} gives
\[
 \left[
 \mathsf N^B_i\otimes\Id
 +\Id\otimes E_{ii},
 L^{(\Ibar)}(v)
 \right]=0,
 \qquad i\in B.
\]
The same argument, applied to the product of the \(L^{(\Ibar)}\)-operators
and to the trace defining \(\mathcal T(u;\mathbf y_{\Ibar})\), gives
\begin{equation}
  [\mathcal T(u;\mathbf y_{\Ibar}),\mathsf M_i]=0,
  \qquad i\in B.
  \label{eq:commute-Mbeta-detailed}
\end{equation}

\subsection{Highest-pole coefficient}

Define
\begin{equation}
  \operatorname{HP}_{\Ibar}\mathcal T(u;\mathbf y_{\Ibar})
  =\lim_{\varepsilon_\beta\to0\;(\beta\in\Ibar)}
  \left(\prod_{\beta\in\Ibar}\varepsilon_\beta\right)
  S_{\varepsilon}^{\otimes L}
  \mathcal T(u;\mathbf y(\varepsilon))
  S_{\varepsilon}^{\otimes L}.
  \label{eq:HP-definition}
\end{equation}
Equation~\eqref{eq:commute-Mbeta-detailed} gives the equivalent
expression
\begin{equation}
  \operatorname{HP}_{\Ibar}\mathcal T(u;\mathbf y_{\Ibar})
  =\lim_{\varepsilon_\beta\to0\;(\beta\in\Ibar)}
  \left(
  \prod_{\beta\in\Ibar}\varepsilon_\beta^{\mathsf M_\beta+1}
  \right)
  \mathcal T(u;\mathbf y(\varepsilon)).
  \label{eq:HP-equivalent}
\end{equation}
The first form \eqref{eq:HP-definition} is more convenient for the proof because the matrix elements of the scaled $L$-operator are given explicitly by \eqref{eq:scaled-component-detailed}.

\begin{proposition}[Residue formula proved by a direct trace calculation in the occupation basis]
\label{prop:main-contraction}
For distinct eigenvalues $g_1,\ldots,g_M$ of the diagonal matrix $g$,
\begin{equation}
  \operatorname{HP}_{\Ibar}\mathcal T(u;\mathbf y_{\Ibar})
  =C_{\Ibar}\left(\prod_{\beta\in\Ibar}\Gamma(\mathsf M_\beta+1)\right)
  Q_I(u).
  \label{eq:main-contraction-formula}
\end{equation}
Equivalently, using the original variables \(y_\beta\),
\begin{align}
  Q_I(u)
  &=C_{\Ibar}^{-1}
  (-1)^{|\Ibar|}
  \left(\prod_{\beta\in\Ibar}g_\beta\right)
  \left(\prod_{\beta\in\Ibar}
  \Gamma(\mathsf M_\beta+1)^{-1}\right)
  \notag\\
  &\quad\times
  {\operatorname*{Res}}_{y_\beta=g_\beta^{-1},\,\beta\in\Ibar}
  \left[
  \prod_{\beta\in\Ibar}(1-g_\beta y_\beta)^{\mathsf M_\beta}
  \mathcal T(u;\mathbf y_{\Ibar})
  \prod_{\beta\in\Ibar}dy_\beta
  \right].
  \label{eq:main-residue-formula}
\end{align}
\end{proposition}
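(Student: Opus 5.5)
The plan is to compute both sides of \eqref{eq:main-contraction-formula} matrix element by matrix element, using the form \eqref{eq:HP-definition}. Fix words \(\mathbf c,\mathbf d\in B^L\) and write \(v_\ell=u-\theta_\ell\). By \eqref{eq:scaled-master-T}, the \((\mathbf c,\mathbf d)\) element of \(S_\varepsilon^{\otimes L}\mathcal T(u;\mathbf y(\varepsilon))S_\varepsilon^{\otimes L}\) is the Fock trace of
\[
[\widetilde L^{(\Ibar)}(v_L)]_{c_Ld_L}\cdots[\widetilde L^{(\Ibar)}(v_1)]_{c_1d_1}\,W_{\Ibar}(\mathbf y(\varepsilon)).
\]

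\textbf{Step 1: discard the remainder terms.} Split each factor as \(\widetilde L^{\rm lead}+\Delta L\), using \eqref{eq:Delta-L-definition}. This gives a finite sum of words. By the discussion following \eqref{eq:Delta-L-beta-gamma}, together with the factorization \eqref{eq:scaled-master-trace-factorization} and Lemmas~\ref{lem:one-critical-trace} and \ref{lem:regular-trace-away-from-one}, every word containing some \(\Delta L\) becomes \(o(1)\) after multiplication by \(\prod_\beta\varepsilon_\beta\).

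\textbf{Step 2: factorize the leading term.} For the purely leading word, apply \eqref{eq:leading-product-factorization}. Then use the decomposition \eqref{eq:WB-decomposition} of \(W_{\Ibar}\). The operators \(\mathcal K^{(\varepsilon)}\) and \(W_{\rm sing}\) involve only \(\mathsf A_\beta,\mathsf A_\beta^\dagger\). The matrix elements of \(\mathscr L_{I,\Ibar}\) and the factor \(W_{I,\Ibar}\) involve only \(\mathsf a_{a\beta}\) with \(a\in I\). The factor \(W^{\rm off}_{\Ibar,\Ibar}\) involves only the oscillators of \(\F_{\rm int}\). These three families commute with one another, so the trace factorizes into three pieces:
\begin{align*}
&\prod_{\beta\in\Ibar}\Tr\bigl[\mathcal K\text{-word in }\mathsf A_\beta\,(1-\varepsilon_\beta)^{\mathsf N_{\beta\beta}}\bigr]
\\
&\times\Tr_{\F_{I,\Ibar}}\bigl[(\mathscr L\cdots\mathscr L)_{\mathbf c,\mathbf d}\,W_{I,\Ibar}(\varepsilon)\bigr]
\times\Tr_{\F_{\rm int}}W^{\rm off}_{\Ibar,\Ibar}(\varepsilon).
\end{align*}

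\textbf{Step 3: evaluate the three pieces as \(\varepsilon_\beta\to0\).} By \eqref{eq:CIbar-def}, the third piece tends to \(C_{\Ibar}\). The second piece tends to \([Q_I(u)]_{\mathbf c,\mathbf d}\) by \eqref{eq:Sch-Q-def}. It is regular because \(g_a/g_\beta\neq 1\) (Lemma~\ref{lem:regular-trace-away-from-one}), and it is a finite sum of rational functions, so the limit may be taken termwise.

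For the first piece, count the factors contributed by \eqref{eq:K-Abar}--\eqref{eq:K-barbar}. A factor \(\sqrt{\varepsilon_\beta}\mathsf A_\beta^\dagger\) appears once for each \(\ell\) with \(d_\ell=\beta\). Here the diagonal \(\beta\beta\) entry supplies one \(\sqrt{\varepsilon_\beta}\mathsf A^\dagger_\beta\) and one \(\sqrt{\varepsilon_\beta}\mathsf A_\beta\). The entries \(\beta\gamma\) with \(\beta\neq\gamma\) are absent in \(\widetilde L^{\rm lead}\). Hence \(r_\beta=m_\beta(\mathbf d)\), and likewise \(s_\beta=m_\beta(\mathbf c)\). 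Lemma~\ref{lem:one-critical-trace} gives
\[
\lim_{\varepsilon_\beta\to0}\varepsilon_\beta\Tr[\cdots]=\delta_{m_\beta(\mathbf c),m_\beta(\mathbf d)}\,m_\beta(\mathbf d)!\,.
\]
The Kronecker delta is harmless, because \([Q_I(u)]_{\mathbf c,\mathbf d}\) already vanishes unless \(m_\beta(\mathbf c)=m_\beta(\mathbf d)\), by \eqref{eq:Q-preserves-Mbeta}. Since \(\mathsf M_\beta\) is diagonal in the word basis with eigenvalue \(m_\beta(\mathbf d)\) by \eqref{eq:Mbeta-on-word-basis}, the factor \(m_\beta(\mathbf d)!\) is \(\Gamma(\mathsf M_\beta+1)\) acting on \(|\mathbf d\rangle\). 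It commutes with \(Q_I(u)\) by \eqref{eq:Q-commutes-Mbeta}. Assembling the three pieces gives \eqref{eq:main-contraction-formula}.

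\textbf{Step 4: the residue form.} Start from \eqref{eq:HP-equivalent}. On the \(\mathsf M\)-eigenspaces, \(\prod_\beta(1-g_\beta y_\beta)^{\mathsf M_\beta}\,\mathcal T\) has at most a simple pole in each \(\varepsilon_\beta\), since the limit in \eqref{eq:HP-equivalent} exists. Write it near the pole as \(h/\varepsilon_\beta\) with \(\varepsilon_\beta=1-g_\beta y_\beta=-g_\beta(y_\beta-g_\beta^{-1})\). Then its residue in \(y_\beta\) equals \(-h/g_\beta\), that is, \(h=-g_\beta\operatorname{Res}\). Doing this for each \(\beta\in\Ibar\) produces the prefactor \((-1)^{|\Ibar|}\prod_\beta g_\beta\). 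Inverting \(C_{\Ibar}\) and the \(\Gamma\)-factors, which commute with \(Q_I(u)\), then yields \eqref{eq:main-residue-formula}.

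\textbf{Main obstacle.} The delicate point is the simultaneous multi-variable limit. One must ensure that the \(o(1)\) estimates for the remainder words, and the limit defining \(C_{\Ibar}\), do not depend on the order in which the \(\varepsilon_\beta\to0\). I would handle this by noting that every matrix element is a finite sum of products of one-variable rational functions, one for each oscillator trace, each evaluated at finite \(\varepsilon\). Each such product is a product of \(\varepsilon_\beta^{-1}\)-type factors, coming from Lemma~\ref{lem:one-critical-trace}, and functions jointly regular at \(\varepsilon=0\). For the joint regularity I would use that \(g_\gamma/g_\beta\neq1\) keeps all other poles away from the origin. Hence the joint limit exists and equals the iterated one.
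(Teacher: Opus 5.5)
Your proposal is correct and follows essentially the same route as the paper. You discard the words containing $\Delta L$, use the ordered $\mathcal K$--$\mathscr L$ factorization to split the leading trace into the critical, $\F_{I,\Ibar}$ and $\F_{\mathrm{int}}$ pieces, evaluate these by Lemma~\ref{lem:one-critical-trace}, the definition of $Q_I(u)$ and that of $C_{\Ibar}$, and then convert the limit into a residue via $\varepsilon_\beta=-g_\beta(y_\beta-g_\beta^{-1})$. The one harmless difference is that you keep $W_{I,\Ibar}(\varepsilon)$ and $W^{\mathrm{off}}_{\Ibar,\Ibar}(\varepsilon)$ inside the factorized traces and take each limit separately, whereas the paper first replaces them by their values at $\varepsilon=0$ using the difference estimates of \ref{app:finite-diagonal-factors}.
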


The spectrum of \(\mathsf M_\beta\) on \(\Hq_L\) is
\(\{0,1,\ldots,L\}\).  We therefore define
\(\Gamma(\mathsf M_\beta+1)\) by its action on the eigenspaces:
\[
  \Gamma(\mathsf M_\beta+1)\big|_{\mathsf M_\beta=m}=m!.
\]
It is invertible on \(\Hq_L\), with inverse acting by \((m!)^{-1}\) on
the same eigenspace.

\begin{remark}
For a complex number $a$, $\operatorname*{Res}_{z=a}f(z)\,dz$ denotes
the coefficient of $(z-a)^{-1}$ in the Laurent expansion of $f(z)$.  In
\eqref{eq:main-residue-formula} the residue is the multiple residue in the
variables \(y_\beta\), \(\beta\in\Ibar\).  If
\(\Ibar=\{\beta_1,\ldots,\beta_r\}\), it means the iterated residue
\[
  \operatorname*{Res}_{y_{\beta_1}=g_{\beta_1}^{-1}}
  \cdots
  \operatorname*{Res}_{y_{\beta_r}=g_{\beta_r}^{-1}},
\]
or equivalently the coefficient of
\(\prod_{\beta\in\Ibar}(y_\beta-g_\beta^{-1})^{-1}\) in the Laurent expansion.
The order is irrelevant here because the variables \(y_\beta\) are independent.
The residue is operator-valued.  If \(\Hq_L[\mathbf m]\) denotes the
subspace on which \(\mathsf M_\beta=m_\beta\) for all
\(\beta\in\Ibar\), then
\((1-g_\beta y_\beta)^{\mathsf M_\beta}\) acts on
\(\Hq_L[\mathbf m]\) as the scalar
\((1-g_\beta y_\beta)^{m_\beta}\), and the residue is the corresponding
Laurent coefficient in \(\End(\Hq_L[\mathbf m])\).
\end{remark}

We also remark that 
some special cases of \eqref{eq:main-contraction-formula} are checked explicitly in \ref{app:short-chain-checks}.

\begin{proof}
We prove the first formula by computing each matrix element.  Let
\begin{equation}
  v_\ell=u-\theta_\ell.
\end{equation}
For fixed words $\mathbf c=(c_1,\ldots,c_L)$ and $\mathbf d=(d_1,\ldots,d_L)$, the corresponding matrix element of the scaled trace in \eqref{eq:HP-definition} is
\begin{align}
&\left[\operatorname{HP}_{\Ibar}\mathcal T(u;\mathbf y_{\Ibar})\right]_{\mathbf c,\mathbf d}
\notag\\
&=
\lim_{\varepsilon_\beta\to0\;(\beta\in\Ibar)}
\left(\prod_{\beta\in\Ibar}\varepsilon_\beta\right)
\Tr_{\F_{B,\Ibar}}
\left[
[\widetilde L^{(\Ibar)}(v_L)]_{c_Ld_L}
\cdots
[\widetilde L^{(\Ibar)}(v_1)]_{c_1d_1}
W_{\Ibar}(\mathbf y(\varepsilon))
\right].
\label{eq:HP-matrix-element-start}
\end{align}
Subsection~\ref{subsec:leading-scaled-local-matrix} shows that every
term in Eq.~\eqref{eq:HP-matrix-element-start} containing a factor
\([\Delta L(v_\ell)]_{c_\ell d_\ell}\) does not contribute to the limit.
Hence only the matrix elements
\eqref{eq:lead-AA}--\eqref{eq:lead-barbar} need be retained.

Substituting \eqref{eq:leading-product-factorization} into
\eqref{eq:HP-matrix-element-start} and using the calculation in
\ref{app:finite-diagonal-factors}, we obtain
\begin{align}
&\left[\operatorname{HP}_{\Ibar}\mathcal T(u;\mathbf y_{\Ibar})\right]_{\mathbf c,\mathbf d}
\notag\\
&=
\lim_{\varepsilon_\beta\to0\;(\beta\in\Ibar)}
\left(\prod_{\beta\in\Ibar}\varepsilon_\beta\right)
\Tr_{\F_{B,\Ibar}}
\Bigl[
\mathcal K_{c_Ld_L}^{(\varepsilon)}\cdots
\mathcal K_{c_1d_1}^{(\varepsilon)}
\notag\\
&\hspace{34mm}\times
\left[
\mathscr L_{I,\Ibar,L}(v_L)\cdots
\mathscr L_{I,\Ibar,1}(v_1)
\right]_{\mathbf c,\mathbf d}
W_{\mathrm{sing}}(\varepsilon)
D_{I,\Ibar}
W_{\Ibar,\Ibar}^{\mathrm{off}}(0)
\Bigr].
\label{eq:HP-after-leading-substitution}
\end{align}
The omitted terms in \eqref{eq:HP-after-leading-substitution} are
divisible by at least one additional \(\varepsilon_\beta\), by
\eqref{eq:appendix-WAB-difference} and
\eqref{eq:appendix-Wint-difference}, and hence do not contribute to the
limit.

Let \(\F_{\mathrm{cr}}\) be the Fock representation generated by
\(\mathsf A_\beta,\mathsf A_\beta^\dagger\), \(\beta\in\Ibar\), and let
\(\F_{\mathrm{int}}\) be the Fock representation introduced before
\eqref{eq:CIbar-def}.  Since the three oscillator algebras in the trace
in \eqref{eq:HP-after-leading-substitution} commute, the trace becomes
the product of the following three traces:
\begin{align}
&\left[\operatorname{HP}_{\Ibar}\mathcal T(u;\mathbf y_{\Ibar})\right]_{\mathbf c,\mathbf d}
\notag\\
&=
\left(
\lim_{\varepsilon_\beta\to0\;(\beta\in\Ibar)}
\prod_{\beta\in\Ibar}\varepsilon_\beta
\Tr_{\F_{\mathrm{cr}}}
\left[
\mathcal K_{c_Ld_L}^{(\varepsilon)}\cdots
\mathcal K_{c_1d_1}^{(\varepsilon)}
W_{\mathrm{sing}}(\varepsilon)
\right]
\right)
\notag\\
&\quad\times
\Tr_{\F_{I,\Ibar}}
\left[
\left[
\mathscr L_{I,\Ibar,L}(v_L)\cdots
\mathscr L_{I,\Ibar,1}(v_1)
\right]_{\mathbf c,\mathbf d}
D_{I,\Ibar}
\right]
\notag\\
&\quad\times
\Tr_{\F_{\mathrm{int}}}
\left[W_{\Ibar,\Ibar}^{\mathrm{off}}(0)\right].
\label{eq:three-trace-factorization}
\end{align}
The first trace in \eqref{eq:three-trace-factorization} is evaluated by
Lemma~\ref{lem:one-critical-trace}.  In the ordered product
\(\mathcal K_{c_Ld_L}^{(\varepsilon)}\cdots
\mathcal K_{c_1d_1}^{(\varepsilon)}\), the sites with \(d_\ell=\beta\)
give \(m_\beta(\mathbf d)\) copies of
\(\sqrt{\varepsilon_\beta}\mathsf A_\beta^\dagger\), while the sites with
\(c_\ell=\beta\) give \(m_\beta(\mathbf c)\) copies of
\(\sqrt{\varepsilon_\beta}\mathsf A_\beta\). 
Hence
\begin{align}
&\lim_{\varepsilon_\beta\to0\;(\beta\in\Ibar)}
\prod_{\beta\in\Ibar}\varepsilon_\beta
\Tr_{\F_{\mathrm{cr}}}
\left[
\mathcal K_{c_Ld_L}^{(\varepsilon)}\cdots
\mathcal K_{c_1d_1}^{(\varepsilon)}
W_{\mathrm{sing}}(\varepsilon)
\right]
\notag\\
&\qquad=
\prod_{\beta\in\Ibar}
\delta_{m_\beta(\mathbf c),m_\beta(\mathbf d)}
\Gamma(m_\beta(\mathbf d)+1).
\label{eq:critical-trace-factor-result}
\end{align}
The second trace in \eqref{eq:three-trace-factorization} is, by the definition
\eqref{eq:Sch-Q-def}, the matrix element
\begin{equation}
\Tr_{\F_{I,\Ibar}}
\left[
\left[
\mathscr L_{I,\Ibar,L}(v_L)\cdots
\mathscr L_{I,\Ibar,1}(v_1)
\right]_{\mathbf c,\mathbf d}
D_{I,\Ibar}
\right]
=
\left[Q_I(u)\right]_{\mathbf c,\mathbf d}.
\label{eq:remaining-trace-is-Q}
\end{equation}
By the definition \eqref{eq:CIbar-def}, the third trace in
\eqref{eq:three-trace-factorization} is
\begin{equation}
\Tr_{\F_{\mathrm{int}}}
\left[W_{\Ibar,\Ibar}^{\mathrm{off}}(0)\right]
=C_{\Ibar}.
\label{eq:internal-trace-is-C}
\end{equation}
Substituting \eqref{eq:critical-trace-factor-result},
\eqref{eq:remaining-trace-is-Q}, and \eqref{eq:internal-trace-is-C} into
\eqref{eq:three-trace-factorization}, we obtain
\begin{align}
&\left[\operatorname{HP}_{\Ibar}\mathcal T(u;\mathbf y_{\Ibar})\right]_{\mathbf c,\mathbf d}
\notag\\
&=
C_{\Ibar}
\left(
\prod_{\beta\in\Ibar}
\delta_{m_\beta(\mathbf c),m_\beta(\mathbf d)}
\Gamma(m_\beta(\mathbf d)+1)
\right)
\left[Q_I(u)\right]_{\mathbf c,\mathbf d}.
\label{eq:matrix-element-final-direct}
\end{align}
By \eqref{eq:Q-preserves-Mbeta}, the matrix element of
\(Q_I(u)\) is already zero unless all the Kronecker deltas in
\eqref{eq:matrix-element-final-direct} are equal to one.  Therefore
\eqref{eq:matrix-element-final-direct} is precisely the matrix element form of
\begin{equation}
  \operatorname{HP}_{\Ibar}\mathcal T(u;\mathbf y_{\Ibar})
  =C_{\Ibar}
  \left(\prod_{\beta\in\Ibar}\Gamma(\mathsf M_\beta+1)\right)
  Q_I(u).
\end{equation}
This proves \eqref{eq:main-contraction-formula}.

It remains to convert the highest-pole coefficient into a residue in the original variable \(y_\beta\).  Let
\begin{equation}
  \varepsilon_\beta=1-g_\beta y_\beta,
  \qquad
  y_\beta=\frac{1-\varepsilon_\beta}{g_\beta},
  \qquad
  dy_\beta=-\frac{1}{g_\beta}d\varepsilon_\beta.
  \label{eq:y-epsilon-residue-change}
\end{equation}
Suppose that \(F(y_\beta)\) has a pole of order at most \(m+1\) at \(y_\beta=g_\beta^{-1}\).  Write
\begin{equation}
  F\left(\frac{1-\varepsilon_\beta}{g_\beta}\right)
  =\sum_{k=-m-1}^{\infty}f_k\varepsilon_\beta^k.
\end{equation}
Then
\begin{align}
&\operatorname*{Res}_{y_\beta=g_\beta^{-1}}
  \left[(1-g_\beta y_\beta)^mF(y_\beta)dy_\beta\right]
\notag\\
&=\operatorname*{Res}_{\varepsilon_\beta=0}
  \left[\varepsilon_\beta^m
  F\left(\frac{1-\varepsilon_\beta}{g_\beta}\right)
  \left(-\frac{d\varepsilon_\beta}{g_\beta}\right)
  \right]
\notag\\
&=-g_\beta^{-1}f_{-m-1}
\notag\\
&=-g_\beta^{-1}
  \lim_{\varepsilon_\beta\to0}
  \varepsilon_\beta^{m+1}
  F\left(\frac{1-\varepsilon_\beta}{g_\beta}\right).
  \label{eq:y-residue-to-limit-detailed}
\end{align}
Equivalently,
\begin{equation}
  \lim_{\varepsilon_\beta\to0}
  \varepsilon_\beta^{m+1}
  F\left(\frac{1-\varepsilon_\beta}{g_\beta}\right)
  =-g_\beta
  \operatorname*{Res}_{y_\beta=g_\beta^{-1}}
  \left[(1-g_\beta y_\beta)^mF(y_\beta)dy_\beta\right].
  \label{eq:y-limit-to-residue-detailed}
\end{equation}
Applying \eqref{eq:y-limit-to-residue-detailed} for each \(\beta\in\Ibar\), and then replacing \(m\) by the diagonal operator \(\mathsf M_\beta\) on each simultaneous eigenspace of the operators \(\mathsf M_\beta\), gives \eqref{eq:main-residue-formula}.

\end{proof}

\section[Schwinger realization and fixed occupation numbers]{From the Schwinger realization to the degenerate Yangian \texorpdfstring{$L$}{L}-operator by fixing occupation numbers}
\label{sec:HP-fixed-degree}

This section proves a statement which is related to, but different from, the residue formula of Proposition~\ref{prop:main-contraction}.  The residue formula concerns the trace over the full Fock representation and the limit $g_\beta y_\beta\to1$ for the selected flavour labels.  In contrast, the present section fixes the total occupation number for each such flavour label, sends these occupation numbers to infinity, and obtains the same \(L\)-operator \eqref{eq:contracted-L} from the Schwinger realization.  This gives the large-occupation-number limit of the $L$-operator written with Schwinger bosons.

For $M=2$, this limit is closely related to the Holstein--Primakoff
large-spin limit used in \cite{ShortcutQ}: after identifying the fixed
particle number $m$ with $2j$, it gives the same oscillator $L$-operator up
to elementary changes of convention.  In contrast to Ref.~\cite{ShortcutQ}, where the Holstein--Primakoff realization is used from the outset, here it is
derived from subspaces with fixed total particle numbers, and the construction
applies to arbitrary $M$ and $I\subset B$.

\subsection[Fixed particle-number sectors]{Sectors with fixed particle numbers and occupation-number coordinates}
\label{subsec:fixed-particle-coordinates}

For \(\beta\in\Ibar\), the total number of particles with
flavour label \(\beta\) is the operator \(\mathsf N_\beta\) defined in
\eqref{eq:flavour-number-operator}.  We fix a family of non-negative integers
\begin{equation}
  \mathbf m=(m_\beta)_{\beta\in\Ibar}.
\end{equation}
Whenever a limit in these integers is used, the notation
\begin{equation}
  m_\beta\to\infty\quad(\beta\in\Ibar)
  \label{eq:multi-m-limit-meaning}
\end{equation}
means that \(\min_{\beta\in\Ibar}m_\beta\to\infty\).  No ratio
\(m_\beta/m_\gamma\) is kept fixed.
The subspace on which the total number of particles with flavour label \(\beta\) is equal to \(m_\beta\), for each \(\beta\in\Ibar\), is
\begin{equation}
  \F_{\mathbf m}
  =
  \bigcap_{\beta\in\Ibar}
  \ker(\mathsf N_\beta-m_\beta)
  \subset \F_{B,\Ibar}.
  \label{eq:fixed-degree-sector}
\end{equation}
Equivalently,
\begin{equation}
  \F_{\mathbf m}
  \simeq
  \bigotimes_{\beta\in\Ibar}\Sym^{m_\beta}(V_B).
\end{equation}
This is a finite-dimensional subspace of the Fock representation generated by the Schwinger bosons.

We now describe this subspace by independent occupation numbers.  For a fixed \(\beta\in\Ibar\), the occupation number of the oscillator \(\mathsf a_{\beta\beta}\) is not independent once the total particle number \(m_\beta\) has been fixed.  Define
\begin{equation}
  \mathsf N^{\mathrm{off}}_\beta
  =
  \sum_{c\in B\setminus\{\beta\}}\mathsf a_{c\beta}^{\dagger}\mathsf a_{c\beta}.
  \label{eq:Noff-beta-HP}
\end{equation}
On \(\F_{\mathbf m}\), the identity \(\mathsf N_\beta=\mathsf N_{\beta\beta}+\mathsf N^{\mathrm{off}}_\beta\) gives
\begin{equation}
  \mathsf a_{\beta\beta}^{\dagger}\mathsf a_{\beta\beta}
  =
  m_\beta-\mathsf N^{\mathrm{off}}_\beta.
  \label{eq:abb-number-HP}
\end{equation}
Thus the independent occupation numbers are the numbers \(n_{c\beta}\) with \(c\ne\beta\).

Let \(\mathbf n=(n_{c\beta})\) denote a collection of non-negative integers indexed by pairs \((c,\beta)\) with \(\beta\in\Ibar\) and \(c\in B\setminus\{\beta\}\).  We impose the inequalities
\begin{equation}
  \sum_{c\in B\setminus\{\beta\}}n_{c\beta}\le m_\beta,
  \qquad \beta\in\Ibar.
  \label{eq:off-truncation-condition}
\end{equation}
For such \(\mathbf n\), put
\begin{equation}
  N^{\mathrm{off}}_\beta(\mathbf n)
  =
  \sum_{c\in B\setminus\{\beta\}}n_{c\beta}.
  \label{eq:Noff-on-basis}
\end{equation}
The symbol \(N^{\mathrm{off}}_\beta(\mathbf n)\) denotes an integer.  It is the eigenvalue of the number operator \(\mathsf N^{\mathrm{off}}_\beta\) on the basis vector labelled by \(\mathbf n\).  Thus \(m_\beta-N^{\mathrm{off}}_\beta(\mathbf n)\) is a non-negative integer by \eqref{eq:off-truncation-condition}.

Let \(|0\rangle_{\mathrm{off}}\) be the vector annihilated by all \(\mathsf a_{c\beta}\) with \(\beta\in\Ibar\) and \(c\ne\beta\).  For any \(\mathbf n\) satisfying \eqref{eq:off-truncation-condition}, define
\begin{equation}
  |\mathbf n\rangle_{\mathrm{off}}
  =
  \prod_{\beta\in\Ibar}
  \prod_{c\in B\setminus\{\beta\}}
  \frac{(\mathsf a_{c\beta}^{\dagger})^{n_{c\beta}}}{\sqrt{n_{c\beta}!}}
  |0\rangle_{\mathrm{off}}.
  \label{eq:off-basis-HP}
\end{equation}
The vectors \eqref{eq:off-basis-HP} form an orthonormal basis of their span.  We define
\begin{equation}
  \F_{\mathbf m}^{\mathrm{off}}
  =
  \operatorname{span}
  \left\{
  |\mathbf n\rangle_{\mathrm{off}}
  \;\middle|\;
  \mathbf n\text{ satisfies }\eqref{eq:off-truncation-condition}
  \right\}.
  \label{eq:Foff-definition-HP}
\end{equation}
The superscript \(\mathrm{off}\) indicates only that the occupation number of \(\mathsf a_{\beta\beta}\) has not been included among the independent variables.

Next define a vector in the fixed-particle-number subspace \(\F_{\mathbf m}\) by putting the dependent occupation number of \(\mathsf a_{\beta\beta}\) equal to \(m_\beta-N^{\mathrm{off}}_\beta(\mathbf n)\):
\begin{align}
  |\mathbf n;\mathbf m-\mathbf n\rangle
  &:=
  \prod_{\beta\in\Ibar}
  \frac{(\mathsf a_{\beta\beta}^{\dagger})^{m_\beta-N^{\mathrm{off}}_\beta(\mathbf n)}}
       {\sqrt{(m_\beta-N^{\mathrm{off}}_\beta(\mathbf n))!}}
  \notag\\
  &\quad\times
  \prod_{\beta\in\Ibar}
  \prod_{c\in B\setminus\{\beta\}}
  \frac{(\mathsf a_{c\beta}^{\dagger})^{n_{c\beta}}}{\sqrt{n_{c\beta}!}}
  |0\rangle_{B,\Ibar}.
  \label{eq:target-basis-HP}
\end{align}
The vectors \eqref{eq:target-basis-HP}, as \(\mathbf n\) runs over the solutions of \eqref{eq:off-truncation-condition}, form the occupation-number basis of \(\F_{\mathbf m}\).  Indeed, the fixed-particle-number condition is precisely
\begin{equation}
  n_{\beta\beta}+\sum_{c\in B\setminus\{\beta\}}n_{c\beta}=m_\beta,
  \qquad \beta\in\Ibar,
\end{equation}
so \(n_{\beta\beta}\) is uniquely determined by the numbers \(n_{c\beta}\), \(c\ne\beta\), and each basis vector of \(\F_{\mathbf m}\) is obtained in this way.

For one fixed \(\beta\), the number of choices of the integers \(n_{c\beta}\), \(c\ne\beta\), satisfying \eqref{eq:off-truncation-condition} is
\begin{equation}
  \#\left\{
  (n_{c\beta})_{c\ne\beta}
  \;\middle|\;
  n_{c\beta}\ge0,
  \sum_{c\ne\beta}n_{c\beta}\le m_\beta
  \right\}
  =
  \binom{m_\beta+|B|-1}{|B|-1}.
  \label{eq:dimension-one-beta-HP}
\end{equation}
The equality in \eqref{eq:dimension-one-beta-HP} follows by adding
\begin{equation}
  n_{\beta\beta}=m_\beta-
  \sum_{c\ne\beta}n_{c\beta},
\end{equation}
which gives a bijection with the non-negative solutions of \(\sum_{c\in B}n_{c\beta}=m_\beta\).  Hence
\begin{equation}
  \dim\F_{\mathbf m}^{\mathrm{off}}
  =
  \dim\F_{\mathbf m}
  =
  \prod_{\beta\in\Ibar}
  \binom{m_\beta+|B|-1}{|B|-1}.
  \label{eq:dimension-Foff-Fm-HP}
\end{equation}

Since the vector spaces \(\F_{\mathbf m}^{\mathrm{off}}\) and \(\F_{\mathbf m}\) have the same dimension by \eqref{eq:dimension-Foff-Fm-HP}, a linear isomorphism between them exists.  We use the simplest one: it sends the orthonormal basis vector \eqref{eq:off-basis-HP} to the corresponding orthonormal basis vector \eqref{eq:target-basis-HP}.  Thus we define
\begin{equation}
  \Upsilon_{\mathbf m}:\F_{\mathbf m}^{\mathrm{off}}\longrightarrow \F_{\mathbf m}
  \label{eq:Upsilon-def-HP}
\end{equation}
by
\begin{equation}
  \Upsilon_{\mathbf m}|\mathbf n\rangle_{\mathrm{off}}
  =
  |\mathbf n;\mathbf m-\mathbf n\rangle.
  \label{eq:Upsilon-on-basis-HP}
\end{equation}
 Its inverse is determined on the basis \eqref{eq:target-basis-HP} by
\begin{equation}
  \Upsilon_{\mathbf m}^{-1}|\mathbf n;\mathbf m-\mathbf n\rangle
  =
  |\mathbf n\rangle_{\mathrm{off}}.
  \label{eq:Upsilon-inverse-on-basis-HP}
\end{equation}

For later use, let \(\mathbf e_{c\beta}\) denote the multi-index which has value one at the pair \((c,\beta)\) and value zero at all other pairs.  Thus \(\mathbf n+\mathbf e_{c\beta}\) is obtained from \(\mathbf n\) by increasing \(n_{c\beta}\) by one, and \(\mathbf n-\mathbf e_{c\beta}\) is defined only when \(n_{c\beta}\ge1\).

We shall use the following identities, which are checked directly on the basis \eqref{eq:off-basis-HP}.  If \(c,d\ne\beta\), then
\begin{equation}
  \Upsilon_{\mathbf m}^{-1}\mathsf a_{c\beta}^{\dagger}\mathsf a_{d\beta}\Upsilon_{\mathbf m}
  =
  \mathsf a_{c\beta}^{\dagger}\mathsf a_{d\beta}.
  \label{eq:HP-cd-not-beta}
\end{equation}
For \(c\ne\beta\),
\begin{equation}
  \Upsilon_{\mathbf m}^{-1}\mathsf a_{c\beta}^{\dagger}\mathsf a_{\beta\beta}\Upsilon_{\mathbf m}
  =
  \mathsf a_{c\beta}^{\dagger}
  \sqrt{m_\beta-\mathsf N^{\mathrm{off}}_\beta}.
  \label{eq:HP-cbeta}
\end{equation}
Indeed, applying the left-hand side to \(|\mathbf n\rangle_{\mathrm{off}}\) gives
\begin{equation}
  \sqrt{m_\beta-N^{\mathrm{off}}_\beta(\mathbf n)}
  \sqrt{n_{c\beta}+1}
  |\mathbf n+\mathbf e_{c\beta}\rangle_{\mathrm{off}},
\end{equation}
which is the action of the right-hand side of \eqref{eq:HP-cbeta}.  Similarly, if \(d\ne\beta\), then
\begin{equation}
  \Upsilon_{\mathbf m}^{-1}\mathsf a_{\beta\beta}^{\dagger}\mathsf a_{d\beta}\Upsilon_{\mathbf m}
  =
  \sqrt{m_\beta-\mathsf N^{\mathrm{off}}_\beta}
  \,\mathsf a_{d\beta}.
  \label{eq:HP-betad}
\end{equation}
To check the order of the two factors in \eqref{eq:HP-betad}, apply the right-hand side to \(|\mathbf n\rangle_{\mathrm{off}}\):
\begin{align}
  \sqrt{m_\beta-\mathsf N^{\mathrm{off}}_\beta}
  \,\mathsf a_{d\beta}|\mathbf n\rangle_{\mathrm{off}}
  &=
  \sqrt{n_{d\beta}}
  \sqrt{m_\beta-(N^{\mathrm{off}}_\beta(\mathbf n)-1)}
  |\mathbf n-\mathbf e_{d\beta}\rangle_{\mathrm{off}}
  \notag\\
  &=
  \sqrt{n_{d\beta}}
  \sqrt{m_\beta-N^{\mathrm{off}}_\beta(\mathbf n)+1}
  |\mathbf n-\mathbf e_{d\beta}\rangle_{\mathrm{off}}.
\end{align}
This is also the result of applying \(\mathsf a_{\beta\beta}^{\dagger}\mathsf a_{d\beta}\) to the basis vector \eqref{eq:target-basis-HP} and then using \eqref{eq:Upsilon-inverse-on-basis-HP}.  Finally,
\begin{equation}
  \Upsilon_{\mathbf m}^{-1}\mathsf a_{\beta\beta}^{\dagger}\mathsf a_{\beta\beta}\Upsilon_{\mathbf m}
  =
  m_\beta-\mathsf N^{\mathrm{off}}_\beta.
  \label{eq:HP-betabeta}
\end{equation}
Equations \eqref{eq:HP-cd-not-beta}--\eqref{eq:HP-betabeta} are the formulas used below.  
They are the Holstein--Primakoff-type formulas, written here directly in the occupation-number basis and with the vector spaces specified explicitly.

\subsection[Direct contraction of the L-operator]{Direct contraction of the \texorpdfstring{$L$}{L}-operator}
We now apply the identification \eqref{eq:Upsilon-def-HP} to the
\(L\)-operator \eqref{eq:genuine-local-L}.  Define
\begin{equation}
  S_{\mathbf m}
  =\sum_{a\in I}E_{aa}+
    \sum_{\beta\in\Ibar}m_\beta^{-1/2}E_{\beta\beta},
  \label{eq:S-m-def}
\end{equation}
where \(m_\beta\in\Z_{>0}\) for every \(\beta\in\Ibar\).
It is convenient to denote its diagonal matrix elements by
\begin{equation}
  s_c=
  \begin{cases}
  1, & c\in I,\\[1mm]
  m_c^{-1/2}, & c\in\Ibar.
  \end{cases}
  \label{eq:s-c-def-HP}
\end{equation}

The rescaled \(L\)-operator transported to
\(\F_{\mathbf m}^{\mathrm{off}}\) is
\begin{equation}
  \widetilde L^{(\mathbf m)}(v)
  :=
  (\Upsilon_{\mathbf m}^{-1}\otimes \Id_{\C^M})
  (\Id_{\F_{\mathbf m}}\otimes S_{\mathbf m})
  L^{(\Ibar)}(v)
  (\Id_{\F_{\mathbf m}}\otimes S_{\mathbf m})
  (\Upsilon_{\mathbf m}\otimes \Id_{\C^M}).
  \label{eq:L-m-scaled-HP}
\end{equation}
Here \(S_{\mathbf m}\) acts on the \(\C^M\) on which the row and column
labels of the \(L\)-operator are carried, while
\(\Upsilon_{\mathbf m}\) identifies the two occupation-number spaces.
Hence the matrix element of \(\widetilde L^{(\mathbf m)}(v)\) with row
label \(c\) and column label \(d\) is
\begin{equation}
  [\widetilde L^{(\mathbf m)}(v)]_{cd}
  =
  s_cs_d\,
  \Upsilon_{\mathbf m}^{-1}
  [L^{(\Ibar)}(v)]_{cd}
  \Upsilon_{\mathbf m}.
  \label{eq:L-m-component-HP-first}
\end{equation}
Using \eqref{eq:local-L-component}, this becomes
\begin{equation}
  [\widetilde L^{(\mathbf m)}(v)]_{cd}
  =
  s_cs_d\left(
  v\delta_{cd}
  +
  \sum_{\alpha\in\Ibar}
  \Upsilon_{\mathbf m}^{-1}
  \mathsf a_{d\alpha}^{\dagger}\mathsf a_{c\alpha}
  \Upsilon_{\mathbf m}
  \right).
  \label{eq:L-m-component-HP}
\end{equation}
We compute the four types of matrix elements.  The arrows below are understood
on each basis vector \(|\mathbf n\rangle_{\mathrm{off}}\) whose occupation
numbers are fixed while the integers \(m_\beta\) tend to infinity in the sense
of \eqref{eq:multi-m-limit-meaning}.

First let \(a,c\in I\).  Then \(s_a=s_c=1\).  Since \(a,c\ne\beta\) for any
\(\beta\in\Ibar\), we obtain
\begin{align}
  [\widetilde L^{(\mathbf m)}(v)]_{ac}
  &=v\delta_{ac}+
  \sum_{\beta\in\Ibar}
  \Upsilon_{\mathbf m}^{-1}
  \mathsf a_{c\beta}^{\dagger}\mathsf a_{a\beta}
  \Upsilon_{\mathbf m}
  \notag\\
  &=v\delta_{ac}+\sum_{\beta\in\Ibar}
  \mathsf a_{c\beta}^{\dagger}\mathsf a_{a\beta}
  \qquad[\text{by }\eqref{eq:HP-cd-not-beta}].
  \label{eq:HP-I-I-block}
\end{align}
This matrix element does not depend on \(\mathbf m\).

Next let \(a\in I\) and \(\beta\in\Ibar\).  We have
\begin{align}
  [\widetilde L^{(\mathbf m)}(v)]_{a\beta}
  &=m_\beta^{-1/2}
  \sum_{\alpha\in\Ibar}
  \Upsilon_{\mathbf m}^{-1}
  \mathsf a_{\beta\alpha}^{\dagger}\mathsf a_{a\alpha}
  \Upsilon_{\mathbf m}
  \notag\\
  &=m_\beta^{-1/2}
  \Upsilon_{\mathbf m}^{-1}
  \mathsf a_{\beta\beta}^{\dagger}\mathsf a_{a\beta}
  \Upsilon_{\mathbf m}
  \notag\\
  &\quad+
  m_\beta^{-1/2}
  \sum_{\substack{\alpha\in\Ibar\\ \alpha\ne\beta}}
  \Upsilon_{\mathbf m}^{-1}
  \mathsf a_{\beta\alpha}^{\dagger}\mathsf a_{a\alpha}
  \Upsilon_{\mathbf m}.
  \label{eq:HP-I-Ibar-start}
\end{align}
By \eqref{eq:HP-betad}, the first term is
\begin{equation}
  \sqrt{1-\frac{\mathsf N_\beta^{\mathrm{off}}}{m_\beta}}\,
  \mathsf a_{a\beta}.
\end{equation}
For \(\alpha\ne\beta\), \eqref{eq:HP-cd-not-beta} gives
\begin{equation}
  \Upsilon_{\mathbf m}^{-1}
  \mathsf a_{\beta\alpha}^{\dagger}\mathsf a_{a\alpha}
  \Upsilon_{\mathbf m}
  =
  \mathsf a_{\beta\alpha}^{\dagger}\mathsf a_{a\alpha}.
\end{equation}
Thus the second term in \eqref{eq:HP-I-Ibar-start} has the explicit factor
\(m_\beta^{-1/2}\) and tends to zero.  Hence
\begin{equation}
  [\widetilde L^{(\mathbf m)}(v)]_{a\beta}
  \longrightarrow
  \mathsf a_{a\beta}.
  \label{eq:HP-I-Ibar-limit}
\end{equation}

Let \(\beta\in\Ibar\) and \(a\in I\).  Similarly,
\begin{align}
  [\widetilde L^{(\mathbf m)}(v)]_{\beta a}
  &=m_\beta^{-1/2}
  \sum_{\alpha\in\Ibar}
  \Upsilon_{\mathbf m}^{-1}
  \mathsf a_{a\alpha}^{\dagger}\mathsf a_{\beta\alpha}
  \Upsilon_{\mathbf m}
  \notag\\
  &=m_\beta^{-1/2}
  \Upsilon_{\mathbf m}^{-1}
  \mathsf a_{a\beta}^{\dagger}\mathsf a_{\beta\beta}
  \Upsilon_{\mathbf m}
  \notag\\
  &\quad+
  m_\beta^{-1/2}
  \sum_{\substack{\alpha\in\Ibar\\ \alpha\ne\beta}}
  \Upsilon_{\mathbf m}^{-1}
  \mathsf a_{a\alpha}^{\dagger}\mathsf a_{\beta\alpha}
  \Upsilon_{\mathbf m}.
  \label{eq:HP-Ibar-I-start}
\end{align}
By \eqref{eq:HP-cbeta}, the first term is
\begin{equation}
  \mathsf a_{a\beta}^{\dagger}
  \sqrt{1-\frac{\mathsf N_\beta^{\mathrm{off}}}{m_\beta}}.
\end{equation}
For \(\alpha\ne\beta\), \eqref{eq:HP-cd-not-beta} gives
\begin{equation}
  \Upsilon_{\mathbf m}^{-1}
  \mathsf a_{a\alpha}^{\dagger}\mathsf a_{\beta\alpha}
  \Upsilon_{\mathbf m}
  =
  \mathsf a_{a\alpha}^{\dagger}\mathsf a_{\beta\alpha}.
\end{equation}
Thus the second term in \eqref{eq:HP-Ibar-I-start} has the explicit factor
\(m_\beta^{-1/2}\) and tends to zero.  Hence
\begin{equation}
  [\widetilde L^{(\mathbf m)}(v)]_{\beta a}
  \longrightarrow
  \mathsf a_{a\beta}^{\dagger}.
  \label{eq:HP-Ibar-I-limit}
\end{equation}

Now let \(\beta\in\Ibar\).  For the \((\beta,\beta)\)-matrix element,
\begin{align}
  [\widetilde L^{(\mathbf m)}(v)]_{\beta\beta}
  &=
  m_\beta^{-1}
  \left(
  v+
  \sum_{\alpha\in\Ibar}
  \Upsilon_{\mathbf m}^{-1}
  \mathsf a_{\beta\alpha}^{\dagger}\mathsf a_{\beta\alpha}
  \Upsilon_{\mathbf m}
  \right)
  \notag\\
  &=m_\beta^{-1}v
  +m_\beta^{-1}(m_\beta-\mathsf N_\beta^{\mathrm{off}})
  \notag\\
  &\quad+
  m_\beta^{-1}
  \sum_{\substack{\alpha\in\Ibar\\ \alpha\ne\beta}}
  \mathsf a_{\beta\alpha}^{\dagger}\mathsf a_{\beta\alpha}
  \qquad[\text{by }\eqref{eq:HP-betabeta}\text{ and }\eqref{eq:HP-cd-not-beta}]
  \notag\\
  &=1+\frac{v-\mathsf N_\beta^{\mathrm{off}}}{m_\beta}
  +m_\beta^{-1}
  \sum_{\substack{\alpha\in\Ibar\\ \alpha\ne\beta}}
  \mathsf a_{\beta\alpha}^{\dagger}\mathsf a_{\beta\alpha}.
  \label{eq:HP-diagonal-Ibar}
\end{align}
Thus
\begin{equation}
  [\widetilde L^{(\mathbf m)}(v)]_{\beta\beta}
  \longrightarrow
  1.
  \label{eq:HP-diagonal-Ibar-limit}
\end{equation}

Finally let \(\beta,\gamma\in\Ibar\) with \(\beta\ne\gamma\).  Then
\begin{align}
  [\widetilde L^{(\mathbf m)}(v)]_{\beta\gamma}
  &=(m_\beta m_\gamma)^{-1/2}
  \sum_{\alpha\in\Ibar}
  \Upsilon_{\mathbf m}^{-1}
  \mathsf a_{\gamma\alpha}^{\dagger}\mathsf a_{\beta\alpha}
  \Upsilon_{\mathbf m}
  \notag\\
  &=(m_\beta m_\gamma)^{-1/2}
  \Upsilon_{\mathbf m}^{-1}
  \mathsf a_{\gamma\beta}^{\dagger}\mathsf a_{\beta\beta}
  \Upsilon_{\mathbf m}
  \notag\\
  &\quad+(m_\beta m_\gamma)^{-1/2}
  \Upsilon_{\mathbf m}^{-1}
  \mathsf a_{\gamma\gamma}^{\dagger}\mathsf a_{\beta\gamma}
  \Upsilon_{\mathbf m}
  \notag\\
  &\quad+(m_\beta m_\gamma)^{-1/2}
  \sum_{\substack{\alpha\in\Ibar\\ \alpha\ne\beta,\gamma}}
  \Upsilon_{\mathbf m}^{-1}
  \mathsf a_{\gamma\alpha}^{\dagger}\mathsf a_{\beta\alpha}
  \Upsilon_{\mathbf m}.
  \label{eq:HP-offdiag-Ibar-start}
\end{align}
Using \eqref{eq:HP-cbeta} in the first term gives
\begin{equation}
  (m_\beta m_\gamma)^{-1/2}
  \mathsf a_{\gamma\beta}^{\dagger}
  \sqrt{m_\beta-\mathsf N_\beta^{\mathrm{off}}}
  =
  m_\gamma^{-1/2}
  \mathsf a_{\gamma\beta}^{\dagger}
  \sqrt{1-\frac{\mathsf N_\beta^{\mathrm{off}}}{m_\beta}},
\end{equation}
which tends to zero.  Using \eqref{eq:HP-betad} in the second term gives
\begin{equation}
  (m_\beta m_\gamma)^{-1/2}
  \sqrt{m_\gamma-\mathsf N_\gamma^{\mathrm{off}}}\,
  \mathsf a_{\beta\gamma}
  =
  m_\beta^{-1/2}
  \sqrt{1-\frac{\mathsf N_\gamma^{\mathrm{off}}}{m_\gamma}}\,
  \mathsf a_{\beta\gamma},
\end{equation}
which also tends to zero.  For the remaining terms, \eqref{eq:HP-cd-not-beta}
gives expressions independent of \(m_\beta\) and \(m_\gamma\), multiplied by
\((m_\beta m_\gamma)^{-1/2}\).  Hence
\begin{equation}
  [\widetilde L^{(\mathbf m)}(v)]_{\beta\gamma}
  \longrightarrow
  0
  \qquad(\beta\ne\gamma).
  \label{eq:HP-offdiag-Ibar-limit}
\end{equation}

Combining \eqref{eq:HP-I-I-block}, \eqref{eq:HP-I-Ibar-limit},
\eqref{eq:HP-Ibar-I-limit}, \eqref{eq:HP-diagonal-Ibar-limit}, and
\eqref{eq:HP-offdiag-Ibar-limit}, we obtain the limit of matrix elements
\begin{equation}
  [\widetilde L^{(\mathbf m)}(v)]_{cd}
  \longrightarrow
  [\mathscr L_{I,\Ibar}(v)]_{cd}
  \qquad(c,d\in B),
  \label{eq:HP-local-L-limit}
\end{equation}
where \(\mathscr L_{I,\Ibar}(v)\) is the \(L\)-operator in \eqref{eq:contracted-L}.
The RLL relation is preserved by the same limit.  For finite
\(\mathbf m\), the restriction of \(L^{(\Ibar)}(v)\) to \(\F_{\mathbf m}\)
satisfies \eqref{eq:RLL-local-unscaled}.  Since \(S_{\mathbf m}\) is diagonal,
\[
 R(u)(S_{\mathbf m}\otimes S_{\mathbf m})
 =(S_{\mathbf m}\otimes S_{\mathbf m})R(u).
\]
Thus the rescaling by \(S_{\mathbf m}\) and the conjugation by
\(\Upsilon_{\mathbf m}\) in \eqref{eq:L-m-scaled-HP} preserve the RLL
relation.  Each matrix element of the RLL relation is a finite sum, so the limit
can be taken term by term.  Using \eqref{eq:HP-local-L-limit} gives
\begin{equation}
  \mathscr L_{I,\Ibar,12}(u)\mathscr L_{I,\Ibar,13}(v)R_{23}(v-u)
  =
  R_{23}(v-u)\mathscr L_{I,\Ibar,13}(v)\mathscr L_{I,\Ibar,12}(u).
  \label{eq:RLL-contracted-HP}
\end{equation}

\subsection[Fixed-particle-number traces]{Traces over the fixed-particle-number sectors}
\label{subsec:fixed-particle-traces}

We now pass from the $L$-operator limit \eqref{eq:HP-local-L-limit} to
traces.  The restriction of the diagonal boundary twist to
$\F_{\mathbf m}$ is
\begin{equation}
  \widehat g_{\mathbf m}
  =
  \prod_{\beta\in\Ibar}\prod_{c\in B}g_c^{\mathsf N_{c\beta}}.
  \label{eq:ghat-m-HP}
\end{equation}
For each $\beta\in\Ibar$, the condition
$\sum_{c\in B}\mathsf N_{c\beta}=m_\beta$ gives
\begin{align}
  \prod_{c\in B}g_c^{\mathsf N_{c\beta}}
  &=g_\beta^{m_\beta}
  \prod_{\substack{c\in B\\ c\ne\beta}}
  \left(\frac{g_c}{g_\beta}\right)^{\mathsf N_{c\beta}}.
  \label{eq:twist-factor-fixed-particle-beta}
\end{align}
By \eqref{eq:HP-cd-not-beta},
$\Upsilon_{\mathbf m}^{-1}\mathsf N_{c\beta}\Upsilon_{\mathbf m}
=\mathsf N_{c\beta}$ for $c\ne\beta$.  After removing the scalar
$\prod_{\beta\in\Ibar}g_\beta^{m_\beta}$, we define
$D_{\mathbf m}$ on $\F_{\mathbf m}^{\mathrm{off}}$ by
\begin{equation}
  D_{\mathbf m}
  :=
  \prod_{\beta\in\Ibar}
  \prod_{\substack{c\in B\\ c\ne\beta}}
  \left(\frac{g_c}{g_\beta}\right)^{\mathsf N_{c\beta}}.
  \label{eq:Dm-HP}
\end{equation}
Since
$B\setminus\{\beta\}=I\sqcup(\Ibar\setminus\{\beta\})$,
\begin{equation}
  D_{\mathbf m}
  =D_{I,\Ibar}\,W_{\Ibar,\Ibar}^{\mathrm{off}}(0).
  \label{eq:Dm-split-HP}
\end{equation}

Let
\begin{equation}
  \widetilde M_{\mathbf m}(u)
  =
  \widetilde L^{(\mathbf m)}_{L}(u-\theta_L)\cdots
  \widetilde L^{(\mathbf m)}_{1}(u-\theta_1)
  \label{eq:Mtilde-m-HP}
\end{equation}
and define
\begin{equation}
  \widetilde T_{\mathbf m}(u)
  =
  \Tr_{\F_{\mathbf m}^{\mathrm{off}}}
  \left[\widetilde M_{\mathbf m}(u)D_{\mathbf m}\right].
  \label{eq:T-m-fixed-degree}
\end{equation}

To state the limit, replace the ratios $g_c/g_\beta$ by independent
variables $\rho_{c\beta}$ and set
\begin{equation}
  D_{\mathbf m}(\boldsymbol\rho)
  =
  \prod_{\beta\in\Ibar}
  \prod_{\substack{c\in B\\ c\ne\beta}}
  \rho_{c\beta}^{\mathsf N_{c\beta}}.
  \label{eq:Dm-rho-HP}
\end{equation}
Write $\widetilde T_{\mathbf m}(u;\boldsymbol\rho)$ for
\eqref{eq:T-m-fixed-degree} with $D_{\mathbf m}$ replaced by
$D_{\mathbf m}(\boldsymbol\rho)$.  Likewise, let
$Q_I(u;\boldsymbol\rho)$ be obtained from \eqref{eq:Sch-Q-def} by replacing
$g_a/g_\beta$ with $\rho_{a\beta}$, and set
\[
  C_{\Ibar}(\boldsymbol\rho)
  =
  \prod_{\beta\in\Ibar}
  \prod_{\substack{\gamma\in\Ibar\\ \gamma\ne\beta}}
  \frac{1}{1-\rho_{\gamma\beta}}.
\]

For non-negative integers
$\boldsymbol\nu=(\nu_{c\beta})$, put
\begin{equation}
  \boldsymbol\rho^{\boldsymbol\nu}
  :=
  \prod_{\beta\in\Ibar}
  \prod_{\substack{c\in B\\ c\ne\beta}}
  \rho_{c\beta}^{\nu_{c\beta}},
  \label{eq:rho-monomial-HP}
\end{equation}
and denote extraction of its coefficient by
$[\boldsymbol\rho^{\boldsymbol\nu}]$.  Let
$|\boldsymbol\nu\rangle_{\mathrm{off}}$ be the vector
\eqref{eq:off-basis-HP} with $\mathbf n=\boldsymbol\nu$.  It belongs to
$\F_{\mathbf m}^{\mathrm{off}}$ if and only if
\begin{equation}
  \sum_{c\in B\setminus\{\beta\}}\nu_{c\beta}\le m_\beta,
  \qquad \beta\in\Ibar.
  \label{eq:truncation-in-trace-limit}
\end{equation}
For such $\mathbf m$, the diagonality of
$D_{\mathbf m}(\boldsymbol\rho)$ gives the identity in
$\End(\Hq_L)$
\begin{equation}
  [\boldsymbol\rho^{\boldsymbol\nu}]
  \widetilde T_{\mathbf m}(u;\boldsymbol\rho)
  =
  {}_{\mathrm{off}}\!\langle\boldsymbol\nu|
  \widetilde M_{\mathbf m}(u)
  |\boldsymbol\nu\rangle_{\mathrm{off}}.
  \label{eq:fixed-coefficient-as-matrix-element}
\end{equation}
For fixed $\boldsymbol\nu$, \eqref{eq:Mtilde-m-HP} is a finite product,
so the limits of the matrix elements in \eqref{eq:HP-local-L-limit} may
be taken successively.  By the definitions of
$Q_I(u;\boldsymbol\rho)$ and $C_{\Ibar}(\boldsymbol\rho)$, this gives
\begin{equation}
  \lim_{m_\beta\to\infty\;(\beta\in\Ibar)}
  [\boldsymbol\rho^{\boldsymbol\nu}]
  \widetilde T_{\mathbf m}(u;\boldsymbol\rho)
  =
  [\boldsymbol\rho^{\boldsymbol\nu}]
  \left(C_{\Ibar}(\boldsymbol\rho)
  Q_I(u;\boldsymbol\rho)\right).
  \label{eq:fixed-coefficient-limit}
\end{equation}
Since $\boldsymbol\nu$ is arbitrary, every coefficient of
$\widetilde T_{\mathbf m}(u;\boldsymbol\rho)$ tends to the corresponding
coefficient of $C_{\Ibar}(\boldsymbol\rho)Q_I(u;\boldsymbol\rho)$:
\[
  \widetilde T_{\mathbf m}(u;\boldsymbol\rho)
  \longrightarrow
  C_{\Ibar}(\boldsymbol\rho)Q_I(u;\boldsymbol\rho)
  \qquad
  \left(\min_{\beta\in\Ibar}m_\beta\to\infty\right).
\]
After taking the limit coefficient by coefficient, we set
$\rho_{c\beta}=g_c/g_\beta$ in the resulting rational functions.  This does
not assert a numerical limit at fixed $g_1,\ldots,g_M$.

The next subsection identifies $\widetilde T_{\mathbf m}(u)$ as a scaled
coefficient of the master $T$-operator in the variables $y_\beta$.

\subsection{Relation with the original master \texorpdfstring{$T$}{T}-operator and the residue point}
\label{subsec:fixed-particle-trace-from-master}

The trace in \eqref{eq:T-m-fixed-degree} comes from the master \(T\)-operator
by coefficient extraction in the variables \(y_\beta\).  This coefficient
extraction is different from the residue operation at \(g_\beta y_\beta=1\).
The present subsection explains the relation precisely.

Let \(R\) be a finite flavour set with \(B\subset R\).  We use the master
\(T\)-operator \(\mathcal T(u;\mathbf y_R)\) defined in
\eqref{eq:finite-master-def}.  For each \(\alpha\in R\), the operator \(\mathsf N_\alpha\) commutes with the
monodromy matrix \(M^{(R)}(u)\) in \eqref{eq:Schwinger-monodromy}.  Indeed,
\begin{equation}
  [\mathsf N_\alpha,\mathsf a_{c\gamma}^{\dagger}]
  =\delta_{\alpha\gamma}\mathsf a_{c\gamma}^{\dagger},
  \qquad
  [\mathsf N_\alpha,\mathsf a_{d\gamma}]
  =-\delta_{\alpha\gamma}\mathsf a_{d\gamma},
\end{equation}
and hence
\begin{align}
  [\mathsf N_\alpha,\mathsf a_{c\gamma}^{\dagger}\mathsf a_{d\gamma}]
  &=[\mathsf N_\alpha,\mathsf a_{c\gamma}^{\dagger}]\mathsf a_{d\gamma}
    +\mathsf a_{c\gamma}^{\dagger}[\mathsf N_\alpha,\mathsf a_{d\gamma}]
  \notag\\
  &=0.
  \label{eq:Nalpha-commutes-bilinears}
\end{align}
Each matrix element of \(L^{(R)}(v)\) is a scalar plus a linear combination of
these bilinears.  Therefore
\begin{equation}
  [\mathsf N_\alpha,M^{(R)}(u)]=0,
  \qquad \alpha\in R.
  \label{eq:Nalpha-commutes-monodromy}
\end{equation}
It follows that \(\F_{B,R}\) decomposes as
\begin{equation}
  \F_{B,R}
  =
  \bigoplus_{\mathbf n}\F_{B,R}[\mathbf n],
  \qquad
  \F_{B,R}[\mathbf n]
  :=
  \bigcap_{\alpha\in R}\ker(\mathsf N_\alpha-n_\alpha),
  \label{eq:full-flavour-degree-decomp}
\end{equation}
where \(\mathbf n=(n_\alpha)_{\alpha\in R}\) runs over all families of
non-negative integers indexed by \(R\).  
Since \(M^{(R)}(u)\), \(\widehat g\), and
\(\widehat h(\mathbf y_R)\) commute with all \(\mathsf N_\alpha\),
the operator under the trace in \eqref{eq:finite-master-def} preserves
\(\F_{B,R}[\mathbf n]\otimes\Hq_L\) for every \(\mathbf n\).

For a family \(\mathbf m=(m_\beta)_{\beta\in\Ibar}\) of non-negative integers,
write
\begin{equation}
  y^{\mathbf m}:=\prod_{\beta\in\Ibar}y_\beta^{m_\beta}.
\end{equation}
If
\begin{equation}
  F=
  \sum_{\substack{\mathbf n=(n_\beta)_{\beta\in\Ibar}\\
                  n_\beta\in\Z_{\ge0}}}
  F_{\mathbf n}\prod_{\beta\in\Ibar}y_\beta^{n_\beta},
\end{equation}
we set
\begin{equation}
  [y^{\mathbf m}]_{\Ibar}F:=F_{\mathbf m}.
  \label{eq:coef-y-m-definition-Ibar}
\end{equation}
Similarly, if \(S\subset R\), the notation \([y^{\mathbf n}]_S\) means
coefficient extraction in the variables \(y_\alpha\), \(\alpha\in S\), for a
family \(\mathbf n=(n_\alpha)_{\alpha\in S}\).  In particular,
\([y^{\mathbf0}]_{R\setminus\Ibar}\) extracts the part in which all flavour
labels outside \(\Ibar\) have total particle number zero.

Define
\begin{equation}
  T_{\mathbf m}(u)
  :=[y^{\mathbf m}]_{\Ibar}[y^{\mathbf0}]_{R\setminus\Ibar}
  \mathcal T(u;\mathbf y_R).
  \label{eq:fixed-trace-from-master-y}
\end{equation}
To identify this coefficient, note that
\([y^{\mathbf0}]_{R\setminus\Ibar}\) restricts the trace to the subspace
on which
\begin{equation}
  \mathsf N_\alpha=0,
  \qquad \alpha\in R\setminus\Ibar.
\end{equation}
For such an \(\alpha\), all oscillators with flavour label \(\alpha\)
have occupation number zero on this subspace.  Hence
\(\mathsf a_{d\alpha}\) annihilates every vector in it, and therefore
\begin{equation}
  \mathsf a_{c\alpha}^{\dagger}\mathsf a_{d\alpha}=0
  \qquad\hbox{on the subspace }\mathsf N_\alpha=0.
\end{equation}
Therefore the operators \(e_{cd}^{(R)}\) reduce to sums over the selected flavour labels:
\begin{equation}
  e_{cd}^{(R)}\big|_{\mathsf N_\alpha=0\ (\alpha\notin\Ibar)}
  =
  \sum_{\beta\in\Ibar}\mathsf a_{c\beta}^{\dagger}\mathsf a_{d\beta}
  =e_{cd}^{(\Ibar)}.
  \label{eq:eR-reduces-to-eIbar}
\end{equation}
The operation \([y^{\mathbf m}]_{\Ibar}\) then fixes
\begin{equation}
  \mathsf N_\beta=m_\beta,
  \qquad \beta\in\Ibar.
\end{equation}
Consequently,
\begin{equation}
  T_{\mathbf m}(u)
  =
  \Tr_{\F_{\mathbf m}}
  \left[
  M^{(\Ibar)}(u)
  \prod_{\beta\in\Ibar}\prod_{c\in B}g_c^{\mathsf N_{c\beta}}
  \right],
  \label{eq:coefficient-is-fixed-degree-trace}
\end{equation}
where \(M^{(\Ibar)}(u)\) is the monodromy matrix in
\eqref{eq:Schwinger-monodromy} with the flavour set equal to \(\Ibar\).

Using \eqref{eq:twist-factor-fixed-particle-beta}, the scalar factor coming from the boundary twist on \(\F_{\mathbf m}\) is \(\prod_{\beta\in\Ibar}g_\beta^{m_\beta}\).  Removing it gives the normalized coefficient
\begin{equation}
  T_{\mathbf m}^{\rm norm}(u)
  =
  \left(\prod_{\beta\in\Ibar}g_\beta^{-m_\beta}\right)
  [y^{\mathbf m}]_{\Ibar}[y^{\mathbf0}]_{R\setminus\Ibar}
  \mathcal T(u;\mathbf y_R).
  \label{eq:normalized-fixed-trace-from-master}
\end{equation}
If the scaling matrix \(S_{\mathbf m}\) in \eqref{eq:S-m-def} is inserted on
all quantum sites, then the trace in \eqref{eq:T-m-fixed-degree} is
\begin{equation}
  \widetilde T_{\mathbf m}(u)
  =
  \left(\prod_{\beta\in\Ibar}g_\beta^{-m_\beta}\right)
  S_{\mathbf m}^{\otimes L}
  \left(
  [y^{\mathbf m}]_{\Ibar}[y^{\mathbf0}]_{R\setminus\Ibar}
  \mathcal T(u;\mathbf y_R)
  \right)
  S_{\mathbf m}^{\otimes L}.
  \label{eq:scaled-fixed-trace-from-master}
\end{equation}
The equality with \eqref{eq:T-m-fixed-degree} follows from the definition of
\(\widetilde L^{(\mathbf m)}(v)\) in \eqref{eq:L-m-scaled-HP} and from the
invariance of trace under the change of basis \(\Upsilon_{\mathbf m}\).
Equations \eqref{eq:fixed-trace-from-master-y},
\eqref{eq:normalized-fixed-trace-from-master}, and
\eqref{eq:scaled-fixed-trace-from-master} express the traces over \(\mathcal F_{\mathbf m}\) as coefficients in the expansion of the master \(T\)-operator in the variables \(y_\beta\).
They are not residue formulas.  The residue formula of
Proposition~\ref{prop:main-contraction} concerns the behaviour of the whole
generating series near \(g_\beta y_\beta=1\), whereas here the particle numbers
\(m_\beta\) are fixed first and are sent to infinity only afterwards.

Thus this section proves the following three statements.  First, the \(L\)-operator
\eqref{eq:contracted-L} is obtained from the Schwinger \(L\)-operator by the
large-occupation-number limit of the matrix elements in
\eqref{eq:HP-local-L-limit}.  Second, the RLL relation is preserved in
that limit, as shown in \eqref{eq:RLL-contracted-HP}.  Third, the trace over the
subspace \(\F_{\mathbf m}\) is obtained from the master \(T\)-operator by the
coefficient extraction displayed in \eqref{eq:scaled-fixed-trace-from-master}.

\section{\texorpdfstring{$QQ$}{QQ}-relations from the mKP bilinear identity}
\label{sec:QQ-relations}

The \(QQ\)-relations are functional relations for Baxter
\(Q\)-operators, or for their eigenvalues, and may be viewed as
generalizations of quantum Wronskian conditions.  They appear in various
forms in the literature; early operator-valued examples can be found in
\cite{BLZ1999,BazhanovHibberdKhoroshkin2002}.  In this section we reformulate, in the conventions of the present paper, the derivation of the operator-valued \(QQ\)-relation following
\cite{AKLTZ2013,KLT2012}.

We first recall the normalization relating the trace \(Q_I(u)\) in
\eqref{eq:Sch-Q-def} to the oscillator construction of
\cite{BFLMS2011}; the precise comparison is given in
\ref{app:BFLMS-local-comparison}.  We then derive the \(QQ\)-relation from
the mKP bilinear identity for the master \(T\)-operator by taking residues in
the variables \(y_\beta\), \(\beta\in\Ibar\), at
\(y_\beta=g_\beta^{-1}\).

\subsection{Diagonal factor and normalized trace}

We use the diagonal factor \(D_{I,\Ibar}\) on \(\F_{I,\Ibar}\) defined in
\eqref{eq:DAB-derived}.  In the sense of the Fock-space trace fixed in
Section~\ref{sec:conventions},
\begin{align}
  \chi_{I,\Ibar}
  &=\Tr_{\F_{I,\Ibar}}D_{I,\Ibar}
  \notag\\
  &=\prod_{a\in I}\prod_{\beta\in\Ibar}
    \sum_{n\ge0}\left(\frac{g_a}{g_\beta}\right)^n
  =\prod_{a\in I}\prod_{\beta\in\Ibar}
    \frac1{1-g_a/g_\beta}.
  \label{eq:chi-I-Ibar}
\end{align}
The normalized version of \eqref{eq:Sch-Q-def} is
\begin{equation}
  \widehat Q_I(u)=\chi_{I,\Ibar}^{-1}Q_I(u).
  \label{eq:Qhat-normalized-def}
\end{equation}
This normalization is stated only for comparison with conventions used in papers on oscillator construction of $Q$-operators.

\subsection{\texorpdfstring{$QQ$}{QQ}-relations}
The next proposition states the $QQ$-relation for the residue-defined
operators \(\mathscr Q_J^{\mathrm{res}}(u)\).  Its proof uses Theorem~\ref{thm:CBR-mKP-route}, namely the mKP bilinear identity for the master \(T\)-operator.  In the specialization below, we use the coefficient of \(z^{-1}\) in the Laurent expansion in \(z^{-1}\).  The residues are then taken only with respect to the Miwa variables at the points \(y_\beta=g_\beta^{-1}\).

For a finite subset \(S\subset B\), set
\begin{equation}
  [y]=\left(y,\frac{y^2}{2},\frac{y^3}{3},\ldots\right),
  \qquad
  \wt(\mathbf y_S)=\sum_{\mu\in S}[y_\mu],
  \qquad
  \mathbf y_S=(y_\mu)_{\mu\in S}.
  \label{eq:Miwa-vector-for-subsets}
\end{equation}
By \eqref{eq:finite-master-is-Miwa-specialization},
\begin{equation}
  \mathcal T(u;\mathbf y_S)=\mathcal T(u,\wt(\mathbf y_S)).
  \label{eq:T-yS-from-Miwa-times}
\end{equation}
Here the flavour set is indicated by the subscript on \(\mathbf y_S\).

For any subset \(J\subset B\), put
\begin{equation}
  \overline J=B\setminus J.
\end{equation}
Define
\begin{align}
  \mathscr Q_J^{\mathrm{res}}(u)
  &:={\operatorname*{Res}}_{y_\beta=g_\beta^{-1},\,\beta\in\overline J}
  \left[
  \prod_{\beta\in\overline J}(1-g_\beta y_\beta)^{\mathsf M_\beta}
  \mathcal T(u;\mathbf y_{\overline J})
  \prod_{\beta\in\overline J}dy_\beta
  \right].
  \label{eq:residue-normalized-Q}
\end{align}
The residue in \eqref{eq:residue-normalized-Q} is understood in the same
operator-valued multiple-residue sense as in \eqref{eq:main-residue-formula}.
If \(\overline J=\emptyset\), the residue symbol and the product are omitted.
By Proposition~\ref{prop:main-contraction},
\begin{equation}
  \mathscr Q_J^{\mathrm{res}}(u)
  =\mathcal N_J Q_J(u),
  \qquad
  \mathcal N_J=(-1)^{|\overline J|}
  \left(\prod_{\beta\in\overline J}g_\beta^{-1}\right)
  C_{\overline J}
  \prod_{\beta\in\overline J}
  \Gamma(\mathsf M_\beta+1).
  \label{eq:Qres-to-Q-normalization}
\end{equation}
The factor \((-1)^{|\overline J|}\prod_{\beta\in\overline J}g_\beta^{-1}\) in
\(\mathcal N_J\) comes from \(dy_\beta=-g_\beta^{-1}d\varepsilon_\beta\), where
\(\varepsilon_\beta=1-g_\beta y_\beta\).

\begin{proposition}[The $QQ$-relation from the mKP bilinear identity]
\label{prop:QQ-relations}
Let \(I\subset B\), and let \(a,b\notin I\) be distinct colour labels.  Then
\begin{align}
  &(g_a-g_b)
  \mathscr Q_I^{\mathrm{res}}(u+1)
  \mathscr Q_{I\cup\{a,b\}}^{\mathrm{res}}(u)
  \notag\\
  &\qquad=
  g_a\,\mathscr Q_{I\cup\{a\}}^{\mathrm{res}}(u+1)
  \mathscr Q_{I\cup\{b\}}^{\mathrm{res}}(u)
  -g_b\,\mathscr Q_{I\cup\{b\}}^{\mathrm{res}}(u+1)
  \mathscr Q_{I\cup\{a\}}^{\mathrm{res}}(u).
  \label{eq:QQ-relation-residue-normalized}
\end{align}
\end{proposition}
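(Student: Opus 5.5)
Specialize the mKP bilinear identity of Theorem~\ref{thm:CBR-mKP-route} to $u'=u-1$ after shifting, and to Miwa-type times $\wt=\wt(\mathbf y_{\overline I})+[y_a]$-like configurations, so that it becomes a three-term Hirota (Fay) identity in Miwa variables, and then take residues. Concretely, I take $u-u'=1$, set $\wt=\wt_0+[y_a]$, $\wt'=\wt_0+[y_b]$ with $\wt_0=\wt(\mathbf y_{\overline J})$, $J=I\cup\{a,b\}$, and $\overline J=B\setminus J$. Here $e^{\xi(\wt-\wt',z)}=(1-y_b z)/(1-y_a z)$. Write $\mathcal T(u,\wt_0+[y_a]-[z^{-1}])$ and $\mathcal T(u',\wt_0+[y_b]+[z^{-1}])$ as formal series in $z^{-1}$. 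The coefficient of $z^{-1}$ of $z\,(1-y_bz)(1-y_az)^{-1}\cdots$ then comes from the pole at $z=y_a^{-1}$ and from the behaviour at $z=\infty$ (coefficient extraction as in the remark on the contour). The result is the standard three-term identity
\begin{align*}
&(y_a-y_b)\,\mathcal T(u+1;\mathbf y_{\overline J})\,\mathcal T(u;\mathbf y_{\overline J\cup\{a,b\}})\\
&\qquad=y_a\,\mathcal T(u+1;\mathbf y_{\overline J\cup\{b\}})\,\mathcal T(u;\mathbf y_{\overline J\cup\{a\}})
-y_b\,\mathcal T(u+1;\mathbf y_{\overline J\cup\{a\}})\,\mathcal T(u;\mathbf y_{\overline J\cup\{b\}}),
\end{align*}
possibly up to the exact placement of the $u$-shifts, which I fix by matching the $z^{u-u'}$ factor. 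Here $\mathcal T(u;\mathbf y_S)=\mathcal T(u,\wt(\mathbf y_S))$ by \eqref{eq:T-yS-from-Miwa-times}, and the two $\mathcal T$ factors commute since they are built from commuting fused transfer matrices. I would derive this by writing the Laurent coefficient as a difference of two evaluations, the standard route in \cite{AKLTZ2013,KLT2012}.

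Next, I take residues in all variables. First I multiply by $\prod_{\beta\in\overline J}(1-g_\beta y_\beta)^{\mathsf M_\beta}$ and take the multiple residue at $y_\beta=g_\beta^{-1}$, $\beta\in\overline J$. Then I do the same for $y_a,y_b$ at $g_a^{-1},g_b^{-1}$ with factors $(1-g_ay_a)^{\mathsf M_a}(1-g_by_b)^{\mathsf M_b}$. Since $\mathsf M_i$ commutes with every $\mathcal T$ by \eqref{eq:commute-Mbeta-detailed}, these weights can be distributed between the two factors. The key point is the pole bookkeeping. Each $\mathcal T$ containing $y_\beta$ has a pole of order at most $\mathsf M_\beta+1$ there, exactly as used in \eqref{eq:HP-equivalent}. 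A product with $y_\beta$ in both factors has order up to $2(\mathsf M_\beta+1)$, but the weight $(1-g_\beta y_\beta)^{\mathsf M_\beta}$ is designed for a single factor.

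This is the step I expect to be the main obstacle. My first approach is to note that the identity is bilinear with one factor depending on $y_\beta$ only through the shared $\wt_0$. I would therefore instead apply the residues to the generating identity with \emph{separate} Miwa copies $\wt_0,\wt_0'$, as in the bilinear identity itself, and restrict to the diagonal only after residues. Failing that, I would argue via the operator-valued grading that $\mathsf M_\beta$ splits across the product on the relevant eigenspaces. In $y_a$, only the terms with a $y_a$ factor contribute, and the scalar prefactor $y_a\to g_a^{-1}$ is evaluated at the pole; similarly for $y_b$.

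Finally, after the residues, the terms become $\mathscr Q^{\mathrm{res}}$ operators by the definition \eqref{eq:residue-normalized-Q}: the factor lacking $y_a$ and $y_b$ gives $\mathscr Q_{I\cup\{a,b\}}$ (complement $\overline J$), and the one carrying both gives $\mathscr Q_I$. The prefactors become $g_a^{-1}-g_b^{-1}$, $g_a^{-1}$ and $g_b^{-1}$. Multiplying by $g_ag_b$ yields $(g_b-g_a)$, $g_b$ and $g_a$. I then match the index placements and signs to \eqref{eq:QQ-relation-residue-normalized}, swapping the labels $a\leftrightarrow b$ if needed, which is allowed since both are arbitrary.
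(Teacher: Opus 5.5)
You have the right architecture (a three-term Hirota identity at $u-u'=1$, residues, identification with $\mathscr Q^{\mathrm{res}}$, multiplication by $g_ag_b$), but both main steps fail as proposed.

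\textbf{The specialization does not give a three-term identity.} Take your choice $\wt=\wt_0+[y_a]$, $\wt'=\wt_0+[y_b]$ in \eqref{eq:mKP-bilinear-paper}. The prefactor $z(1-y_bz)/(1-y_az)$ then grows like $z$ at infinity. Write $F(w)=\tau_{u+1}(\wt_0+[y_a]-[w])\,\tau_u(\wt_0+[y_b]+[w])=\sum_nF_nw^n$. The coefficient of $z^{-1}$ is
$y_a^{-3}\bigl[y_a^2y_bF_2+(y_a-y_b)(F(y_a)-F_0-y_aF_1)\bigr]$.
The Taylor coefficients $F_1,F_2$ are derivative terms, and only one of the two cross terms appears.

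You need the prefactor to decay like $z^{-1}$. So take spectral parameters $u+1$ and $u$ with $\wt=\ws+[y_a]+[y_b]$ and $\wt'=\ws$. Then $z/((1-y_az)(1-y_bz))=O(z^{-1})$, and only $F(y_a)$, $F(y_b)$, $F_0$ (for the correspondingly redefined $F$) survive. This gives
$(y_a-y_b)\tau_{u+1}(\ws+[y_a]+[y_b])\tau_u(\ws)=y_a\tau_{u+1}(\ws+[y_a])\tau_u(\ws+[y_b])-y_b\tau_{u+1}(\ws+[y_b])\tau_u(\ws+[y_a])$.

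This also settles the shift you left open: your displayed identity puts $u+1$ on the wrong factor. Check $M=1$, $L=1$, where $\mathcal T(u;\mathbf y_S)=\bigl(u-\theta_1+\sum_{\alpha\in S}\tfrac{gy_\alpha}{1-gy_\alpha}\bigr)\prod_{\alpha\in S}(1-gy_\alpha)^{-1}$. The two sides of your version differ by $2g^2y_ay_b(y_a-y_b)$ times a nonvanishing product of factors $(1-gy_\alpha)^{-1}$, whereas the correct identity holds. After residues your version would put $u+1$ on $\mathscr Q^{\mathrm{res}}_{I\cup\{a,b\}}$, and no relabelling $a\leftrightarrow b$ repairs that.

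\textbf{The residue step lacks its key device, and neither fallback supplies it.} In each of the three terms, every $y_\gamma$ with $\gamma\in\Ibar\setminus\{a,b\}$ enters \emph{both} factors. Write $F=\sum_rF_r\varepsilon^r$ and $G=\sum_sG_s\varepsilon^s$ for the two factors on the eigenspace $\mathsf M_\gamma=m$. The product can have a pole of order $2m+2$. With your weight $(1-g_\gamma y_\gamma)^{\mathsf M_\gamma}$, $\operatorname*{Res}[\varepsilon^mFG\,dy]$ picks up $\sum_{r+s=-m-1}F_rG_s$. That sum includes subleading Laurent coefficients, which are not $\mathscr Q^{\mathrm{res}}$'s.

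Your two fallbacks do not help:
\begin{itemize}
\item Separate copies $\ws\neq\ws'$ give $e^{\xi(\wt-\wt',z)}$ additional poles at $z=y_\gamma^{-1}$. The identity is then no longer three-term before you restrict to the diagonal.
\item $\mathsf M_\gamma$ cannot ``split'' across the product: both factors act on the same $\Hq_L$ with the same eigenvalue $m_\gamma$.
\end{itemize}

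The paper uses the weight $\varepsilon_a^{m_a}\varepsilon_b^{m_b}\prod_{\gamma\in\Ibar\setminus\{a,b\}}\varepsilon_\gamma^{2m_\gamma+1}$, with $\varepsilon_\mu=1-g_\mu y_\mu$. Each factor has pole order at most $m_\gamma+1$ in $y_\gamma$, so only the product of leading coefficients survives:
$\operatorname*{Res}[\varepsilon^{2m+1}FG\,dy]=-g^{-1}F_{-m-1}G_{-m-1}=(-g)\operatorname*{Res}[\varepsilon^mF\,dy]\operatorname*{Res}[\varepsilon^mG\,dy]$.
The scalar $\prod_\gamma(-g_\gamma)$ is common to all three terms and cancels.

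The rest of your plan is correct. For $y_a,y_b$, which occur in exactly one factor of each term, your weights $\varepsilon^{m}$ are right. So are the evaluation of $y_a-y_b$, $y_a$, $y_b$ at the poles, the identification of the residues with $\mathscr Q^{\mathrm{res}}$, and the final multiplication by $g_ag_b$ up to an overall sign.
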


\begin{proof}
Put
\begin{equation}
  \tau_u(\wt)=\mathcal T(u,\wt).
\end{equation}
In \eqref{eq:mKP-bilinear-paper}, replace the first spectral parameter by
\(u+1\), put the second one equal to \(u\), and choose
\begin{equation}
  \wt=\ws+[y_a]+[y_b],
  \qquad
  \wt'=\ws,
  \label{eq:Hirota-specialization-times-QQ}
\end{equation}
where \(y_a\) and \(y_b\) are independent variables.  Then
\begin{equation}
  e^{\xi([y_a]+[y_b],z)}
  =\exp\left(\sum_{k\ge1}\frac{y_a^k+y_b^k}{k}z^k\right)
  =\frac{1}{(1-y_a z)(1-y_b z)},
  \label{eq:Hirota-exponential-two-variables}
\end{equation}
where the last expression is expanded in non-negative powers of \(z\).

We write \(w=z^{-1}\), so that the shifts \(\pm[z^{-1}]\) in
\eqref{eq:mKP-bilinear-paper} become \(\pm[w]\).
 Let
\begin{equation}
  F(w)=
  \tau_{u+1}\bigl(\ws+[y_a]+[y_b]-[w]\bigr)
  \tau_u\bigl(\ws+[w]\bigr).
  \label{eq:Fw-def-before-expansion}
\end{equation}
Since \([w]=(w,w^2/2,w^3/3,\ldots)\), substitution of
\(\ws+[w]\) or \(\ws+[y_a]+[y_b]-[w]\) into each Schur function gives a
formal series in non-negative powers of \(w\).  Therefore, coefficient by
coefficient in the Schur function expansion of \(\tau\),
\begin{equation}
  F(w)=\sum_{n\ge0}F_nw^n.
  \label{eq:Fw-series-QQ-proof}
\end{equation}
Equation \eqref{eq:mKP-bilinear-paper} implies that the
coefficient of \(z^{-1}\) in
\begin{equation}
  \frac{z}{(1-y_a z)(1-y_b z)}F(z^{-1})
  \label{eq:coefficient-source-QQ-proof}
\end{equation}
is zero.  Expanding this expression gives
\begin{align}
  \frac{z}{(1-y_a z)(1-y_b z)}F(z^{-1})
  &=\sum_{r,s,n\ge0}y_a^r y_b^sF_n z^{r+s+1-n}.
\end{align}
The coefficient of \(z^{-1}\) is obtained from \(n=r+s+2\).  Hence
\begin{align}
  0
  &=\operatorname*{Coeff}_{z^{-1}}
  \left[
  \frac{z}{(1-y_a z)(1-y_b z)}F(z^{-1})
  \right]
  \notag\\
  &=\sum_{n\ge2}F_n\sum_{r+s=n-2}y_a^r y_b^s
  \notag\\
  &=\frac{1}{y_a-y_b}
  \sum_{n\ge2}F_n\left(y_a^{n-1}-y_b^{n-1}\right)
  \notag\\
  &=\frac{1}{y_a-y_b}
  \left[
  \frac{F(y_a)-F_0-y_aF_1}{y_a}
  -\frac{F(y_b)-F_0-y_bF_1}{y_b}
  \right]
  \notag\\
  &=\frac{y_bF(y_a)-y_aF(y_b)+(y_a-y_b)F_0}
  {y_a y_b (y_a-y_b)}.
  \label{eq:coefficient-calculation-QQ-proof}
\end{align}
Therefore the numerator in the last line of \eqref{eq:coefficient-calculation-QQ-proof} is zero.  Substituting the definition of
\(F\) gives
\begin{align}
  &(y_a-y_b)
  \tau_{u+1}(\ws+[y_a]+[y_b])\tau_u(\ws)
  \notag\\
  &\qquad=
  y_a\,\tau_{u+1}(\ws+[y_a])\tau_u(\ws+[y_b])
  -y_b\,\tau_{u+1}(\ws+[y_b])\tau_u(\ws+[y_a]).
  \label{eq:Hirota-specialization-for-QQ}
\end{align}
This is the only consequence of \eqref{eq:mKP-bilinear-paper} used below; see also \cite[Eq.~(21)]{Shigyo2013}.

Since \(a,b\notin I\), both labels belong to \(\Ibar=B\setminus I\).  We now set
\begin{equation}
  \ws=\sum_{\gamma\in\Ibar\setminus\{a,b\}}[y_\gamma].
  \label{eq:s-for-QQ-proof}
\end{equation}
All operators in \eqref{eq:Hirota-specialization-for-QQ} commute with the
operators \(\mathsf M_\mu\).  We work first on a simultaneous
eigenspace on which
\begin{equation}
  \mathsf M_\mu=m_\mu,
  \qquad
  \mu\in\Ibar,
  \label{eq:M-eigenvalues-for-QQ-proof}
\end{equation}
and then read the resulting equality on the direct sum of these eigenspaces.
Set
\begin{equation}
  \varepsilon_\mu=1-g_\mu y_\mu,
  \qquad \mu\in\Ibar.
\end{equation}
For a subset \(S\subset\Ibar\), Proposition~\ref{prop:main-contraction}
implies that, on the eigenspace \eqref{eq:M-eigenvalues-for-QQ-proof},
\(\mathcal T(v;\mathbf y_S)\) has a Laurent expansion of the form
\begin{equation}
  \mathcal T(v;\mathbf y_S)
  =\sum_{\substack{\mathbf n\in\Z^S\\ n_\mu\ge -m_\mu-1\text{ for all }\mu\in S}}
  T_{S,\mathbf n}(v)
  \prod_{\mu\in S}\varepsilon_\mu^{n_\mu}.
  \label{eq:TS-Laurent-expansion-for-QQ}
\end{equation}
Here \(\Z^S\) means the set of integer-valued functions on \(S\).  Formula
\eqref{eq:residue-normalized-Q} becomes, on the same eigenspace,
\begin{equation}
  \mathscr Q_{B\setminus S}^{\mathrm{res}}(v)
  =\left(\prod_{\mu\in S}(-g_\mu^{-1})\right)
  T_{S,\mathbf n^{\max}_S}(v),
  \qquad
  \mathbf n^{\max}_S(\mu)=-m_\mu-1.
  \label{eq:Qres-as-Laurent-coefficient}
\end{equation}

For an expression \(X\) depending on the variables \(y_\mu\), \(\mu\in\Ibar\), define
on the eigenspace \eqref{eq:M-eigenvalues-for-QQ-proof}
\begin{align}
  \mathcal R_{I;a,b}[X]
  &:={\operatorname*{Res}}_{y_\mu=g_\mu^{-1},\,\mu\in\Ibar}
  \left[
  \varepsilon_a^{m_a}\varepsilon_b^{m_b}
  \prod_{\gamma\in\Ibar\setminus\{a,b\}}\varepsilon_\gamma^{2m_\gamma+1}
  X\prod_{\mu\in\Ibar}dy_\mu
  \right].
  \label{eq:product-residue-operation-QQ-proof}
\end{align}
The power \(2m_\gamma+1\) is needed for \(\gamma\in\Ibar\setminus\{a,b\}\),
because the variable \(y_\gamma\) appears in both factors \(\tau_{u+1}(\cdots)\) and \(\tau_u(\cdots)\) in each term of \eqref{eq:Hirota-specialization-for-QQ}.  For each such \(\gamma\), we use the following residue calculation.  Write
\(\varepsilon=1-gy\) and suppose
\begin{equation}
  F(y)=\sum_{r\ge-m-1}F_r\varepsilon^r,
  \qquad
  G(y)=\sum_{s\ge-m-1}G_s\varepsilon^s.
  \label{eq:common-variable-expansion}
\end{equation}
Then
\begin{align}
  &\operatorname*{Res}_{y=g^{-1}}
  \left[\varepsilon^{2m+1}F(y)G(y)\,dy\right]
  \notag\\
  &\quad=
  \operatorname*{Res}_{\varepsilon=0}
  \left[
  \varepsilon^{2m+1}
  \left(\sum_{r\ge-m-1}F_r\varepsilon^r\right)
  \left(\sum_{s\ge-m-1}G_s\varepsilon^s\right)
  (-g^{-1}d\varepsilon)
  \right]
  \notag\\
  &\quad=-g^{-1}F_{-m-1}G_{-m-1}.
  \label{eq:product-residue-one-variable-explicit}
\end{align}
On the other hand,
\begin{align}
  \operatorname*{Res}_{y=g^{-1}}\left[\varepsilon^mF(y)\,dy\right]
  &=-g^{-1}F_{-m-1},
  \\
  \operatorname*{Res}_{y=g^{-1}}\left[\varepsilon^mG(y)\,dy\right]
  &=-g^{-1}G_{-m-1}.
  \label{eq:single-residue-one-variable-explicit}
\end{align}
Thus
\begin{align}
  &\operatorname*{Res}_{y=g^{-1}}
  \left[\varepsilon^{2m+1}F(y)G(y)\,dy\right]
  \notag\\
  &
  =(-g)
  \operatorname*{Res}_{y=g^{-1}}\left[\varepsilon^mF(y)\,dy\right]
  \operatorname*{Res}_{y=g^{-1}}\left[\varepsilon^mG(y)\,dy\right].
  \label{eq:residue-product-factor-minus-g}
\end{align}
Applying \eqref{eq:residue-product-factor-minus-g} independently to each
\(\gamma\in\Ibar\setminus\{a,b\}\) produces the scalar
\begin{equation}
  \prod_{\gamma\in\Ibar\setminus\{a,b\}}(-g_\gamma).
  \label{eq:common-product-factor-QQ}
\end{equation}
The scalar in \eqref{eq:common-product-factor-QQ} appears in the residue of each of the three terms in \eqref{eq:Hirota-specialization-for-QQ}.

We now evaluate \(\mathcal R_{I;a,b}\) on each term of
\eqref{eq:Hirota-specialization-for-QQ}.  The regular factors \(y_a-y_b\),
\(y_a\), and \(y_b\) are evaluated at \(y_a=g_a^{-1}\) and
\(y_b=g_b^{-1}\).  For the left-hand side of
\eqref{eq:Hirota-specialization-for-QQ},
\begin{align}
  \tau_{u+1}(\ws+[y_a]+[y_b])
  &=\mathcal T(u+1;\mathbf y_{\Ibar}),
  \\
  \tau_u(\ws)
  &=\mathcal T(u;\mathbf y_{\Ibar\setminus\{a,b\}}).
  \label{eq:first-product-Tsets-QQ-proof}
\end{align}
Using \eqref{eq:Qres-as-Laurent-coefficient} and
\eqref{eq:residue-product-factor-minus-g}, we get
\begin{align}
  &\mathcal R_{I;a,b}\!\left[
  (y_a-y_b)
  \tau_{u+1}(\ws+[y_a]+[y_b])\tau_u(\ws)
  \right]
  \notag\\
  &\quad=
  \left(\prod_{\gamma\in\Ibar\setminus\{a,b\}}(-g_\gamma)\right)
  (g_a^{-1}-g_b^{-1})
  \mathscr Q_I^{\mathrm{res}}(u+1)
  \mathscr Q_{I\cup\{a,b\}}^{\mathrm{res}}(u).
  \label{eq:first-product-residue-QQ-proof}
\end{align}
For the term \(y_a\tau_{u+1}(\ws+[y_a])\tau_u(\ws+[y_b])\) on the right-hand side of
\eqref{eq:Hirota-specialization-for-QQ},
\begin{align}
  \tau_{u+1}(\ws+[y_a])
  &=\mathcal T(u+1;\mathbf y_{\Ibar\setminus\{b\}}),
  \\
  \tau_u(\ws+[y_b])
  &=\mathcal T(u;\mathbf y_{\Ibar\setminus\{a\}}).
  \label{eq:second-product-Tsets-QQ-proof}
\end{align}
Therefore
\begin{align}
  &\mathcal R_{I;a,b}\!\left[
  y_a\,
  \tau_{u+1}(\ws+[y_a])\tau_u(\ws+[y_b])
  \right]
  \notag\\
  &\quad=
  \left(\prod_{\gamma\in\Ibar\setminus\{a,b\}}(-g_\gamma)\right)
  g_a^{-1}
  \mathscr Q_{I\cup\{b\}}^{\mathrm{res}}(u+1)
  \mathscr Q_{I\cup\{a\}}^{\mathrm{res}}(u).
  \label{eq:second-product-residue-QQ-proof}
\end{align}
For the term \(y_b\tau_{u+1}(\ws+[y_b])\tau_u(\ws+[y_a])\) on the right-hand side of
\eqref{eq:Hirota-specialization-for-QQ},
\begin{align}
  \tau_{u+1}(\ws+[y_b])
  &=\mathcal T(u+1;\mathbf y_{\Ibar\setminus\{a\}}),
  \\
  \tau_u(\ws+[y_a])
  &=\mathcal T(u;\mathbf y_{\Ibar\setminus\{b\}}).
  \label{eq:third-product-Tsets-QQ-proof}
\end{align}
Therefore
\begin{align}
  &\mathcal R_{I;a,b}\!\left[
  y_b\,
  \tau_{u+1}(\ws+[y_b])\tau_u(\ws+[y_a])
  \right]
  \notag\\
  &\quad=
  \left(\prod_{\gamma\in\Ibar\setminus\{a,b\}}(-g_\gamma)\right)
  g_b^{-1}
  \mathscr Q_{I\cup\{a\}}^{\mathrm{res}}(u+1)
  \mathscr Q_{I\cup\{b\}}^{\mathrm{res}}(u).
  \label{eq:third-product-residue-QQ-proof}
\end{align}
Apply \(\mathcal R_{I;a,b}\) to both sides of
\eqref{eq:Hirota-specialization-for-QQ}.  Substituting
\eqref{eq:first-product-residue-QQ-proof}--\eqref{eq:third-product-residue-QQ-proof}
and cancelling the scalar \eqref{eq:common-product-factor-QQ} gives
\begin{align}
  &(g_a^{-1}-g_b^{-1})
  \mathscr Q_I^{\mathrm{res}}(u+1)
  \mathscr Q_{I\cup\{a,b\}}^{\mathrm{res}}(u)
  \notag\\
  &\qquad=
  g_a^{-1}
  \mathscr Q_{I\cup\{b\}}^{\mathrm{res}}(u+1)
  \mathscr Q_{I\cup\{a\}}^{\mathrm{res}}(u)
  -g_b^{-1}
  \mathscr Q_{I\cup\{a\}}^{\mathrm{res}}(u+1)
  \mathscr Q_{I\cup\{b\}}^{\mathrm{res}}(u).
  \label{eq:QQ-before-multiplying-g}
\end{align}
Multiplying \eqref{eq:QQ-before-multiplying-g} by \(-g_ag_b\) gives
\eqref{eq:QQ-relation-residue-normalized}.
\end{proof}

\begin{remark}
The normalized operators \(\widehat Q_J(u)\) defined by
\eqref{eq:Qhat-normalized-def} satisfy the same relation as
\eqref{eq:QQ-relation-residue-normalized}, with each
\(\mathscr Q_J^{\mathrm{res}}\) replaced by \(\widehat Q_J\).  Indeed,
using \eqref{eq:Qres-to-Q-normalization}, one checks that
\begin{equation}
  \frac{\mathcal N_I\chi_{I,\overline I}\,
        \mathcal N_{I\cup\{a,b\}}\chi_{I\cup\{a,b\},\overline{I\cup\{a,b\}}}}
       {\mathcal N_{I\cup\{a\}}\chi_{I\cup\{a\},\overline{I\cup\{a\}}}\,
        \mathcal N_{I\cup\{b\}}\chi_{I\cup\{b\},\overline{I\cup\{b\}}}}
  =1.
  \label{eq:QQ-Qhat-normalization-ratio}
\end{equation}
For the unnormalized operators \(Q_J(u)\) in \eqref{eq:Sch-Q-def}, the same
substitution gives
\begin{align}
  &(g_a-g_b)
  (1-g_b/g_a)^{-1}(1-g_a/g_b)^{-1}
  Q_I(u+1)Q_{I\cup\{a,b\}}(u)
  \notag\\
  &\qquad=
  g_a\,Q_{I\cup\{a\}}(u+1)Q_{I\cup\{b\}}(u)
  -g_b\,Q_{I\cup\{b\}}(u+1)Q_{I\cup\{a\}}(u).
  \label{eq:QQ-relation-Q-normalization}
\end{align}
Thus the extra scalar factor in \eqref{eq:QQ-relation-Q-normalization} is only
a consequence of the normalization of \(Q_J(u)\).
\end{remark}

\section{Outlook}

In this paper, we defined the master $T$-operator as a trace over the Fock representation of Schwinger bosons and clarified the relation between two existing constructions of Baxter $Q$-operators.  
Although we demonstrated our method only for the simple toy model of a rational $\gl(M)$ spin chain, we expect that it can be applied and extended to more complicated models and problems.  
Among the various possible directions, our tentative goals include
the following.

The first direction is a Lie-algebraic extension, in particular to
superalgebras.  In view of the discussion in Ref.~\cite{KLT2012}, the
extension to $\gl(M|N)$ should be straightforward.  Degenerate Yangian $L$-operators in oscillator form and the corresponding $Q$-operators are known \cite{FLMS2011}, and the conventional master $T$-operator with its mKP interpretation has also been developed \cite{TsuboiZabrodinZotov2015}; what remains here is their realization within the present Schwinger-oscillator trace framework.  For $\mathfrak{osp}(M|2s)$, if the corresponding $L$-operator admits a representation in terms of Schwinger oscillators, a similar construction should also be possible.  
Recent constructions of degenerate Yangian \(L\)-operators in oscillator form of orthosymplectic type \cite{FrassekTsymbaliuk2024} provide concrete objects with which to compare the present approach.  Moreover, transfer-matrix eigenvalues and Baxter
$Q$-functions for $\mathfrak{osp}(M|2s)$ can be obtained by a certain folding of the corresponding results for $\gl(M|N)$ \cite{Tsuboi2024}.  We therefore expect that an analogous relation can be established at the operator level for quantities defined by traces, such as the $Q$-operators and $T$-operators discussed in this paper.

Second, the conventional master $T$-operator for trigonometric $R$-matrices is defined by a Schur function expansion and interpreted as a tau-function in Ref.~\cite{Zabrodin2013Trig}.  Its direct realization as a trace over a $q$-deformed Schwinger-boson representation should also be possible because $q$-deformations of the Schwinger-boson realization are known.  A relevant $q$-oscillator realization for quantum superalgebras is described in Ref.~\cite{FSV1991}. 
In particular, the large-occupation-number limit of
\(q\)-deformed Schwinger oscillators should give the \(L\)-operators that
enter the construction of \(Q\)-operators for \(U_q(\gl(M|N))\).
We expect that this viewpoint can also reformulate the corresponding
\(L\)-operators for super spin chains obtained in our previous papers
\cite{BazhanovTsuboi2008,Tsuboi2014,Tsuboi2019}. 
In representation-theoretic terms, the oscillator representations
underlying these \(L\)-operators may be regarded as superalgebraic
analogues of the representations appearing in
\cite{BLZ1999,BazhanovHibberdKhoroshkin2002,Kojima2008}.  They are also related to asymptotic representations of quantum affine (super)algebras
\cite{HernandezJimbo2012,Zhang2017Asymptotic}.
The relation between \(L\)-operators constructed from such representations and the universal \(R\)-matrix is explained, 
for example, in \cite{BoosGohmannKlumperNirovRazumov2010}.

Third, in the trigonometric setting, transfer matrices constructed from solutions of the tetrahedron equation may also be interpreted as a kind of master $T$-operator.  
Indeed, the $q$-oscillator solution of the tetrahedron equation and the transfer matrices derived from it in Ref.~\cite{BazhanovSergeev2006} have a mathematical structure similar to that of our master $T$-operator constructed from Schwinger bosons.

These three topics are closely interrelated, and one of the motivations
of this paper is to establish the methodology needed to address them.
We will discuss the details elsewhere.

\appendix

\section{Trace formulas for a single bosonic oscillator}
\label{app:direct-trace-asymptotics}

This appendix proves Lemmas~\ref{lem:one-critical-trace} and
\ref{lem:regular-trace-away-from-one}.  Both results follow from the same
trace formula.

Let \(\mathsf A,\mathsf A^{\dagger}\) be a single bosonic oscillator,
\([\mathsf A,\mathsf A^{\dagger}]=1\), and let
\(\mathsf N=\mathsf A^{\dagger}\mathsf A\).  Consider a finite product
\(U\) containing \(r\) factors \(\mathsf A^{\dagger}\) and \(s\) factors
\(\mathsf A\).  Repeated use of
\(\mathsf A\mathsf A^{\dagger}=\mathsf A^{\dagger}\mathsf A+1\) gives
\begin{equation}
  U=
  \sum_{j=0}^{\min\{r,s\}}
  c_j(\mathsf A^{\dagger})^{r-j}\mathsf A^{s-j},
  \qquad c_0=1.
  \label{eq:normal-order-one-product}
\end{equation}
The coefficients \(c_j\) depend on the order of the factors in \(U\).

For \(|\rho|<1\), summing the diagonal matrix elements in the
occupation-number basis gives
\begin{align}
  \Tr\left[
  \rho^{\mathsf N}(\mathsf A^{\dagger})^p\mathsf A^q
  \right]
  &=\delta_{pq}
  \sum_{n\ge p}n(n-1)\cdots(n-p+1)\rho^n
  \notag\\
  &=\delta_{pq}\,p!\,
  \frac{\rho^p}{(1-\rho)^{p+1}}.
  \label{eq:regular-trace-normal-monomial}
\end{align}
Combining this identity with \eqref{eq:normal-order-one-product}, we obtain
\begin{equation}
  \Tr\left[\rho^{\mathsf N}U\right]
  =
  \begin{cases}
  \displaystyle
  \sum_{j=0}^{r}
  c_j(r-j)!\,
  \dfrac{\rho^{r-j}}{(1-\rho)^{r-j+1}},
  & r=s,\\[4mm]
  0,& r\ne s.
  \end{cases}
  \label{eq:trace-general-one-product}
\end{equation}
For \(|\rho|<1\), these are ordinary convergent trace identities.  The
right-hand sides extend meromorphically in \(\rho\); their Taylor expansions
at \(\rho=0\) are the formal traces obtained by summing the diagonal matrix
elements.

The position of the diagonal factor can be changed without invoking
cyclicity.  Indeed,
\([\mathsf N,U]=(r-s)U\).  If \(r\ne s\), all diagonal matrix elements of
both \(\rho^{\mathsf N}U\) and \(U\rho^{\mathsf N}\) are zero.  If
\(r=s\), then \(U\) commutes with \(\rho^{\mathsf N}\).  Hence, in either
case,
\begin{equation}
  \Tr\left[\rho^{\mathsf N}U\right]
  =\Tr\left[U\rho^{\mathsf N}\right].
  \label{eq:trace-two-orders-one-product}
\end{equation}
This equality follows directly from diagonal matrix elements, not from a
cyclicity property of the formal trace.

\subsection*{Proof of Lemma~\ref{lem:one-critical-trace}}

Let \(U\) be obtained from \(W\) by removing the scalar factor
\(\sqrt\varepsilon\) from each creation and annihilation operator.  Then
\begin{equation}
  W=\varepsilon^{(r+s)/2}U.
  \label{eq:scaled-product-unscaled-product}
\end{equation}
If \(r\ne s\), both traces vanish by
\eqref{eq:trace-general-one-product} and
\eqref{eq:trace-two-orders-one-product}.  Suppose that \(r=s\).  Setting
\(\rho=1-\varepsilon\) in \eqref{eq:trace-general-one-product}, we find
\begin{align}
  \varepsilon
  \Tr\left[(1-\varepsilon)^{\mathsf N}W\right]
  &=
  \sum_{j=0}^{r}
  c_j(r-j)!(1-\varepsilon)^{r-j}\varepsilon^j
  \notag\\
  &\longrightarrow r!.
  \label{eq:scaled-product-limit}
\end{align}
As \(\varepsilon\to0\), only the term \(j=0\) survives, and \(c_0=1\).
Equation~\eqref{eq:trace-two-orders-one-product} gives the same limit when
\((1-\varepsilon)^{\mathsf N}\) is placed on the right of \(W\).  This
proves Lemma~\ref{lem:one-critical-trace}.

\subsection*{Proof of Lemma~\ref{lem:regular-trace-away-from-one}}

Equation~\eqref{eq:trace-general-one-product} expresses the first trace as
a finite sum of rational functions whose denominators are powers of
\(1-\rho(\varepsilon)\).  Since
\(1-\rho(0)=1-\lambda\ne0\), every term is regular at
\(\varepsilon=0\).  The second trace has the same diagonal matrix elements
by \eqref{eq:trace-two-orders-one-product}.

We shall also use the traces of powers of the number operator.  For
\(q\ge0\),
\begin{align}
 A_{q}:= \Tr\left[\rho^{\mathsf N}\mathsf N^q\right]
  =\sum_{n\ge0}n^q\rho^n
  =\left(\rho\frac{d}{d\rho}\right)^q\frac{1}{1-\rho},
  \label{eq:Aq-trace}
\end{align}
which is a rational function with possible pole only at \(\rho=1\).
The first few cases are
\begin{align}
  A_0&=\Tr(\rho^{\mathsf N})=\frac1{1-\rho},
  \label{eq:A0-trace}\\
  A_1&=\Tr(\rho^{\mathsf N}\mathsf N)=\frac{\rho}{(1-\rho)^2},
  \label{eq:A1-trace}\\
  A_2&=\Tr(\rho^{\mathsf N}\mathsf N^2)=\frac{\rho(1+\rho)}{(1-\rho)^3},
  \label{eq:A2-trace}\\
  A_3&=\Tr(\rho^{\mathsf N}\mathsf N^3)=\frac{\rho(1+4\rho+\rho^2)}{(1-\rho)^4}.
  \label{eq:A3-trace}
\end{align}

\clearpage
\section{Matrix elements of the finite diagonal factors}
\label{app:finite-diagonal-factors}

This appendix gives only the elementary calculation needed in the proof of
Proposition~\ref{prop:main-contraction}.  The notation
\(W_{I,\Ibar}(\varepsilon)\), \(D_{I,\Ibar}\),
\(W_{\Ibar,\Ibar}^{\mathrm{off}}(\varepsilon)\), and
\(W_{\Ibar,\Ibar}^{\mathrm{off}}(0)\) is the notation of
\eqref{eq:WA-B-eps}, \eqref{eq:DAB-derived}, \eqref{eq:WBB-off}, and
\eqref{eq:CIbar-def}. 

Let \(|\mathbf n\rangle_{I,\Ibar}\) and
\(|\mathbf n'\rangle_{I,\Ibar}\) be occupation-number vectors of
\(\F_{I,\Ibar}\), and write
\(\mathsf N_{a\beta}|\mathbf n\rangle_{I,\Ibar}
=n_{a\beta}|\mathbf n\rangle_{I,\Ibar}\).  Then
\begin{align}
&{}_{I,\Ibar}\!\langle\mathbf n'|
W_{I,\Ibar}(\varepsilon)
|\mathbf n\rangle_{I,\Ibar}
-
{}_{I,\Ibar}\!\langle\mathbf n'|
D_{I,\Ibar}
|\mathbf n\rangle_{I,\Ibar}
\notag\\
&=\delta_{\mathbf n',\mathbf n}
\left(
\prod_{a\in I}\prod_{\beta\in\Ibar}
\left(\frac{g_a}{g_\beta}\right)^{n_{a\beta}}
\right)
\left[
\prod_{\beta\in\Ibar}
(1-\varepsilon_\beta)^{\sum_{a\in I}n_{a\beta}}
-1
\right].
\label{eq:appendix-WAB-difference}
\end{align}
To see why the last square bracket contributes only terms with at least one
explicit factor \(\varepsilon_\beta\), let \(r=|\Ibar|\) and enumerate the
elements of \(\Ibar\) as \(\beta_1,\ldots,\beta_r\).  Put
\(n_{\beta_j}=\sum_{a\in I}n_{a\beta_j}\).  Then
\begin{align}
&\prod_{j=1}^{r}(1-\varepsilon_{\beta_j})^{n_{\beta_j}}-1
\notag\\
&=\sum_{j=1}^{r}
\left((1-\varepsilon_{\beta_j})^{n_{\beta_j}}-1\right)
\prod_{i<j}(1-\varepsilon_{\beta_i})^{n_{\beta_i}}
\notag\\
&=-\sum_{j=1}^{r}\varepsilon_{\beta_j}
\left(\sum_{q=0}^{n_{\beta_j}-1}(1-\varepsilon_{\beta_j})^q\right)
\prod_{i<j}(1-\varepsilon_{\beta_i})^{n_{\beta_i}} .
\label{eq:appendix-eps-finite-sum}
\end{align}
The inner sum in the last line is understood to be zero if
\(n_{\beta_j}=0\).  Hence each term in the final expression contains one of
\(\varepsilon_\beta\), \(\beta\in\Ibar\).

The same calculation applies to the Fock space generated by the oscillator generators
\(\mathsf a_{\gamma\beta}\), \(\mathsf a_{\gamma\beta}^{\dagger}\),
\(\gamma,\beta\in\Ibar\), \(\gamma\ne\beta\).  If
\(|\mathbf q\rangle_{\mathrm{int}}\) is an occupation-number vector for this
space, with occupation numbers \(q_{\gamma\beta}\), then
\begin{align}
&{}_{\mathrm{int}}\!\langle\mathbf q'|
W_{\Ibar,\Ibar}^{\mathrm{off}}(\varepsilon)
|\mathbf q\rangle_{\mathrm{int}}
-
{}_{\mathrm{int}}\!\langle\mathbf q'|
W_{\Ibar,\Ibar}^{\mathrm{off}}(0)
|\mathbf q\rangle_{\mathrm{int}}
\notag\\
&=\delta_{\mathbf q',\mathbf q}
\left(
\prod_{\beta\in\Ibar}
\prod_{\substack{\gamma\in\Ibar\\ \gamma\ne\beta}}
\left(\frac{g_\gamma}{g_\beta}\right)^{q_{\gamma\beta}}
\right)
\left[
\prod_{\beta\in\Ibar}
(1-\varepsilon_\beta)^{
\sum_{\substack{\gamma\in\Ibar\\ \gamma\ne\beta}}q_{\gamma\beta}}
-1
\right].
\label{eq:appendix-Wint-difference}
\end{align}
By the identity \eqref{eq:appendix-eps-finite-sum}, with
\(n_\beta=\sum_{\gamma\in\Ibar,\,\gamma\ne\beta}q_{\gamma\beta}\), the last
square bracket in \eqref{eq:appendix-Wint-difference} is also a finite sum of
terms each containing at least one factor \(\varepsilon_\beta\),
\(\beta\in\Ibar\).

\section{Checks in special cases}
\label{app:short-chain-checks}

This appendix checks Proposition~\ref{prop:main-contraction} in several
special cases.

\subsection{Zero sites}

For \(L=0\), the ordered product of \(L\)-operators is the identity.
Definition~\ref{def:finite-master} gives
\begin{equation}
  \left.\mathcal T(u;\mathbf y_{\Ibar})\right|_{L=0}
  =\prod_{\beta\in\Ibar}\prod_{c\in B}\frac1{1-g_cy_\beta}.
  \label{eq:zero-site-masterT-yIbar}
\end{equation}
Substitute
\(y_\beta=(1-\varepsilon_\beta)/g_\beta\).  Since
\(\mathsf M_\beta=0\) for \(L=0\), the highest-pole coefficient is
\begin{align}
  \operatorname{HP}_{\Ibar}
  \left.\mathcal T(u;\mathbf y_{\Ibar})\right|_{L=0}
  &=\lim_{\varepsilon_\beta\to0\,(\beta\in\Ibar)}
    \left(\prod_{\beta\in\Ibar}\varepsilon_\beta\right)
    \prod_{\beta\in\Ibar}\prod_{c\in B}
    \frac1{1-g_c(1-\varepsilon_\beta)/g_\beta}
  \notag\\
  &=\prod_{\beta\in\Ibar}
    \prod_{\substack{c\in B\\ c\ne\beta}}
    \frac1{1-g_c/g_\beta}.
  \label{eq:zero-site-HP-left}
\end{align}
This is the left-hand side of Proposition~\ref{prop:main-contraction}.
Independently, the definition of \(Q_I(u)\) gives
\begin{align}
  Q_I(u)\big|_{L=0}
  &=\Tr_{\F_{I,\Ibar}}D_{I,\Ibar}
  =\prod_{a\in I}\prod_{\beta\in\Ibar}\frac1{1-g_a/g_\beta} .
  \label{eq:zero-site-Q-right}
\end{align}
Multiplying \eqref{eq:zero-site-Q-right} by the meromorphically
continued scalar \(C_{\Ibar}\) in \eqref{eq:CIbar-def} gives exactly
\eqref{eq:zero-site-HP-left}.  This proves
the zero-site case.

\subsection{One site}

For $L=1$, put $v_1=u-\theta_1$, and let $c,d\in B$.  We first evaluate
all traces in the definition \eqref{eq:TB-master-detailed}; the residues are
then taken from the resulting rational functions of
$\mathbf y_{\Ibar}=(y_\gamma)_{\gamma\in\Ibar}$.
By \eqref{eq:local-L-component},
\begin{align}
  [\mathcal T(u;\mathbf y_{\Ibar})]_{c,d}
  &=v_1\delta_{cd}\Tr_{\F_{B,\Ibar}}W_{\Ibar}(\mathbf y_{\Ibar})
  +\sum_{\gamma\in\Ibar}
  \Tr_{\F_{B,\Ibar}}
  \left[\mathsf a_{d\gamma}^{\dagger}\mathsf a_{c\gamma}
  W_{\Ibar}(\mathbf y_{\Ibar})\right].
  \label{eq:one-site-master-before-trace}
\end{align}
For each $i\in B$ and $\delta\in\Ibar$, the oscillator
$\mathsf a_{i\delta},\mathsf a_{i\delta}^{\dagger}$ contributes
$\sum_{n\ge0}(g_i y_\delta)^n$.  Hence
\begin{align}
  \Tr_{\F_{B,\Ibar}}W_{\Ibar}(\mathbf y_{\Ibar})
  &=\prod_{\delta\in\Ibar}\prod_{i\in B}
  \sum_{n\ge0}(g_i y_\delta)^n
  \notag\\
  &=\prod_{\delta\in\Ibar}\prod_{i\in B}
  \frac{1}{1-g_i y_\delta}.
  \label{eq:one-site-trace-identity}
\end{align}
Fix $\gamma\in\Ibar$.  If $c\ne d$, the operator
$\mathsf a_{d\gamma}^{\dagger}\mathsf a_{c\gamma}$ changes two occupation
numbers, so all of its diagonal matrix elements vanish.  If $c=d$, the
trace factorizes as
\begin{align}
&\Tr_{\F_{B,\Ibar}}
  \left[\mathsf a_{c\gamma}^{\dagger}\mathsf a_{c\gamma}
  W_{\Ibar}(\mathbf y_{\Ibar})\right]
\notag\\
&=\left(\sum_{n\ge0}n(g_cy_\gamma)^n\right)
  \prod_{\substack{\delta\in\Ibar,\ i\in B\\(i,\delta)\ne(c,\gamma)}}
  \left(\sum_{n\ge0}(g_i y_\delta)^n\right)
\notag\\
&=\frac{g_cy_\gamma}{(1-g_cy_\gamma)^2}
  \prod_{\substack{\delta\in\Ibar,\ i\in B\\(i,\delta)\ne(c,\gamma)}}
  \frac{1}{1-g_i y_\delta}
  \qquad\text{[by \eqref{eq:A0-trace} and \eqref{eq:A1-trace}]}
\notag\\
&=\frac{g_cy_\gamma}{1-g_cy_\gamma}
  \prod_{\delta\in\Ibar}\prod_{i\in B}
  \frac{1}{1-g_i y_\delta}.
\end{align}
Consequently, for $c,d\in B$ and $\gamma\in\Ibar$,
\begin{align}
&\Tr_{\F_{B,\Ibar}}
  \left[\mathsf a_{d\gamma}^{\dagger}\mathsf a_{c\gamma}
  W_{\Ibar}(\mathbf y_{\Ibar})\right]
\notag\\
&\qquad=\delta_{cd}\,
  \frac{g_cy_\gamma}{1-g_cy_\gamma}
  \prod_{\delta\in\Ibar}\prod_{i\in B}
  \frac{1}{1-g_i y_\delta}.
  \label{eq:one-site-bilinear-trace}
\end{align}
Substitution into \eqref{eq:one-site-master-before-trace} gives
\begin{equation}
  [\mathcal T(u;\mathbf y_{\Ibar})]_{c,d}
  =\delta_{cd}
  \left(\prod_{\delta\in\Ibar}\prod_{i\in B}
  \frac{1}{1-g_i y_\delta}\right)
  \left(v_1+\sum_{\gamma\in\Ibar}
  \frac{g_cy_\gamma}{1-g_cy_\gamma}\right),
  \qquad c,d\in B.
  \label{eq:one-site-master-explicit}
\end{equation}

We now apply the multiple residue in \eqref{eq:main-residue-formula} to
\eqref{eq:one-site-master-explicit}.  Let \(a\in I\).  Then
\(\mathsf M_\gamma e_a=0\) for any \(\gamma\in\Ibar\).  For fixed
\(\gamma\in\Ibar\), the pole at \(y_\gamma=g_\gamma^{-1}\) comes from
the factor \((1-g_\gamma y_\gamma)^{-1}\) in
\eqref{eq:one-site-master-explicit}.  We have
\[
  \operatorname*{Res}_{y_\gamma=g_\gamma^{-1}}
  \frac{dy_\gamma}{1-g_\gamma y_\gamma}=-g_\gamma^{-1}.
\]
The remaining factors depending on \(y_\gamma\), including
\((1-g_i y_\gamma)^{-1}\) with \(i\ne\gamma\) and
\[
  \frac{g_a y_\gamma}{1-g_a y_\gamma},
\]
are regular at \(y_\gamma=g_\gamma^{-1}\).

 It follows that
\begin{align}
&{\operatorname*{Res}}_{y_\gamma=g_\gamma^{-1},\,\gamma\in\Ibar}
\left[
[\mathcal T(u;\mathbf y_{\Ibar})]_{a,a}
\prod_{\gamma\in\Ibar}dy_\gamma
\right]
\notag\\
&\quad=(-1)^{|\Ibar|}
\left(\prod_{\gamma\in\Ibar}g_\gamma^{-1}\right)
C_{\Ibar}\chi_{I,\Ibar}
\left(
 v_1+\sum_{\gamma\in\Ibar}
 \frac{g_a/g_\gamma}{1-g_a/g_\gamma}
\right).
\label{eq:one-site-residue-I}
\end{align}
Here the factors indexed by $a\in I$ give $\chi_{I,\Ibar}$ by
\eqref{eq:chi-I-Ibar}, while the factors indexed by distinct
$\beta,\gamma\in\Ibar$ give $C_{\Ibar}$ by \eqref{eq:CIbar-def}.

Let $\beta\in\Ibar$.  Since
$\mathsf M_\gamma e_\beta=\delta_{\gamma\beta}e_\beta$, the expression inside the residue in
\eqref{eq:main-residue-formula} contains $1-g_\beta y_\beta$.  This cancels
the factor $(1-g_\beta y_\beta)^{-1}$ in the first product in
\eqref{eq:one-site-master-explicit}.  The term $v_1$ and all terms in the
sum with $\gamma\ne\beta$ are then regular in $y_\beta$.  The term with
$\gamma=\beta$ gives
\[
  \operatorname*{Res}_{y_\beta=g_\beta^{-1}}
  \frac{g_\beta y_\beta}{1-g_\beta y_\beta}\,dy_\beta
  =-g_\beta^{-1}.
\]
Taking the residues in all variables yields
\begin{align}
&{\operatorname*{Res}}_{y_\gamma=g_\gamma^{-1},\,\gamma\in\Ibar}
\left[
(1-g_\beta y_\beta)
[\mathcal T(u;\mathbf y_{\Ibar})]_{\beta,\beta}
\prod_{\gamma\in\Ibar}dy_\gamma
\right]
\notag\\
&\quad=(-1)^{|\Ibar|}
\left(\prod_{\gamma\in\Ibar}g_\gamma^{-1}\right)
C_{\Ibar}\chi_{I,\Ibar}.
\label{eq:one-site-residue-Ibar}
\end{align}
All off-diagonal matrix elements in \eqref{eq:one-site-master-explicit}
vanish, and therefore so do their residues.

We next compute $Q_I(u)$, which appears on the right-hand side of
\eqref{eq:main-contraction-formula}, directly from its definition
\eqref{eq:Sch-Q-def}.  Let $a,c\in I$ and $\gamma\in\Ibar$.  If $a\ne c$,
then $\mathsf a_{c\gamma}^{\dagger}\mathsf a_{a\gamma}$ has no diagonal
matrix elements.  If $a=c$, then
\begin{align}
&\Tr_{\F_{I,\Ibar}}
  \left[\mathsf a_{a\gamma}^{\dagger}\mathsf a_{a\gamma}
  D_{I,\Ibar}\right]
\notag\\
&=\left(\sum_{n\ge0}n(g_a/g_\gamma)^n\right)
  \prod_{\substack{b\in I,\ \delta\in\Ibar\\(b,\delta)\ne(a,\gamma)}}
  \left(\sum_{n\ge0}(g_b/g_\delta)^n\right)
\notag\\
&=\frac{g_a/g_\gamma}{(1-g_a/g_\gamma)^2}
  \prod_{\substack{b\in I,\ \delta\in\Ibar\\(b,\delta)\ne(a,\gamma)}}
  \frac{1}{1-g_b/g_\delta}
  \qquad\text{[by \eqref{eq:A0-trace} and \eqref{eq:A1-trace}]}
\notag\\
&=\chi_{I,\Ibar}
  \frac{g_a/g_\gamma}{1-g_a/g_\gamma}.
\end{align}
Thus, for $a,c\in I$ and $\gamma\in\Ibar$,
\begin{equation}
  \Tr_{\F_{I,\Ibar}}
  \left[\mathsf a_{c\gamma}^{\dagger}\mathsf a_{a\gamma}
  D_{I,\Ibar}\right]
  =\delta_{ac}\chi_{I,\Ibar}
  \frac{g_a/g_\gamma}{1-g_a/g_\gamma}.
  \label{eq:one-site-Q-bilinear-evaluated}
\end{equation}
For $a\in I$ and $\beta\in\Ibar$, the operators
$\mathsf a_{a\beta}$ and $\mathsf a_{a\beta}^{\dagger}$ change the
occupation number, and hence
\[
  \Tr_{\F_{I,\Ibar}}(\mathsf a_{a\beta}D_{I,\Ibar})
  =\Tr_{\F_{I,\Ibar}}(\mathsf a_{a\beta}^{\dagger}D_{I,\Ibar})=0.
\]
Using the matrix elements
\eqref{eq:contracted-L-ac}--\eqref{eq:contracted-L-components}, we obtain,
for $c,d\in B$,
\begin{equation}
  [Q_I(u)]_{c,d}
  =
  \begin{cases}
  \displaystyle
  \chi_{I,\Ibar}
  \left(v_1+\sum_{\gamma\in\Ibar}
  \frac{g_c/g_\gamma}{1-g_c/g_\gamma}\right),
  & c=d\in I,\\[3mm]
  \chi_{I,\Ibar},& c=d\in\Ibar,\\
  0,& c\ne d.
  \end{cases}
  \label{eq:one-site-Q-explicit}
\end{equation}

For one site, each eigenvalue of $\mathsf M_\gamma$ is zero or one, so
$\Gamma(\mathsf M_\gamma+1)=1$ on each basis vector $e_c$, $c\in B$.
Multiplying \eqref{eq:one-site-residue-I} and
\eqref{eq:one-site-residue-Ibar} by the prefactor in
\eqref{eq:main-residue-formula} gives \eqref{eq:one-site-Q-explicit}.
The off-diagonal matrix elements vanish in both calculations.  This proves
the one-site case.

\subsection{A further check for arbitrary chain length}

Fix \(\beta\in\Ibar\), write
\(\mathbf b_L=(\beta,\ldots,\beta)\in B^L\), and put
\(v_\ell=u-\theta_\ell\).  For arbitrary \(L\ge1\), we first compute
\([\mathcal T(u;\mathbf y_{\Ibar})]_{\mathbf b_L,\mathbf b_L}\)
before taking the highest-pole limit.  Equation
\eqref{eq:local-L-component} gives
\[
  [L^{(\Ibar)}(v_\ell)]_{\beta\beta}
  =v_\ell+\sum_{\gamma\in\Ibar}\mathsf N_{\beta\gamma}.
\]
Consequently, Definition~\ref{def:finite-master} gives the exact matrix
element
\[
  [\mathcal T(u;\mathbf y_{\Ibar})]_{\mathbf b_L,\mathbf b_L}
  =\Tr_{\F_{B,\Ibar}}
  \left[
   \prod_{\ell=1}^{L}
   \left(v_\ell+\sum_{\gamma\in\Ibar}\mathsf N_{\beta\gamma}\right)
   W_{\Ibar}(\mathbf y_{\Ibar})
  \right].
\]
The factors in the product commute.  Set
\[
  x_\gamma=g_\beta y_\gamma,
  \qquad
  \mathcal D_\beta
  =\sum_{\gamma\in\Ibar}x_\gamma\frac{\partial}{\partial x_\gamma},
\]
and define
\[
  Z_\beta(\mathbf y_{\Ibar})
  =\prod_{\gamma\in\Ibar}\frac{1}{1-g_\beta y_\gamma},
  \qquad
  Z_{\ne\beta}(\mathbf y_{\Ibar})
  =\prod_{\gamma\in\Ibar}
   \prod_{\substack{i\in B\\ i\ne\beta}}
   \frac{1}{1-g_i y_\gamma}.
\]
Summing the diagonal matrix elements in the occupation-number basis
therefore gives the exact
rational-function identity
\begin{equation}
  [\mathcal T(u;\mathbf y_{\Ibar})]_{\mathbf b_L,\mathbf b_L}
  =Z_{\ne\beta}(\mathbf y_{\Ibar})
   \left[\prod_{\ell=1}^{L}(v_\ell+\mathcal D_\beta)\right]
   Z_\beta(\mathbf y_{\Ibar}).
\end{equation}
This identity is first evaluated in the common convergence domain
\(|g_i y_\gamma|<1\), and the resulting rational function is then continued
meromorphically.  Here the variables \(x_\gamma\) are regarded as independent
while the derivatives are taken, and are then specialized to
\(x_\gamma=g_\beta y_\gamma\).

We now take the limit by substituting
\(y_\gamma=(1-\varepsilon_\gamma)/g_\gamma\).  Since
\(\mathsf M_\gamma|\mathbf b_L\rangle
=L\delta_{\gamma\beta}|\mathbf b_L\rangle\),
Eq.~\eqref{eq:HP-equivalent} gives
\begin{align*}
 & [\operatorname{HP}_{\Ibar}\mathcal T(u;\mathbf y_{\Ibar})]
   _{\mathbf b_L,\mathbf b_L}
 \\
 &\qquad=
 \lim_{\varepsilon_\gamma\to0\;(\gamma\in\Ibar)}
 \varepsilon_\beta^{L+1}
 \prod_{\substack{\gamma\in\Ibar\\ \gamma\ne\beta}}
 \varepsilon_\gamma\,
 [\mathcal T(u;\mathbf y(\varepsilon))]
 _{\mathbf b_L,\mathbf b_L}.
\end{align*}
Moreover,
\[
 Z_\beta(\mathbf y(\varepsilon))
 =\frac{1}{\varepsilon_\beta}
  \prod_{\substack{\gamma\in\Ibar\\ \gamma\ne\beta}}
  \frac{1}{1-\dfrac{g_\beta}{g_\gamma}
  (1-\varepsilon_\gamma)}.
\]
To obtain the leading term explicitly, put
\[
 E_\gamma:=x_\gamma\frac{\partial}{\partial x_\gamma},
 \qquad
 \mathcal D_\beta=E_\beta+\mathcal D_\beta^{(\ne\beta)},
 \qquad
 \mathcal D_\beta^{(\ne\beta)}
 :=\sum_{\substack{\gamma\in\Ibar\\ \gamma\ne\beta}}E_\gamma,
\]
and isolate the factor that becomes singular as \(x_\beta\to1\):
\[
 Z_\beta(\mathbf y(\varepsilon))
 =\frac{1}{1-x_\beta}\,G_\beta(\boldsymbol\varepsilon),
 \qquad
 G_\beta(\boldsymbol\varepsilon)
 :=\prod_{\substack{\gamma\in\Ibar\\ \gamma\ne\beta}}
 \frac{1}{1-\dfrac{g_\beta}{g_\gamma}(1-\varepsilon_\gamma)}.
\]
Here \(x_\beta=1-\varepsilon_\beta\), whereas
\(G_\beta\) is independent of \(\varepsilon_\beta\).  Since
\[
 E_\beta
 =x_\beta\frac{\partial}{\partial x_\beta}
 =-(1-\varepsilon_\beta)
   \frac{\partial}{\partial\varepsilon_\beta},
\]
we have, for every integer \(k\ge0\),
\begin{equation}
 E_\beta^k\frac{1}{1-x_\beta}
 =E_\beta^k\varepsilon_\beta^{-1}
 =k!\,\varepsilon_\beta^{-k-1}
  +O(\varepsilon_\beta^{-k}).
\end{equation}
This follows by induction on \(k\).

Next expand the polynomial in \(\mathcal D_\beta\):
\[
 \prod_{\ell=1}^{L}(v_\ell+\mathcal D_\beta)
 =\sum_{k=0}^{L}e_{L-k}(v_1,\ldots,v_L)
  \mathcal D_\beta^k,
\]
where \(e_j\) is the elementary symmetric polynomial and \(e_0=1\).
Since \(E_\beta\) and \(\mathcal D_\beta^{(\ne\beta)}\) commute,
\begin{align*}
 \mathcal D_\beta^k
 \left(\frac{G_\beta}{1-x_\beta}\right)
 & =\sum_{j=0}^{k}\binom{k}{j}
 \left(E_\beta^j\frac{1}{1-x_\beta}\right)
 \left((\mathcal D_\beta^{(\ne\beta)})^{k-j}G_\beta\right).
\end{align*}
For fixed \(\varepsilon_\gamma\) with \(\gamma\ne\beta\), the second
factor is finite as \(\varepsilon_\beta\to0\).  By the preceding estimate, the summand
with index \(j\) has pole order at most \(j+1\).  Consequently, among all
terms in the preceding two expansions, a pole of order \(L+1\) can arise
only when
\[
 k=L,\qquad j=L.
\]
Its coefficient is \(L!G_\beta\), because the coefficient of
\(\mathcal D_\beta^L\) in the polynomial is \(e_0=1\).  Every term with
\(k<L\), every term containing at least one \(v_\ell\), and every term in
which at least one Euler derivative acts on \(G_\beta\), has pole order at
most \(L\).  Therefore
\[
 \left[\prod_{\ell=1}^{L}(v_\ell+\mathcal D_\beta)\right]
 Z_\beta(\mathbf y(\varepsilon))
 =\frac{L!}{\varepsilon_\beta^{L+1}}
  \prod_{\substack{\gamma\in\Ibar\\ \gamma\ne\beta}}
  \frac{1}{1-\dfrac{g_\beta}{g_\gamma}
  (1-\varepsilon_\gamma)}
  +O(\varepsilon_\beta^{-L}).
\]
In particular, after multiplication by \(\varepsilon_\beta^{L+1}\),
all lower-order poles vanish as \(\varepsilon_\beta\to0\).

It follows that
\begin{align*}
 & [\operatorname{HP}_{\Ibar}\mathcal T(u;\mathbf y_{\Ibar})]
   _{\mathbf b_L,\mathbf b_L}
 \\
 &\qquad=L!
 \prod_{\delta\in\Ibar}
 \prod_{\substack{i\in B\\ i\ne\delta}}
 \frac{1}{1-g_i/g_\delta}
 =L!\,C_{\Ibar}\chi_{I,\Ibar}.
\end{align*}

Finally, since
\([\mathscr L_{I,\Ibar}(v_\ell)]_{\beta\beta}=1\), the definition
\eqref{eq:Sch-Q-def} gives
\[
 [Q_I(u)]_{\mathbf b_L,\mathbf b_L}
 =\Tr_{\F_{I,\Ibar}}D_{I,\Ibar}
 =\chi_{I,\Ibar}.
\]
Therefore
\[
 [\operatorname{HP}_{\Ibar}\mathcal T(u;\mathbf y_{\Ibar})]
 _{\mathbf b_L,\mathbf b_L}
 =C_{\Ibar}\,L!\,[Q_I(u)]_{\mathbf b_L,\mathbf b_L}.
\]
This verifies Proposition~\ref{prop:main-contraction} for this diagonal
matrix element.

\section{The minimal example M=2}
\label{app:minimal-M2}

Take \(M=2\), \(I=\{2\}\), and \(\Ibar=\{1\}\).  Put
\begin{equation}
  \rho=\frac{g_2}{g_1},
  \qquad
  \mathsf N=\mathsf a_{21}^{\dagger}\mathsf a_{21}.
\end{equation}
With respect to the ordered basis \((e_1,e_2)\) of \(\C^2\),
\begin{equation}
  \mathscr L_{I,\Ibar}(v)=
  \begin{pmatrix}
    1 & \mathsf a_{21}^{\dagger}\\
    \mathsf a_{21} & v+\mathsf N
  \end{pmatrix},
  \qquad
  D_{I,\Ibar}=\rho^{\mathsf N}.
\end{equation}
All traces in this example are taken over the Fock representation
\(\F_{I,\Ibar}\).  The quantities \(A_q\) are defined by
\eqref{eq:Aq-trace}; the explicit formulas for \(A_0,A_1,A_2,A_3\) are
\eqref{eq:A0-trace}--\eqref{eq:A3-trace}.
For one site,
\begin{equation}
  Q_I(u)=
  \begin{pmatrix}
    A_0 & 0\\
    0 & v_1A_0+A_1
  \end{pmatrix}.
\end{equation}
For two sites, we write matrix elements of this operator as
\([Q_I(u)]_{\mathbf c,\mathbf d}\).  For example,
\begin{align}
  [Q_I(u)]_{11,11}&=A_0,
  &
  [Q_I(u)]_{12,12}&=v_2A_0+A_1,
  \\
  [Q_I(u)]_{21,12}
  &=\Tr_{\F_{I,\Ibar}}
    (\mathsf a_{21}^{\dagger}\mathsf a_{21}\rho^{\mathsf N})
  \notag\\
  &=\frac{\rho}{(1-\rho)^2},
  \\
  [Q_I(u)]_{12,21}
  &=\Tr_{\F_{I,\Ibar}}
    (\mathsf a_{21}\mathsf a_{21}^{\dagger}\rho^{\mathsf N})
  \notag\\
  &=\Tr_{\F_{I,\Ibar}}
    ((\mathsf N+1)\rho^{\mathsf N})
  \notag\\
  &=\frac1{(1-\rho)^2}.
\end{align}
Here we use \eqref{eq:A0-trace}, \eqref{eq:A1-trace}, and
\[
  \mathsf a_{21}\mathsf a_{21}^{\dagger}
  =\mathsf a_{21}^{\dagger}\mathsf a_{21}+1.
\]
For three sites, we obtain 
\begin{align}
  [Q_I(u)]_{222,222}
  &=\Tr_{\F_{I,\Ibar}}
  \bigl[(v_3+\mathsf N)(v_2+\mathsf N)
  (v_1+\mathsf N)\rho^{\mathsf N}\bigr]
  \notag\\
  &=v_1v_2v_3A_0
  +(v_1v_2+v_1v_3+v_2v_3)A_1
  \notag\\
  &\quad +(v_1+v_2+v_3)A_2+A_3.
\end{align}

\section[Comparison with the degenerate Yangian L-operator]{Comparison with the degenerate Yangian \texorpdfstring{$L$}{L}-operator of 
\cite{BFLMS2011}}
\label{app:BFLMS-local-comparison}

This appendix compares the matrix elements
\eqref{eq:contracted-L-ac}--\eqref{eq:contracted-L-components} with the
degenerate Yangian \(L\)-operator of Bazhanov, Frassek, Lukowski,
Meneghelli and Staudacher \cite{BFLMS2011}.  We write
\(B=I\sqcup\Ibar\), and use \(a,c,d,e\in I\), and 
\(\beta,\gamma\in\Ibar\).  Put \(p=|I|\).

Let \(\mathcal E=\End(\C^M)\).  In the convention of Ref.~\cite{BFLMS2011}, it has the form:
\begin{equation}
  L_I^{\mathrm{BFLMS}}(z)
  =\sum_{c,d\in B}E_{cd}\otimes
  \ell_{cd}^{\mathrm{BFLMS}}(z)
  \in \mathcal E\otimes\mathcal A_{\mathrm{BFLMS}},
  \label{eq:BFLMS-L-expanded}
\end{equation}
where $z \in \mathbb{C}$. 
The associative algebra \(\mathcal A_{\mathrm{BFLMS}}\) contains generators
\(\mathcal J_{ac}\), \(a,c\in I\), satisfying 
the relation of $\gl(V_{I})$: 
\begin{equation}
 [\mathcal J_{ac},\mathcal J_{de}]
 =\delta_{cd}\mathcal J_{ae}-\delta_{ea}\mathcal J_{dc},
 \label{eq:BFLMS-J-relation}
\end{equation}
and the generators
\(\mathbf b_{\beta a}\), \(\mathbf b_{a\beta}^{\dagger}\) of a bosonic oscillator algebra satisfying
\begin{equation}
  [\mathbf b_{\beta a},\mathbf b_{c\gamma}^{\dagger}]
  =\delta_{\beta\gamma}\delta_{ac},
  \qquad
  [\mathbf b_{\beta a},\mathbf b_{\gamma c}]
  =[\mathbf b_{a\beta}^{\dagger},\mathbf b_{c\gamma}^{\dagger}]=0.
  \label{eq:BFLMS-osc-rel}
\end{equation}
The generators in \eqref{eq:BFLMS-J-relation} commute with the 
generators of the oscillator algebra.  The non-zero matrix elements in \eqref{eq:BFLMS-L-expanded}
are
\begin{align}
  \ell_{ac}^{\mathrm{BFLMS}}(z)
  &=z\delta_{ac}+\mathcal J_{ca}
    -\sum_{\beta\in\Ibar}
    \left(\mathbf b_{a\beta}^{\dagger}\mathbf b_{\beta c}
    +\frac12\delta_{ac}\right),
  &&a,c\in I,
  \label{eq:BFLMS-L-II}
  \\
  \ell_{a\beta}^{\mathrm{BFLMS}}(z)
  &=\mathbf b_{a\beta}^{\dagger},
  &&a\in I,\ \beta\in\Ibar,
  \label{eq:BFLMS-L-I-Ibar}
  \\
  \ell_{\beta a}^{\mathrm{BFLMS}}(z)
  &=-\mathbf b_{\beta a},
  &&\beta\in\Ibar,\ a\in I,
  \label{eq:BFLMS-L-Ibar-I}
  \\
  \ell_{\beta\gamma}^{\mathrm{BFLMS}}(z)
  &=\delta_{\beta\gamma},
  &&\beta,\gamma\in\Ibar.
  \label{eq:BFLMS-L-IbarIbar}
\end{align}

Let \(\mathcal A_{I,\Ibar}^{\mathcal J}\) be the associative algebra generated
by elements \(\mathcal J_{ac}\) satisfying \eqref{eq:BFLMS-J-relation} and
by \(\mathsf a_{a\beta},\mathsf a_{a\beta}^{\dagger}\) satisfying the
relations of Section~\ref{sec:conventions}.  Define
\(\Phi:\mathcal A_{\mathrm{BFLMS}}\to
\mathcal A_{I,\Ibar}^{\mathcal J}\) by
\begin{equation}
  \Phi(\mathcal J_{ac})=\mathcal J_{ac},
  \qquad
  \Phi(\mathbf b_{\beta a})=\mathsf a_{a\beta}^{\dagger},
  \qquad
  \Phi(\mathbf b_{a\beta}^{\dagger})=-\mathsf a_{a\beta}.
  \label{eq:BFLMS-to-ours-algebra-map}
\end{equation}
The commutator of the oscillator algebra is preserved because
\begin{equation}
  [\Phi(\mathbf b_{\beta a}),\Phi(\mathbf b_{c\gamma}^{\dagger})]
  =[\mathsf a_{a\beta}^{\dagger},-\mathsf a_{c\gamma}]
  =\delta_{ac}\delta_{\beta\gamma}.
  \label{eq:BFLMS-to-ours-commutator-check}
\end{equation}
Thus \(\Phi\) is a homomorphism of associative algebras.

Define
\begin{equation}
  d_c=1\quad(c\in I),
  \qquad
  d_c=-1\quad(c\in\Ibar),
  \label{eq:D-factors-BFLMS-comparison}
\end{equation}
and
\begin{equation}
  \Sigma_I=\sum_{a\in I}E_{aa}-\sum_{\beta\in\Ibar}E_{\beta\beta},
  \qquad
  \Sigma_I E_{cd}\Sigma_I^{-1}=d_cd_dE_{cd}.
  \label{eq:D-operator-BFLMS-comparison}
\end{equation}
Set
\begin{equation}
  \mathscr L_{I,\Ibar}^{\mathcal J}(v)
  :=\sum_{c,d\in B}d_cd_d\,
  \Phi\!\left(
  \ell_{cd}^{\mathrm{BFLMS}}
  \left(v-\frac{|\Ibar|}{2}\right)
  \right)\otimes E_{cd}.
  \label{eq:direct-BFLMS-ours-relation}
\end{equation}
For \(a,c\in I\), the relation
\(\mathsf a_{a\beta}\mathsf a_{c\beta}^{\dagger}
=\mathsf a_{c\beta}^{\dagger}\mathsf a_{a\beta}+\delta_{ac}\) gives
\begin{align}
  &d_ad_c\,
  \Phi\!\left(
  \ell_{ac}^{\mathrm{BFLMS}}
  \left(v-\frac{|\Ibar|}{2}\right)
  \right)
  \notag\\
  &\quad=
  \left(v-\frac{|\Ibar|}{2}\right)\delta_{ac}
  +\mathcal J_{ca}
  -\sum_{\beta\in\Ibar}
  \left(-\mathsf a_{a\beta}\mathsf a_{c\beta}^{\dagger}
  +\frac12\delta_{ac}\right)
  \notag\\
  &\quad=v\delta_{ac}+\mathcal J_{ca}
  +\sum_{\beta\in\Ibar}
  \mathsf a_{c\beta}^{\dagger}\mathsf a_{a\beta}.
  \label{eq:BFLMS-check-II-block}
\end{align}
The remaining matrix elements are
\begin{align}
  d_ad_\beta\Phi(\ell_{a\beta}^{\mathrm{BFLMS}}(z))
  &=\mathsf a_{a\beta},
  \label{eq:BFLMS-check-I-Ibar-block}\\
  d_\beta d_a\Phi(\ell_{\beta a}^{\mathrm{BFLMS}}(z))
  &=\mathsf a_{a\beta}^{\dagger},
  \label{eq:BFLMS-check-Ibar-I-block}\\
  d_\beta d_\gamma\Phi(\ell_{\beta\gamma}^{\mathrm{BFLMS}}(z))
  &=\delta_{\beta\gamma}.
  \label{eq:BFLMS-check-Ibar-Ibar-block}
\end{align}
When \(\mathcal J_{ac}\) acts by zero, these formulas reproduce
\eqref{eq:contracted-L-ac}--\eqref{eq:contracted-L-components}.

We next transform the RLL relation.  
Let \(\mathcal X\) and \(\mathcal Y\) be 
associative algebras, and take 
\(x,x' \in \mathcal X\) and \(y,y' \in \mathcal Y\). 
 Define a map 
\begin{equation}
  f_{\mathcal X,\mathcal Y}:\mathcal X\otimes\mathcal Y
  \longrightarrow\mathcal Y\otimes\mathcal X,
  \qquad
  f_{\mathcal X,\mathcal Y}(x\otimes y)=y\otimes x.
  \label{eq:two-factor-flip}
\end{equation}
This is an algebra isomorphism, since
\begin{equation}
 f_{\mathcal X,\mathcal Y}
 ((x\otimes y)(x'\otimes y'))
 =yy'\otimes xx'
 =f_{\mathcal X,\mathcal Y}(x\otimes y)
  f_{\mathcal X,\mathcal Y}(x'\otimes y').
 \label{eq:two-factor-flip-product}
\end{equation}
The RLL relation of \cite{BFLMS2011} is an equality in
\(\mathcal E\otimes\mathcal E\otimes\mathcal A_{\mathrm{BFLMS}}\):
\begin{equation}
 R_{12}(u-v)L_{I,13}^{\mathrm{BFLMS}}(u)
 L_{I,23}^{\mathrm{BFLMS}}(v)
 =L_{I,23}^{\mathrm{BFLMS}}(v)
 L_{I,13}^{\mathrm{BFLMS}}(u)R_{12}(u-v).
 \label{eq:BFLMS-RLL-tensor-form}
\end{equation}
Apply successively
\begin{equation}
 \tau_1=\operatorname{id}_{\mathcal E}\otimes
 f_{\mathcal E,\mathcal A_{\mathrm{BFLMS}}},
 \qquad
 \tau_2=f_{\mathcal E,\mathcal A_{\mathrm{BFLMS}}}
 \otimes\operatorname{id}_{\mathcal E}.
 \label{eq:two-successive-flips}
\end{equation}
Define
\begin{equation}
 \widetilde L_I^{\mathrm{BFLMS}}(z)
 :=f_{\mathcal E,\mathcal A_{\mathrm{BFLMS}}}
 (L_I^{\mathrm{BFLMS}}(z))
 =\sum_{c,d\in B}\ell_{cd}^{\mathrm{BFLMS}}(z)\otimes E_{cd}.
 \label{eq:BFLMS-L-flipped}
\end{equation}
The images of the three factors in \eqref{eq:BFLMS-RLL-tensor-form} are
\begin{align}
 (\tau_2\circ\tau_1)(R_{12}(u-v))&=R_{23}(u-v),
 \notag\\
 (\tau_2\circ\tau_1)(L_{I,13}^{\mathrm{BFLMS}}(u))
 &=\widetilde L_{I,12}^{\mathrm{BFLMS}}(u),
 \notag\\
 (\tau_2\circ\tau_1)(L_{I,23}^{\mathrm{BFLMS}}(v))
 &=\widetilde L_{I,13}^{\mathrm{BFLMS}}(v).
 \label{eq:images-under-two-flips}
\end{align}
Because \(\tau_1\) and \(\tau_2\) preserve multiplication,
\eqref{eq:BFLMS-RLL-tensor-form} becomes
\begin{equation}
 R_{23}(u-v)\widetilde L_{I,12}^{\mathrm{BFLMS}}(u)
 \widetilde L_{I,13}^{\mathrm{BFLMS}}(v)
 =\widetilde L_{I,13}^{\mathrm{BFLMS}}(v)
 \widetilde L_{I,12}^{\mathrm{BFLMS}}(u)R_{23}(u-v).
 \label{eq:BFLMS-RLL-after-flips}
\end{equation}
Apply \(\Phi\) to the first tensor factor, conjugate 
the second and the third tensor factors by \(\Sigma_I\), 
and shift both spectral parameters $u$ and $v$ 
by \(-|\Ibar|/2\).  These operations preserve
\eqref{eq:BFLMS-RLL-after-flips}.  For the conjugation, one uses
\begin{equation}
 (\Sigma_I\otimes\Sigma_I)R(w)
 (\Sigma_I^{-1}\otimes\Sigma_I^{-1})=R(w),
 \qquad R(w)=w\,\Id_{\mathbb{C}^{M} \otimes \mathbb{C}^{M}}+P,
 \label{eq:D-preserves-R}
\end{equation}
where $w \in \mathbb{C}$.
Consequently \(\mathscr L_{I,\Ibar}^{\mathcal J}\) satisfies
\begin{equation}
 R_{23}(u-v)\mathscr L_{I,\Ibar,12}^{\mathcal J}(u)
 \mathscr L_{I,\Ibar,13}^{\mathcal J}(v)
 =\mathscr L_{I,\Ibar,13}^{\mathcal J}(v)
 \mathscr L_{I,\Ibar,12}^{\mathcal J}(u)R_{23}(u-v).
 \label{eq:BFLMS-RLL-ours-first-form}
\end{equation}
Let \(w=v-u\).  Since
\begin{equation}
 R(w)R(-w)=(1-w^2)\Id_{\mathbb{C}^{M} \otimes \mathbb{C}^{M}},
 \label{eq:rational-R-unitarity}
\end{equation}
we may multiply \eqref{eq:BFLMS-RLL-ours-first-form} on the left and on the
right by \(R_{23}(w)\).  For generic \(w\), division by \(1-w^2\) gives
\begin{equation}
 \mathscr L_{I,\Ibar,12}^{\mathcal J}(u)
 \mathscr L_{I,\Ibar,13}^{\mathcal J}(v)R_{23}(v-u)
 =R_{23}(v-u)\mathscr L_{I,\Ibar,13}^{\mathcal J}(v)
 \mathscr L_{I,\Ibar,12}^{\mathcal J}(u).
 \label{eq:BFLMS-RLL-ours-second-form}
\end{equation}
Both sides are polynomial in \(w\), so the equality holds for all \(w\).
Finally, setting \(\mathcal J_{ac}=0\) gives
\eqref{eq:RLL-contracted-HP} for \(\mathscr L_{I,\Ibar}(v)\).

\section{Additional flavour labels and polynomial representations of \texorpdfstring{$GL(V_I)$}{GL(VI)}}
\label{app:nontrivial-glI}

Let \(S_{\mathrm{add}}\) be a finite set disjoint from \(\Ibar\), and set
\[
  R=\Ibar\sqcup S_{\mathrm{add}}
\]
throughout this appendix. For \(\mu\in S_{\mathrm{add}}\), write
\(y_\mu=\eta_\mu\) and keep \(\eta_\mu\) fixed as
\(y_\beta\to g_\beta^{-1}\), \(\beta\in\Ibar\). Put
\[
  \boldsymbol\eta=(\eta_\mu)_{\mu\in S_{\mathrm{add}}},
\]
and assume
\begin{equation}
  g_c\eta_\mu\ne1,
  \qquad c\in B,\quad \mu\in S_{\mathrm{add}}.
  \label{eq:additional-variables-regular}
\end{equation}
The oscillators with colour labels in \(I\) and flavour labels in
\(S_{\mathrm{add}}\) give a polynomial representation of \(GL(V_I)\).
After the contraction described in Section~\ref{sec:local-contraction},
this representation appears only in the matrix elements of the
contracted \(L\)-operator whose row and column labels both belong to
\(I\).
The Fock representation \(\F_{I,S_{\mathrm{add}}}\) and its number
operators are those defined in \eqref{eq:Fock-notation-fixed} and
\eqref{eq:number-operator-fixed}.  Put
\begin{equation}
  \mathcal J_{ac}^{\mathrm{add}}
  =\sum_{\mu\in S_{\mathrm{add}}}
  \mathsf a_{a\mu}^{\dagger}\mathsf a_{c\mu},
  \qquad a,c\in I.
  \label{eq:Jadd-def}
\end{equation}
By \eqref{eq:eij-commutator}, these operators satisfy 
the relations of $\gl(V_{I})$:
\begin{equation}
  [\mathcal J_{ac}^{\mathrm{add}},\mathcal J_{de}^{\mathrm{add}}]
  =\delta_{cd}\mathcal J_{ae}^{\mathrm{add}}
   -\delta_{ea}\mathcal J_{dc}^{\mathrm{add}},
  \qquad a,c,d,e\in I.
  \label{eq:Jadd-gl-relation}
\end{equation}

Start from the \(L\)-operator \eqref{eq:Schwinger-L} with flavour set
\(\Ibar\sqcup S_{\mathrm{add}}\), and apply the scaling
\eqref{eq:S-epsilon-scaling}.  Fix \(\mu\in S_{\mathrm{add}}\).  The term
with flavour label \(\mu\) in the \((c,d)\)-matrix element becomes
\begin{equation}
  s_cs_d\,\mathsf a_{d\mu}^{\dagger}\mathsf a_{c\mu}.
  \label{eq:additional-flavour-scaled-term}
\end{equation}
If \(c,d\in I\), then \eqref{eq:scaling-sc} gives \(s_cs_d=1\), and
summing \eqref{eq:additional-flavour-scaled-term} over
\(\mu\in S_{\mathrm{add}}\) gives \(\mathcal J_{dc}^{\mathrm{add}}\).
If \(c\in\Ibar\), the same equation supplies a factor
\(\sqrt{\varepsilon_c}\); if \(d\in\Ibar\), it supplies a factor
\(\sqrt{\varepsilon_d}\).  Thus at least one such factor is present unless
both row and column labels belong to \(I\).  For \(R=\Ibar\sqcup S_{\mathrm{add}}\),
Eqs.~\eqref{eq:full-twist-R} and \eqref{eq:Wcrit-simple} give
\[
  \widehat g\,\widehat h(\mathbf y_R)
  =W_{\Ibar}(\mathbf y_{\Ibar})
   \prod_{\mu\in S_{\mathrm{add}}}\prod_{i\in B}
   (g_i\eta_\mu)^{\mathsf N_{i\mu}}.
\]
Since \(g_i\eta_\mu\ne1\),
Lemma~\ref{lem:regular-trace-away-from-one} shows that the traces over
\(\F_{\{i\},\{\mu\}}\) are regular as
\(\varepsilon_\beta\to0\) for \(\beta\in\Ibar\). Hence these traces do not
cancel the factor \(\sqrt{\varepsilon_c}\) or
\(\sqrt{\varepsilon_d}\), and only the matrix elements with \(c,d\in I\)
can contribute to the coefficient of the highest-order pole.

The additional flavour labels therefore modify only the
\(I\times I\) block. The resulting \(L\)-operator is
\begin{equation}
  \mathscr L_{I,\Ibar}^{\mathrm{add}}(v)
  \in\End(\F_{I,\Ibar}\otimes\F_{I,S_{\mathrm{add}}})
     \otimes\End(\C^M),
  \label{eq:L-with-additional-flavours-space}
\end{equation}
with matrix elements
\begin{align}
  [\mathscr L_{I,\Ibar}^{\mathrm{add}}(v)]_{ac}
  &=[\mathscr L_{I,\Ibar}(v)]_{ac}
    +\mathcal J_{ca}^{\mathrm{add}},
  &&a,c\in I,
  \label{eq:L-with-additional-flavours}
  \\
  [\mathscr L_{I,\Ibar}^{\mathrm{add}}(v)]_{cd}
  &=[\mathscr L_{I,\Ibar}(v)]_{cd},
  &&c\in\Ibar\ \text{or}\ d\in\Ibar.
  \label{eq:L-with-additional-flavours-rest}
\end{align}
The matrix elements of $\mathscr L_{I,\Ibar}(v)$ on the right-hand sides are given by
\eqref{eq:contracted-L-ac}--\eqref{eq:contracted-L-components}.

Define
\begin{equation}
  D_{I,S_{\mathrm{add}}}(\boldsymbol\eta)
  =\prod_{a\in I}\prod_{\mu\in S_{\mathrm{add}}}
  (g_a\eta_\mu)^{\mathsf N_{a\mu}}
  \label{eq:D-additional}
\end{equation}
and
\begin{align}
  \mathcal X_I^{(S_{\mathrm{add}})}(u;\boldsymbol\eta)
  :=\Tr_{\F_{I,\Ibar}\otimes\F_{I,S_{\mathrm{add}}}}
  \Bigl[
  &\mathscr L_{I,\Ibar,L}^{\mathrm{add}}(u-\theta_L)\cdots
   \mathscr L_{I,\Ibar,1}^{\mathrm{add}}(u-\theta_1)
  \notag\\[-1mm]
  &\times D_{I,\Ibar}D_{I,S_{\mathrm{add}}}(\boldsymbol\eta)
  \Bigr].
  \label{eq:X-generating-eta}
\end{align}

Applying Theorem~\ref{thm:Howe-decomposition-paper} with
\(V=V_I\) and \(W=U_{S_{\mathrm{add}}}\) gives
\begin{equation}
  \F_{I,S_{\mathrm{add}}}
  \simeq
  \bigoplus_{\Lambda:\,\ell(\Lambda)\le
  \min(|I|,|S_{\mathrm{add}}|)}
  V_\Lambda^{GL(V_I)}\otimes
  V_\Lambda^{GL(U_{S_{\mathrm{add}}})}.
  \label{eq:Howe-additional-appendix}
\end{equation}
The notation for partitions and for the group and Lie-algebra actions
\(\pi_\Lambda^X\) is fixed in Section~\ref{subsec:Schur-functions}.  On the
summand indexed by \(\Lambda\), \(\mathcal J_{ac}^{\mathrm{add}}\) acts as
\(\pi_\Lambda^{V_I}(E_{ac})\) on \(V_\Lambda^{GL(V_I)}\).

Let \(g_I\in GL(V_I)\) and
\(h_{\mathrm{add}}\in GL(U_{S_{\mathrm{add}}})\) be the diagonal operators
specified by
\begin{equation}
  g_Ie_a=g_ae_a,
  \qquad
  h_{\mathrm{add}}f_\mu=\eta_\mu f_\mu.
  \label{eq:gI-hadd-def}
\end{equation}
By \eqref{eq:twist-restriction-Howe-summand}, the restriction of
\eqref{eq:D-additional} to the summand indexed by \(\Lambda\) is
\begin{equation}
  \pi_\Lambda^{V_I}(g_I)\otimes
  \pi_\Lambda^{U_{S_{\mathrm{add}}}}(h_{\mathrm{add}}).
  \label{eq:D-additional-on-Howe-summand}
\end{equation}
Equation \eqref{eq:GLr-character-is-Schur} gives
\begin{equation}
  \Tr_{V_\Lambda^{GL(U_{S_{\mathrm{add}}})}}
  \pi_\Lambda^{U_{S_{\mathrm{add}}}}(h_{\mathrm{add}})
  =s_\Lambda(\boldsymbol\eta).
  \label{eq:Schur-character-additional}
\end{equation}

In our conventions, the operator \(X_I(u;\Lambda)\) defined below
corresponds to the transfer matrix denoted \(T_I^{\{\Lambda\}}(u)\) in
Ref.~\cite{KLT2012}. 
For the trace considered here, define
\begin{align}
  X_I(u;\Lambda)
  :=\Tr_{\F_{I,\Ibar}\otimes V_\Lambda^{GL(V_I)}}
  \Bigl[
  &\mathscr L_{I,\Ibar,L}^{\Lambda}(u-\theta_L)\cdots
   \mathscr L_{I,\Ibar,1}^{\Lambda}(u-\theta_1)
  \notag\\[-1mm]
  &\times D_{I,\Ibar}\pi_\Lambda^{V_I}(g_I)
  \Bigr],
  \label{eq:X-Lambda-Sch}
\end{align}
where
\begin{align}
  [\mathscr L_{I,\Ibar}^{\Lambda}(v)]_{ac}
  &=[\mathscr L_{I,\Ibar}(v)]_{ac}
    +\pi_\Lambda^{V_I}(E_{ca}),
  &&a,c\in I,
  \label{eq:L-Lambda-components}
  \\
  [\mathscr L_{I,\Ibar}^{\Lambda}(v)]_{cd}
  &=[\mathscr L_{I,\Ibar}(v)]_{cd},
  &&c\in\Ibar\ \text{or}\ d\in\Ibar.
  \label{eq:L-Lambda-components-rest}
\end{align}
Equations \eqref{eq:Howe-additional-appendix}--
\eqref{eq:Schur-character-additional} imply
\begin{equation}
  \mathcal X_I^{(S_{\mathrm{add}})}(u;\boldsymbol\eta)
  =\sum_\Lambda s_\Lambda(\boldsymbol\eta)X_I(u;\Lambda).
  \label{eq:X-Schur-function-expansion}
\end{equation}
By the comparison in \ref{app:BFLMS-local-comparison},
\eqref{eq:X-Lambda-Sch} agrees, up to the spectral-parameter shift and the
scalar normalization specified there, with the operator denoted
\(X_I(z,\Lambda)\) in Ref.~\cite{BFLMS2011}.

Equation \eqref{eq:X-generating-eta} traces over
\(\F_{I,S_{\mathrm{add}}}\).  If one starts instead from the full Fock
representation with colour labels in \(B\), then
\begin{equation}
  \F_{B,S_{\mathrm{add}}}
  =\F_{I,S_{\mathrm{add}}}\otimes
   \F_{\Ibar,S_{\mathrm{add}}}.
  \label{eq:additional-Fock-factorization}
\end{equation}
The oscillator generators in \(\F_{\Ibar,S_{\mathrm{add}}}\) do not appear in
\eqref{eq:L-with-additional-flavours}--
\eqref{eq:L-with-additional-flavours-rest}.  Hence the trace over
\(\F_{\Ibar,S_{\mathrm{add}}}\) is
\begin{align}
  \Tr_{\F_{\Ibar,S_{\mathrm{add}}}}
  \prod_{\beta\in\Ibar}\prod_{\mu\in S_{\mathrm{add}}}
  (g_\beta\eta_\mu)^{\mathsf N_{\beta\mu}}
  &=\prod_{\beta\in\Ibar}\prod_{\mu\in S_{\mathrm{add}}}
  \frac{1}{1-g_\beta\eta_\mu}.
  \label{eq:additional-Ibar-scalar}
\end{align}
Thus, if the trace defining
\(\mathcal X_I^{(S_{\mathrm{add}})}(u;\boldsymbol\eta)\) is extended
from \(\F_{I,S_{\mathrm{add}}}\) to the full representation
\(\F_{B,S_{\mathrm{add}}}\), the trace over
\(\F_{\Ibar,S_{\mathrm{add}}}\) gives the scalar factor in
Eq.~\eqref{eq:additional-Ibar-scalar}.

\subsection*{The case \(\Lambda=0\)}
Let \(V_0=\C\) be the one-dimensional representation of
\(\gl(V_I)\) with highest weight zero.  Then
\(\pi_0^{V_I}(E_{ac})=0\), \(\pi_0^{V_I}(g_I)=1\), and
\begin{equation}
  X_I(u;0)=Q_I(u).
  \label{eq:X-zero-equals-Q}
\end{equation}
In Ref.~\cite{BFLMS2011}, the operator \(X_I^+(z,0)\) is defined instead
with the infinite-dimensional Verma module \(M_0\) of highest weight zero:
\begin{equation}
  X_I(z,0):\quad \F_{I,\Ibar}\otimes V_0,
  \qquad
  X_I^+(z,0):\quad \F_{I,\Ibar}\otimes M_0.
  \label{eq:finite-versus-Verma-spaces}
\end{equation}
The BGG resolution of \(V_0\) contains \(M_0\); hence replacing the trace over \(V_0\) by the trace over \(M_0\) is not merely a multiplication by an overall scalar factor.

We do not state a residue formula for
\eqref{eq:X-generating-eta}.  Extending
Proposition~\ref{prop:main-contraction} to this trace would require a
separate calculation with the additional oscillators.

\end{document}